\newcommand{\bm}[1]{\boldsymbol{#1}}
\newcommand{\E}{\mathbb{E}}
\renewcommand{\d}{\mathrm{d}}
\renewcommand{\P}{\mathbb{P}}
\newcommand{\Y}{\bm{Y}}
\newcommand{\bG}{\bm{\Gamma}}
\newcommand{\mE}{\mathcal{E}}
\newcommand{\mA}{\mathcal A}
\newcommand{\mX}{\mathcal{X}}
\newcommand{\mT}{\mathcal{T}}
\newcommand{\wt}[1]{\widetilde{#1}}
\def\mS{\mathcal{S}}
\def\mC{\mathcal{C}}
\newcommand{\abs}[1]{\left\vert#1\right\vert}
\newcommand{\1}{\mathbb{I}}
\newcommand{\iid}{\stackrel{iid}{\sim}}
\newcommand{\wh}[1]{\smash{\widehat{#1}}}
\def\C {\,|\:}
\def\C {\,|\:}
\def\G{\bm{\Gamma}}
\def\mF{\mathcal{F}}
\def\mI{\mathcal{I}}
\def\B{\bm{B}}
\def\b{\bm{\beta}}
\def\Y{\bm{Y}}
\def\x{\bm{x}}
\def\bg{\bm{\gamma}}
\def\mV{\mathcal{V}}
\def\b{\bm{\beta}}
\renewcommand{\d}{\mathrm{d}\,}
\newcommand{\e}{\mathrm{e}}
\newcommand{\N}{\mathbb{N}}
\newcommand{\R}{\mathbb{R}}
\newcommand{\Ha}{\mathcal{H}^\alpha}
\newtheorem{definition}{Definition}[section]
\newtheorem{lemma}{Lemma}[section]
\newtheorem{theorem}{Theorem}[section]
\newtheorem{example}{Example}[section]
\newtheorem{remark}{Remark}[section]
  \title[Variable Selection with ABC Bayesian Forests]{Variable Selection with ABC Bayesian Forests}
\author[Liu {\it et al.}]{Yi Liu} 
\address{Department of  Statistics, University of Chicago } 
\author[Liu {\it et al.}]{Veronika Ro\v{c}kov\'{a}}
\address{Booth School of Business, University of Chicago}
\author[Liu {\it et al.}]{Yuexi Wang}
\address{Booth School of Business, University of Chicago}
\begin{document}


  
\maketitle

\bigskip
\begin{abstract}
Few problems in statistics are as perplexing as variable selection in the presence of very many redundant covariates. The variable selection problem is most familiar in parametric environments such as the linear model or additive variants thereof. In this work, we abandon the linear model framework, which can be quite detrimental when the covariates impact the outcome in a non-linear way, and turn to tree-based methods for variable selection. Such variable screening is traditionally done by pruning down large trees or by ranking variables based on some importance measure. Despite heavily used in practice, these ad-hoc selection rules are not yet well understood from a theoretical point of view.  In this work, we devise a Bayesian tree-based probabilistic method  and show that it  is {consistent} for variable selection when the regression surface is a smooth mix of $p>n$ covariates. 
These results are the first model selection consistency results for Bayesian forest priors. 
Probabilistic assessment of variable importance is made feasible by a spike-and-slab wrapper around  sum-of-trees priors. 
Sampling from posterior distributions over trees is inherently very difficult.
As an alternative to MCMC, we propose  {ABC Bayesian Forests}, a new ABC sampling method based on data-splitting that achieves higher ABC acceptance rate.  We show that the method is robust and successful at finding variables with high marginal inclusion probabilities. Our ABC  algorithm provides a new avenue towards approximating the median probability model in non-parametric  setups where the marginal likelihood is intractable.
\end{abstract}

\keywords{Approximate Bayesian Computation, BART,  Consistency, Spike-and-Slab, Variable Selection}



\section{Perspectives on Non-parametric Variable Selection}
In its simplest form, variable selection is most often carried out in the context of linear regression \citep{tibshirani1996regression, george1993variable, fan2001variable}.
However, confinement to linear parametric forms can be quite detrimental for variable importance screening, when the covariates impact the outcome in a non-linear way \citep{turlach2004least}.
Rather than first selecting a parametric model to filter out variables,  another strategy is to first select variables  and then build a model. Adopting this reversed point of view, we focus on developing methodology for the so called ``model-free" variable selection \citep{chipman2001practical}.

There is a long strand of literature on the fundamental problem of non-parametric variable selection.  One line of research focuses on capturing non-linearities and interactions with basis expansions and performing grouped shrinkage/selection on sets of coefficients \citep{scheipl2011spikeslabgam,ravikumar2009sparse,lin2006component,radchenko2010variable}. 
\citet{lafferty2008rodeo} propose the RODEO method for sparse non-parametric function estimation through regularization of the derivative expectation operator and provide a consistency result for the selection of  the optimal bandwidth. \cite{candes2018panning}  propose a model-free knock-off procedure, controlling FDR in settings when the conditional distribution of the response is arbitrary.   In the Bayesian literature, \cite{savitsky2011variable} deploy spike-and-slab priors on covariance parameters of Gaussian processes to erase variables. 
In this work, we focus on other non-parametric  regression techniques, namely  trees/forests which have been ubiquitous throughout machine learning  and statistics \citep{breiman2001random, chipman2010bart}.  The question we wish to address is whether one can leverage the flexibility of regression trees for effective (consistent) variable importance screening.   

While trees are routinely deployed for data exploration,  prediction and causal inference \citep{hill,taddy,gramacy}, they have also been used for dimension reduction and variable selection.
 This is traditionally done by pruning out variables or by ranking them based on some importance measure. The notion of variable importance was originally proposed for CART using overall improvement in node impurity involving surrogate predictors \citep{breiman1984classification}. In random forests, for example, the importance measure consists of a difference between prediction errors before and after noising  the covariate through a permutation in the out-of-bag sample. 
 However, this continuous  variable importance measure is on an arbitrary scale, rendering variable selection ultimately  ad-hoc.  Principled selection of the importance threshold (with theoretical guarantees such as FDR control or model selection consistency)  is still an open problem. Simplified variants of importance measures have begun to be understood theoretically for variable selection only very recently \citep{ishwaran2007variable,kazemitabar2017variable}. 
 
Bayesian trees and forests select variables based on probabilistic considerations. The BART procedure \citep{chipman2010bart} can be adapted for variable selection by forcing the number of available splits (trees) to be small, thereby introducing competition between predictors.
 BART then keeps track of predictor inclusion frequencies and outputs a probabilistic importance measure: an average proportion of all splitting rules inside a tree ensemble that split on a given variable, where the average is taken over the MCMC samples. This measure cannot be directly interpreted as the posterior variable inclusion probability in anisotropic regression surfaces, where wigglier directions require more splits.
 \citet{bleich2014variable} consider a permutation framework for obtaining the null distribution of the importance weights. 
\citet{zhu2015reinforcement} implement reinforcement learning for selection of  splitting variables during  tree construction to encourage splits on fewer more important variables. 
All these developments point to the fact that regularization is key to enhancing  performance of trees/forests in high dimensions.
Our approach differs in that we impose regularization from {\sl outside} the tree/forest through a spike-and-slab wrapper.



Spike-and-slab  variable selection  consistency results have relied  on analytical tractability (approximation availability) of the marginal likelihood \citep{narisetty2014bayesian,johnson2012bayesian,castillo2015bayesian}. Nicely tractable marginal likelihoods are ultimately unavailable in our framework, rendering the majority of the existing theoretical tools inapplicable.  For these contexts, \citet{yang2017bayesian} characterized general conditions for model selection consistency, extending the work of \citet{lember2007universal} to non {\sl iid} setting.  Exploiting these developments, we show variable selection consistency of our non-parametric spike-and-slab approach when the regression function is a smooth mix of covariates.  Building on \citet{rockova2017posterior},  our paper continues the investigation of missing theoretical  properties of Bayesian CART and BART.
We show model selection consistency when the smoothness is known as well as joint consistency for both the regularity level {\sl and} active variable set when the smoothness is not known and when $p>n$.
These results are the first model selection consistency results for Bayesian forest priors.

The absence of a tractable marginal likelihood  complicates not only theoretical analysis, but also computation.  We turn to Approximate Bayesian Computation (ABC) \citep{plagnol2004approximate,marin2012approximate,csillery2010approximate} and propose a procedure for model-free variable selection. Our ABC method {\sl does not} require the use of low-dimensional summary statistics and, as such, it {\sl does not} suffer from the known difficulty of ABC model choice \citep{robert2011lack}. Our  method is based on sample splitting where at each iteration (a) a random subset of data is used to come up with a proposal draw and (b) the rest of the data is used for ABC acceptance.
This new data-splitting  approach increases ABC  effectiveness by increasing its acceptance rate. ABC Bayesian forests relate to the recent line of work on combining machine learning with ABC \citep{pudlo2015reliable,sinica}.  We propose dynamic plots that describe the evolution of marginal inclusion probabilities as a function of the ABC selection threshold. 

The paper is structured as follows. Section 2 introduces the spike-and-slab wrapper around tree priors. Section 3 develops the ABC variable selection algorithm. Section 4 presents model selection consistency results. Section 5 demonstrates the usefulness of the ABC method on simulated data and Section 6 wraps up with a discussion.



\vspace{-0.5cm}

\subsection{Notation}
With $\|\cdot\|_n$ we denote the empirical $L^2$ norm. 
The class of functions $f(\x):[0,1]^p\rightarrow\R$ such that  $f(\cdot)$ is  constant in all directions excluding $\mS_0\subseteq\{1,\dots,p\}$ is denoted with $\mC(\mS_0)$. With $\Ha_p$, we denote $\alpha$-H\"{o}lder continuous functions with a smoothness coefficient $\alpha$.  $a\lesssim b $ denotes $a$ is less or equal to $b$, up to a multiplicative positive constant, and $a\asymp b$ denotes $a\lesssim b$ and $b\lesssim a$. The $\varepsilon$-covering number of a set $\Omega$ for a semimetric $d$, denoted by $N(\varepsilon; \Omega; d),$ is the minimal number of $d$-balls of radius $\varepsilon$ needed to cover set $\Omega$.

\vspace{-0.5cm}
\section{Bayesian Subset Selection with Trees}
We will work within the purview of non-parametric regression, where a vector of continuous responses $\Y^{(n)}=(Y_1,\dots,Y_n)'$ is  linked to fixed (rescaled) predictors $\x_i=(x_{i1},\dots,x_{ip})'\in[0,1]^p$ for $1\leq i\leq n$ through
\begin{equation}\label{model}
Y_i=f_0(\x_i)+\varepsilon_i\quad\text{with}\quad \varepsilon_i\sim\mathcal{N}(0,\sigma^2)\quad\text{for}\quad 1\leq i\leq n,
\end{equation}
where $f_0(\cdot)$ is  the regression mixing function and $\sigma^2>0$ is a scalar. It is often reasonable to expect that only a small  subset  $\mS_0$ of $q_0=|\mS_0|$ predictors actually exert influence on $\Y^{(n)}$ and contribute to the mix. The subset $\mS_0$ is seldom known with certainty and we are faced with the problem of variable selection. Throughout this paper, we assume   that the regression surface is  smoothly varying ($\alpha$-H\"{o}lder continuous) along the active directions $\mS_0$ and constant otherwise, i.e. we write $f_0\in\Ha_p\cap \mC(\mS_0)$.


Unlike linear models that capture the effect of a single covariate with a single coefficient, we permit non-linearities/interactions and capture variable  importance with (additive) regression trees. By doing so, we hope to recover non-linear signals that could be otherwise missed by linear variable selection techniques.


As with any other  non-parametric regression method, regression trees are vulnerable to the curse of dimensionality, where prediction performance deteriorates dramatically as the number of variables $p$ increases.   If an oracle were to isolate the active covariates $\mS_0$, the fastest achievable estimation rate would be $n^{-\alpha/(2\alpha+|\mS_0|)}$. This rate depends only on the intrinsic dimensionality $q_0=|\mS_0|$, not the actual dimensionality $p$ which can be much larger than $n$. Recently, \citet{rockova2017posterior}  showed that with {\sl suitable regularization}, the posterior distribution for Bayesian CART and BART actually concentrates at this fast rate (up to a log factor), adapting to the intrinsic dimensionality and smoothness.
 Later in Section \ref{sec:consist}, we continue their theoretical investigation and focus on consistent {\sl variable selection}, i.e. estimation of $\mS_0$ rather than $f_0(\cdot)$.  Spike-and-slab regularization  plays a key role in obtaining these theoretical guarantees.


\vspace{-0.5cm}
 \subsection{Trees with Spike-and-Slab Regularization}
Many applications offer a plethora of predictors and some form of 
 redundancy penalization has to be incurred to cope with  the curse of dimensionality. Bayesian regression trees were originally conceived for prediction rather than variable selection. Indeed, original tree implementations of Bayesian CART \citep{denison1998bayesian,chipman1998bayesian} do not seem to penalize inclusion of redundant variables aggressively enough. As noted by \citet{linero2018bayesian}, the prior expected number of active variables under  the Bayesian CART prior of \citet{chipman1998bayesian} satisfies  $\lim_{p\rightarrow\infty}\mathbb E [q]=K-1$
 as $p\rightarrow\infty$ where $K$ is the fixed number of bottom leaves.
 This behavior suggests that (in the limit) the prior forces  inclusion of the maximal number of variables while splitting on them only once. This is far from ideal.
 To alleviate this issue, we deploy  the so-called {\sl spike-and-forest priors}, i.e. spike-and-slab wrappers around sum-of-trees priors \citep{rockova2017posterior}. As with the traditional spike-and-slab priors, the specification starts with   a prior distribution over the $2^p$ active variable sets:
\begin{equation}\label{prior:S_orig}
\mS\sim \pi(\mS)\quad\text{for each}\quad \mS\subseteq\{1,\dots,p\}.
\end{equation}
 We elaborate on the specific choices of $\pi(\mS)$ later in Section \ref{sec:ABC} and Section \ref{sec:consist}. 
 
Given the pool of variables $\mS$, a regression tree/forest is grown using {\sl only} variables inside $\mS$. This prevents the trees from using too many variables and thereby from overfitting. 
 Recall that each individual regression tree is characterized by two components: (1)  a tree-shaped $K$-partition of $[0,1]^p$, denoted with $\mT$, and  (2)  bottom node parameters (step heights), denoted with $\b\in\R^K$. Starting with a parent node $[0,1]^p$, each  $K$-partition is grown by recursively dissecting  rectangular cells at chosen internal nodes along one of the active coordinate axes, all the way down to  $K$ terminal nodes. Each tree-shaped  $K$-partition  $\mT=\{\Omega_k\}_{k=1}^K$ consists of $K$ partitioning rectangles $\Omega_k\subset[0,1]^p$. 

While Bayesian CART approximates $f_0(\x)$ with a single tree mappings $f_{\mT,\b}(\x)=\sum_{k=1}^K\1(\x\in \Omega_k)\beta_k$,
 Bayesian Additive Regression Trees (BART) use an aggregate of $T$ mappings
$$
f_{\mE,\B}(\x)=\sum_{t=1}^Tf_{\mT^t,\b^t}(\x)
 $$
where $\mE=\{\mT^1,\dots,\mT^T\}$ is an ensemble of tree partitions and $\B=[\b^1,\dots,\b^T]$ is an ensemble of step coefficients.
 In a fully Bayesian approach, prior distributions have to be specified over the set of tree structures $\mE$ and over terminal node heights $\B$. 
 The spike-and-forest construction can accommodate various tree prior options.

To assign a prior over  $\mE$ for a given $T$, one possibility is to first pick the number of bottom nodes,  independently for each tree,  from a prior
  \begin{equation}\label{prior:K_orig}
  K^t\sim\pi(K)\quad\text{for}\quad K=1,\dots, n,
  \end{equation}
such as the Poisson distribution \citep{denison1998bayesian}. 
{ Given the vector of tree sizes $\bm K=(K^1,\dots, K^T)'$ and a set of covariates $\mS$, we assign a prior over so-called valid ensembles/forests $\mV\mE_{\mS}^{\bm K}$. We say that a tree ensemble $\mE$ is valid if it consists of trees that have  non-empty bottom leaves.} One can pick a tree partition  ensemble from a uniform prior over {\sl valid} forests $\mE\in\mV\mE^{\bm K}_\mS$, i.e.
\begin{equation}\label{prior:tree}
\pi(\mE \C \mathcal{S}, \bm K)=\frac{1}{\Delta(\mV\mE_\mS^{\bm K})}\1\left(\mE\in\mV\mE_\mS^{\bm K}\right),
\end{equation}
where $\Delta(\mV\mE_{\mS}^{\bm K})$ is the number of valid tree ensembles  characterized by $\bm K$  bottom leaves and split directions $\mS$. The prior \eqref{prior:K_orig} and \eqref{prior:tree}  was deployed in the Bayesian CART implementation of \citet{denison1998bayesian} (with $T=1$) and it was studied theoretically by \citet{rockova2017posterior}. 
Another related Bayesian forest prior  (implemented in the BART procedure and studied theoretically by \citet{rockova_saha} consists of an independent product of branching process priors (one for each tree) with decaying split probabilities \citep{chipman1998bayesian}.  The implementation is very similar to the one of  \citet{denison1998bayesian}. 

Finally, given the partitions $\mT^t$ of size $K^t$ for $1\leq t\leq T$, one assigns (independently for each tree) a Gaussian product prior on the step heights 
\begin{equation}\label{prior:beta2}
\pi(\b^t\C K^t)=\prod_{k=1}^{K^t}\phi(\beta_k^t;\sigma_{\b}^2),
\end{equation}
where $\phi(x;\sigma_{\b}^2)$ denotes a Gaussian density with mean zero and variance $\sigma_{\b}^2=1/T$ (as suggested by \citet{chipman2010bart}). The prior for $\sigma^2$ can be chosen as inverse chi-squared with hyperparameters chosen based on an estimate of the residual standard deviation of the data \citep{chipman2010bart}.

The most crucial component in the spike-and-forest construction, which sets it apart from existing BART implementations, is
the active set $\mS$ which  serves to mute variables by restricting the pool of predictors available for splits. The goal is to learn which set $\mS$ is most likely (a posteriori) and/or how likely  each variables is  to have contributed to $f_0$. Unlike related tree-based variable selection criteria,  the spike-and-slab  envelope makes it possible to 
 perform variable selection directly by evaluating posterior model probabilities $\Pi(\mS\C\Y^{(n)})$ or marginal inclusion probabilities $\Pi(j\in\mS_0\C\Y^{(n)})$ for $1\leq j\leq p$. Random forests \citep{breiman2001random} also mute variables, but they do so from within the tree by randomly choosing a small subset of variables for each split.  The spike-and-slab approach mutes variables externally rather than internally.
 \citet{bleich2014variable} note that when the number of trees is small, the Gibbs sampler for BART can get trapped in local modes which can destabilize the estimation procedure.  On the other hand, when the number of trees is large, there are ample opportunities for the noise variables to enter the model without necessarily impacting the model fit, making variable selection very challenging. Our spike-and-slab wrapper is devised to get around this problem.

The problem of variable selection is fundamentally challenged by the sheer size  of possible variable subsets. For  linear regression, (a) MCMC implementations exist that capitalize on the availability of marginal likelihood {\citep{narisetty2014bayesian, guan_stephens}, (b) optimization strategies exist  for both continuous \citep{rockova2016spike,rockova17_pem} and point-mass spike-and slab priors  \citep{carbonetto}. These techniques do not directly translate to tree models, for which tractable marginal likelihoods $\pi(\Y^{(n)}\C\mS)$ are unavailable.
To address this computational challenge, we explore ABC techniques as a new promising avenue for non-parametric spike-and-slab methods.

\vspace{-0.5cm}
\section{ABC for Variable Selection}
Performing (approximate) posterior inference in complex models is often complicated by the analytical intractability of the marginal likelihood. Approximate Bayesian Computation (ABC)  is a simulation-based inference framework that obviates the need to compute the likelihood directly by  evaluating the proximity  of (sufficient statistics of) observed data and pseudo-data simulated from the likelihood.  Simon Tavar{\'e} first proposed the ABC algorithm for posterior inference \citep{tavare1997inferring} in  the 1990's and since then it has widely been used in population genetics, systems biology, epidemiology and phylogeography\footnote{The study of how human beings migrated throughout the world in the past.}.


Combined with  a probabilistic structure over models, marginal likelihoods give rise to posterior model probabilities, a standard  tool for Bayesian model choice.
When the marginal likelihood is unavailable (our case here), ABC offers a unique computational solution. 
However, as pointed out by \citet{robert2011lack}, ABC cannot be trusted for model comparisons when model-wise sufficient summary statistics are not sufficient across models.  The ABC approximation to  Bayes factors then does not converge to exact Bayes factors, rendering  ABC model choice fundamentally untrustworthy. A fresh new perspective to ABC model choice was offered in \citet{pudlo2015reliable}, who rephrase  model selection as a classification problem that can be tackled with machine learning tools. Their idea is to treat the ABC reference table (consisting of samples from a prior model distribution and high-dimensional vectors of  summary statistics of pseudo-data obtained from the prior predictive distribution) as an actual data set, and  to train a random forest classifier that predicts a model label using the summary statistics as predictors. 
Their goal is to produce a stable model decision based on a classifier rather than on an estimate of posterior model probabilities. Our approach has a similar flavor in the sense that it combines machine learning with ABC, but the concept is fundamentally very different. Here, the fusion of Bayesian forests and ABC is tailored to non-parametric variable selection towards obtaining posterior variable inclusion probabilities. Our model selection approach does not suffer from the difficulty of ABC model choice as we {\sl do not} commit to any summary statistics and use random subsets of observations to generate the ABC reference table.

\vspace{-0.5cm}
\subsection{Naive ABC Implementation}
\vspace{0.2cm}
For its practical implementation,  our Bayesian variable selection method requires sampling from  the analytically intractable posterior distribution over subsets $\Pi(\mS\C\Y^{(n)})$ under the {\sl spike-and-forest} prior
 \eqref{prior:tree}, \eqref{prior:K_orig} and \eqref{prior:S_orig}.  Given a single tree partition $\mT$, the (conditional) marginal likelihood $\pi(\Y^{(n)}\C\mT,\mS)$ is available in closed form, facilitating  implementations of Metropolis-Hastings  algorithms \citep{chipman1998bayesian,denison1998bayesian} (see Section \ref{sec:mcmc}). However, such MCMC schemes can suffer from poor mixing. Taking advantage of the fact that, despite being intractable, one can  {\sl simulate from} the marginal likelihood  $\pi(\Y^{(n)}\C\mS)$,  we will explore the potential of ABC as a complementary development to  MCMC implementations. 

The principle at the core of ABC is to perform approximate posterior inference from a given dataset by simulating from a prior distribution and by comparisons with numerous synthetic datasets.
In its standard form, an ABC implementation of model choice creates a reference table, recording a large number of datasets simulated from the model prior and the prior predictive distribution under each model. Here, the table consists of $M$  pairs $(\mS_m,\Y_m^{\star})$ of model indices $\mS_m$, simulated from the prior $\pi(\mS)$, and pseudo-data $\Y^{\star}_m\in\R^{n}$, simulated  from the marginal likelihood $\pi(\Y^{(n)}\C\mS_m)$. To generate $\Y^\star_m$ in our setup, one can hierarchically decompose the marginal
likelihood 
\begin{equation}\label{marginal_lik}
\pi(\Y^{(n)}\C\mS)=\int_{(f_{\mE,\B},\sigma^2)}\pi(\Y^{(n)}\C f_{\mE,\B},\sigma^2)\d\pi(f_{\mE,\B},\sigma^2\C\mS)
\end{equation}
and first  draw $({f^m_{\mE,\B}},\sigma^{2}_m)$ from the prior $\pi({f_{\mE,\B}},\sigma^2\C\mS)$ and obtain  $\Y^\star_m$ from  \eqref{model}, given $({f^m_{\mE,\B}},\sigma^{2}_m)$.
ABC sampling is then followed by an ABC rejection step, which extracts pairs $(\mS_m,\Y^{\star}_m)$ such that $\Y^{\star}_m$ is close enough to the actual observed data. In other words, one  trims the reference table by keeping only model indices $\mS_m$ paired with pseudo-observations that are at most   $\epsilon$-away from the observed data, i.e.
$\|\Y^{obs}-\Y^{\star}_m\|_2\leq \epsilon$ for some tolerance level $\epsilon$. These extracted values comprise an approximate ABC sample from the posterior $\pi(\mS\C\Y^{(n)})$,
 which should be informative for the relative  ordering of the competing models, and thus variable selection \citep{grelaud2009abc}. 
 Note that this particular ABC implementation does not require any use of low-dimensional summary statistics, where rejection is based solely on $\Y^{obs}$.  While theoretically justified, this  ABC variant has two main drawbacks.
 
 First, with very many predictors, it will be virtually impossible to sample from all $2^p$ model combinations at least once, unless the reference table is huge. Consequently, relative frequencies of occurrence of a model $\mS_m$ in the trimmed ABC reference table {\sl may not}  be a good estimate of the posterior model probability $\pi(\mS_m\C\Y^{(n)})$.
 While   the model with the highest posterior probability $\pi(\mS_m\C\Y^{(n)})$ is commonly conceived as the right model choice, it may not be the  optimal model for prediction. Indeed,
 in nested correlated designs and orthogonal designs, it is  the median probability model  that is predictive optimal  \citep{barbieri2004optimal}. The median probability model (MPM)
  consists of those variables whose {\sl marginal} inclusion probabilities $\P(j\in \mS_0\C\Y^{(n)})$ are at least $0.5$.
 While simulation-based estimates of posterior model probabilities $\P(\mS\C\Y^{(n)})$ can be imprecise,  we argue (and show) that ABC estimates of marginal inclusion probabilities $\P(j\in \mS_0\C\Y^{(n)})$ are far more robust and stable. 

 The second difficulty is purely computational and relates to the issue of coming up with good proposals $f_{\mE,\B}^m$ such that the  pseudo-data are sufficiently close to $\Y^{obs}$. Due to the vastness of the tree ensemble space, it would be naive to think that one can obtain solid guesses of $f_0$ purely by sampling from non-informative priors. This is why we call this ABC implementation naive. These considerations lead us to a new data-splitting ABC modification  that uses a random portion of the data to train the prior and to generate pseudo-data with more affinity to the left-out observations.

\vspace{-0.5cm}
 \subsection{ABC Bayesian Forests}\label{sec:ABC}
 \vspace{0.2cm}
By sampling directly from  noninformative priors over tree ensembles $\pi(f_{\mE,\B},\sigma^2\C\mS)$, the acceptance rate of the naive ABC can be prohibitively small where huge reference tables  would be required to obtain only a few approximate samples from the posterior.

To address this problem, we suggest a sample-splitting approach to come up with draws  that are less likely to be rejected by the ABC method. 
At each  ABC iteration, we first draw a random subsample $\mI\subset\{1,\dots,n\}$ of size $|\mI|=s$ with no replacement. 
Then we split the observed data $\Y^{(n)}$ into two groups, denoted with $\Y^{(n)}_{\mI}$ and $\Y^{(n)}_{\mI^c}$, and instead of  \eqref{marginal_lik} we consider the marginal likelihood conditionally on 
$\Y^{(n)}_{\mI}$
{\begin{equation}\label{marglik1}
\pi(\Y^{(n)}\C\Y^{(n)}_{\mI},\mS)=\int_{(f_{\mE,\B},\sigma^2)}\pi(\Y^{(n)}_{\mI^c}\C f_{\mE,\B},\sigma^2)\d\pi_{\mI}(f_{\mE,\B},\sigma^2\C\mS)
\end{equation}}
where
\begin{equation}\label{marglik2}
\pi_{\mI}(f_{\mE,\B},\sigma^2\C\mS)=\pi(f_{\mE,\B},\sigma^2\C\Y^{(n)}_{\mI},\mS).
\end{equation}
This simple decomposition unfolds new directions  for ABC sampling based on data splitting.
Instead of using all observations $\Y^{obs}$ to accept/reject each draw,
we  set aside a random subset of  data  $\Y^{obs}_{\mI^c}$ for ABC rejection and use $\Y^{obs}_{\mI}$ to ``train the prior".
The key observation is that the samples from the prior $\pi_{\mI}(f_{\mE,\B},\sigma^2\C\mS)$, i.e. the {\sl posterior} $\pi(f_{\mE,\B},\sigma^2\C\Y^{(n)}_{\mI},\mS)$, will have seen a part of the data and will produce more realistic guesses  of $f_0$. Such guesses are more likely to yield pseudo-data that match  $\Y^{obs}_{\mI^c}$ more closely, thereby increasing the acceptance rate of ABC sampling. Note that the acceptance step is based solely on  the left-out sample $\Y^{obs}_{\mI^c_m}$, not the entire data. {Similarly as the naive ABC outlined in the previous section, we first sample the subset $\mS$ from the prior $\pi(\mS)$ and then obtain draws from the conditional marginal likelihood under an updated prior $\pi_{\mI}(f_{\mE,\B},\sigma^2\C\mS)$. This corresponds to an ABC strategy for sampling from
$\pi(\mS\C \Y^{(n)}_{\mI^c})$ under the priors \eqref{prior:S_orig} and \eqref{marglik2}.  As will be seen later, this posterior is effective for assessing variable importance. Moreover, if $\pi(\mS)$ is a good proxy for $\pi(\mS\C \Y^{(n)}_{\mI})$ (when the training set is small relative to the ABC rejection set), this ABC will produce approximate samples from the original target $\pi(\mS\C\Y^{(n)})$.}

{

The idea of using a portion of the data for training the prior and the rest for model selection goes back to at least \citet{good1950probability}.  
The most common prescription for choosing training samples in Bayesian analysis 
is to convert improper priors into propers ones for meaningful model selection with Bayes factors \citep{lempers1971posterior,o1995fractional}.
\citet{berger1996intrinsic} advocated  choosing the training set as small as possible subject to yielding proper posteriors (so called minimal training samples). 
\citet{berger2004training}  argue that data can vary widely in terms of their information content and the use of single minimal training samples can be inadequate/ suboptimal. Since there are many possible training samples, it is natural to average the resulting Bayes factors over the training samples in some fashion. 
While intrinsic Bayes factors  \citep{berger1996intrinsic} average Bayes factors over all possible minimal training samples, expected posterior priors \citep{perez2002expected} average the prior first. In particular, the empirical expected-posterior prior for model $\mS$  \citep{ghosh2002nonsubjective, perez2002expected} writes as
\begin{equation}\label{eq:expected_prior}
\pi(f_{\mE,\B},\sigma^2\C\mS)=\frac{1}{L}\sum_{l=1}^L\pi_{\mathcal I_l}(f_{\mE,\B},\sigma^2\C\mS),
\end{equation}
where $\pi_{\mathcal I_l}(f_{\mE,\B},\sigma^2\C\mS)$ was defined in (8) and where $L$ is the number of all minimal training samples $\mI_l$. The marginal likelihood under this prior can be then written as (equation (3.5) in \citet{perez2002expected})
$
m(\Y^{(n)}\C\mS)=\frac{1}{L}\sum_{l=1}^L \pi(\Y^{(n)}\C\Y^{(n)}_\mI,\mS), 
$
where  $\pi(\Y^{(n)}\C\Y^{(n)}_\mI,\mS)$ was defined in (7).  Our ABC analysis with internal data splitting can be thus regarded as arising from the empirical expected posterior prior \eqref{eq:expected_prior}. While the motivation for using training samples in Bayesian analysis has been largely to make improper priors proper, here we use this idea in a different context to increase ABC acceptance rate. 

}

The ABC Bayesian Forests algorithm is formally summarized in Table \ref{ABC}. It starts by splitting the dataset into two subsets at each  ($m^{th}$) iteration:  $\Y^{obs}_{\mI_m}$ for fitting and $\Y^{obs}_{\mI_m^c}$ for ABC rejection. The algorithm then proceeds by sampling an active set $\mS$ from $\pi(\mS)$. Using the spike-and-slab construction, one can draw Bernoulli indicators $\bg=(\gamma_1,\dots,\gamma_p)'$ where $\P(\gamma_j=1\C\theta)=\theta$ for some prior inclusion probability $\theta\in(0,1)$ and set $\mS_m=\{j:\gamma_j=1\}$. 
When sparsity is anticipated, one can choose $\theta$ to be small or to arise from a beta prior  $\mathcal{B}(a,b)$ for some $a>0$ and $b>0$ (yielding the beta-binomial prior). We discuss other suitable prior model choices in Section \ref{sec:consist}.

{\scriptsize
\begin{algorithm}[t]\small
 \KwData{ Data ($Y^{obs}_i$, $\x_i$) for $1\leq i\leq n$}
 \KwResult{$\pi_j(\epsilon)$ for $1\leq j\leq p$ where $\pi_j(\epsilon) = \wh\P(j\in\mS_0\C\Y^{(n)})$}
 \textbf{Set} $M$: the number of ABC simulations; $s$: the subsample size; $\epsilon$: the tolerance threshold; $m=0$ the counter \\
 \While{$m\leq M$}{
 {\vspace{-0.3cm} \hspace{5cm}\color{white} \tiny ahoj}\\
  \textbf{(a) Split}  data $\Y^{obs}$ into $\Y^{obs}_{\mI_m}$ and $\Y^{obs}_{\mI^c_m}$, where  $\mI_m\subset\{1,\dots,n\}$ of size $|\mI_m|=s$ is obtained by sampling with no replacement.\\
  \textbf{(b) Pick} a subset $\mS_m$ from $\pi(\mS)$.\\
  \textbf{(c) Sample}  $(f_{\mE,\B}^m,\sigma^2_m)$ from $\pi_{\mI_m}(f_{\mE,\B},\sigma^2\C\mS_{m})=\pi(f_{\mE,\B},\sigma^2\C\Y^{obs}_{\mI_m},\mS_{m})$.\\
  \textbf{(d) Generate} pseudo-data  $\Y^\star_{\mI_m^c}$ by sampling white noise $\varepsilon_i\iid\mathcal{N}(0,\sigma_m^2)$  and setting \\
  $Y^\star_i=f_{\mE,\B}^m(\x_i)+\varepsilon_i$
   for each $i\notin I_m$. \\
  \textbf{(e) Compute}  discrepancy $\epsilon_m=\|\Y^\star_{\mI_m^c}-\Y^{obs}_{\mI_m^c}\|_2$.\\
  \eIf{$\epsilon_m<\epsilon$
}{Accept $(\mS_m,f_{\mE,\B}^m)$ and set $m=m+1$}
{Reject $(\mS_m,f_{\mE,\B}^m)$  and set $m=m+1$}
}
\textbf{Compute} $\pi_j(\epsilon)$ as the proportion of times $j^{th}$ variable is used in the accepted $f_{\mE,\B}^m$'s. 
 \caption{\bf : ABC Bayesian Forests}\label{ABC}
\end{algorithm}}

In the {\sl (c) step} of ABC Bayesian Forests, one obtains a sample from the posterior of $(f_{\mE,\B},\sigma^2)$, given $\Y^{obs}_{\mI_m}$. For this step, one can leverage existing implementations of Bayesian CART and BART (e.g. the \texttt{BART} R package of \citet{mcculloch2018package}). A single draw from the posterior is obtained after a sufficient burn-in.
In this vein, one can view ABC Bayesian Forests as a computational envelope around BART to restrict the pool of available variables. 
 The {\sl (d) step} then consists of predicting the outcome $\Y^\star_{\mI_m^c}$ for left-out observations  $\x_i$ using \eqref{model} for each $i\in\mI_{m}^c$. The last step is ABC rejection based on the discrepancy between $\Y^\star_{\mI_m^c}$ and $\Y^{obs}_{\mI_m^c}$.

 For the computation of marginal inclusion probabilities $\pi_j(\epsilon)$, one could conceivably report the proportion of ABC accepted  samples  $\mS_m$ that contain the $j^{th}$ variable. However, $\mS_m$ is a pool of {\sl available} predictors and not all of them are necessarily used in $f_{\mE,\B}^m$. Thereby, we report the proportion of ABC accepted samples $f_{\mE,\B}^m$  that use the $j^{th}$ variable at least once, i.e. 
 \begin{equation}\label{pis}
 \pi_j(\epsilon)=\frac{1}{M(\epsilon)}\sum\limits_{m: \epsilon_m<\epsilon}\1(j \,\text{used in}\, f_{\mE,\B}^m),
 \end{equation}
 where $M(\epsilon)$ is the number of accepted ABC samples at $\epsilon$. Each tree ensemble $f_{\mE,\B}^m$ thus performs its own variable selection by picking variables from $\mS_m$ rather than from $\{1,\dots,p\}$.  {Limiting the pool of predictors prevents from too many false positives. In addition, the inclusion probabilities \eqref{pis} do use the training data $\Y^{(n)}_{\mI}$ to shrink and update the subset $\mS$ by leaving out covariates not picked  by $f_{\mE,\B}^m$. In this way,  the  mechanism for selecting the subsets $\mS$ is not strictly sampling from the prior $\pi(\mS)$ but it seizes   the information in the training set $\mI$. In this way, $\mS_m$'s can be regarded as approximate samples from  $\pi(\mS\C\Y^{obs})$. When $\mI=\emptyset$, we recover the naive ABC as a special case.
 }

\vspace{-0.5cm}
\subsubsection{Dynamic ABC}
\vspace{0.2cm}
The estimates of  marginal inclusion probabilities $\pi_j(\epsilon)$ obtained with ABC Bayesian Forests  unavoidably depend on the level of approximation accuracy $\epsilon$.
The acceptance threshold $\epsilon$ can be difficult to determine in practice, because it has to accommodate random variation of data around  $f_{0}$ as well as the error   when approximating smooth surfaces $f_0$ with trees.  As $\epsilon\rightarrow 0$,  the approximations $\pi_j(\epsilon)$ will be more accurate, but the acceptance rate will be smaller. It is customary to pick $\epsilon$ as an empirical quantile of $\epsilon_m$ \citep{grelaud2009abc}, keeping only the top few closest samples. Rather than choosing one value $\epsilon$, we suggest a dynamic strategy by considering a sequence  of decreasing values $\epsilon_N>\epsilon_{N-1}>\dots>\epsilon_1>0$. By filtering out the ABC samples with stricter thresholds,
we track the evolution of each $\pi_j(\epsilon)$ as $\epsilon$ gets smaller and smaller. This gives us a dynamic plot that is similar in spirit to the  Spike-and-Slab LASSO \citep{rockova2016spike} or EMVS \citep{rovckova2014emvs} coefficient evolution plots. However, our plots depict approximations to posterior inclusion probabilities rather than coefficient magnitudes. Other strategies for selecting the threshold $\epsilon$ are discussed in \citep{sunnaaker2013approximate,marin2012approximate,csillery2010approximate}.

\vspace{-0.5cm}
\subsection{ABC Bayesian Forests in Action}\label{sec:demonstrate}
\vspace{0.2cm}
We  demonstrate the usefulness of ABC Bayesian Forests on the benchmark Friedman dataset \citep{friedman1991multivariate},  where the observations are generated from \eqref{model} with  $\sigma=1$ and
\begin{equation}\label{mean_friedman}
f_0(\x_i)= 10\,\sin(\pi \,x_{i1}\, x_{i2}) + 20\,(x_{i3} -0.5)^2 + 10\, x_{i4} + 5\, x_{i5},
\end{equation} 
where $x_i\in[0,1]^p$ are $iid$ from a uniform distribution on a unit  cube.  Because the outcome depends on $x_1,\dots,x_p$, the predictors $x_6,\dots,x_p$ are irrelevant, making it more challenging to find $f_0(\x)$. We begin by illustrating the basic features of ABC Bayesian Forests with $p=100$ and $n=500$, assuming the beta-binomial prior $\pi(\mS\C\theta)$ with $\theta\sim\mathcal{B}(1,1)$ (see Section \ref{sec:ABC}).
At the $m^{th}$ ABC iteration, we draw one posterior sample $f_{\mE,\B}^{m}$ after $100$ burnin iterations using the BART MCMC algorithm \citep{chipman2001practical} with $T=10$ trees.  We generate $M=1\,000$ ABC samples (with $s=n/2$) and we keep track of variables used in $f_{\mE,\B}^{m}$'s to estimate the marginal posterior inclusion probabilities $\pi_j(\epsilon)$. 
It is worth pointing out that unlike MCMC, ABC Bayesian Forests are embarrassingly parallel, making distributed implementations readily available.

Following the dynamic ABC strategy, we plot the estimates of posterior inclusion indicators $\pi_j(\epsilon)$ as a function of $\epsilon$ (Figure \ref{fig1}). The true signals are depicted in blue, while the noise covariates are in red. The estimated inclusion probabilities clearly segregate the active and non-active variables, even for large $\epsilon$ values. This is because BART itself performs variable selection to some degree, where not all variables in  $\mS_m$ end up contributing to $f_{\mE,\B}^m$. For small enough $\epsilon$, the inclusion probabilities of true signals eventually cross the $0.5$ threshold.
Based on the median probability model rule \citep{barbieri2004optimal}, one thereby selects the true model when  $\epsilon$ is sufficiently small.
Because the inclusion probabilities  get a bit unstable as $\epsilon$ gets smaller (they are obtained from smaller reference tables),
we  excluded the $10$ smallest $\epsilon$ values from the plot.

\begin{figure}[!t]
     \subfigure[$T=10$]{
\includegraphics[width=0.45\textwidth,height=0.4\textwidth]{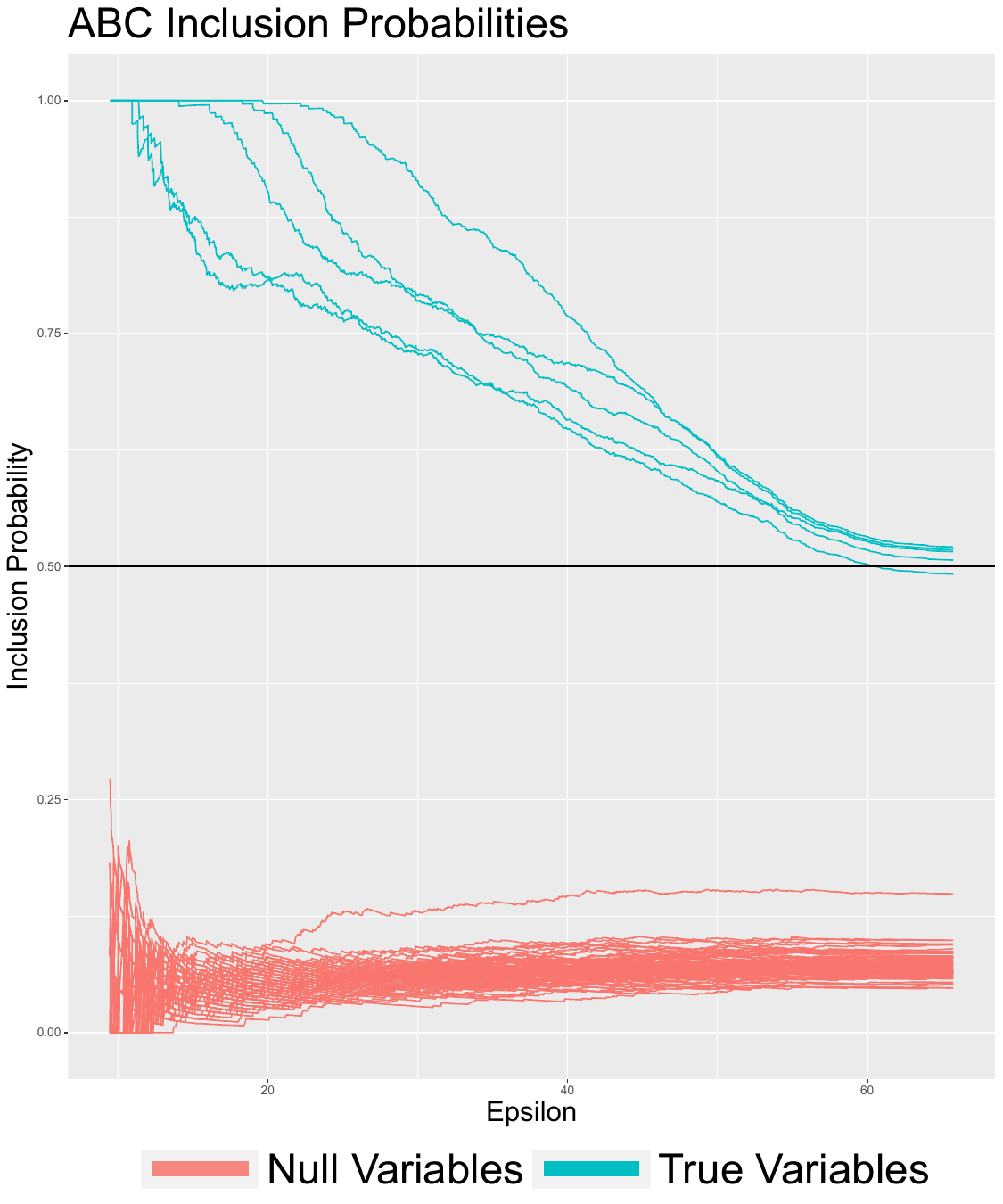} \label{fig1:dyna1}
}
\subfigure[Random Forests versus  Bayesian Forests]{
\includegraphics[width=0.45\textwidth,height=0.4\textwidth]{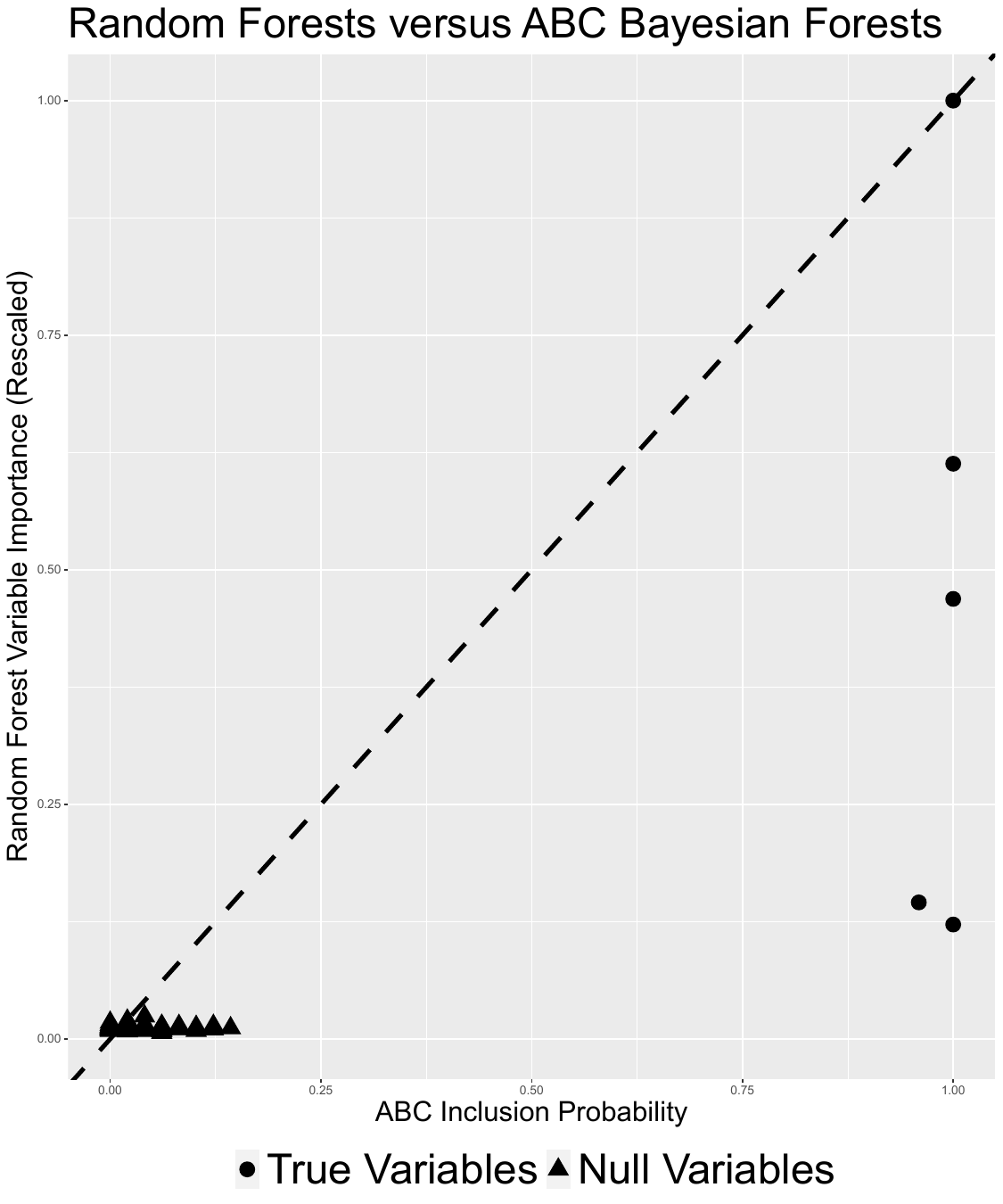}\label{fig1:dyna2}
}
    \caption{\small(Left) Dynamic ABC plots for evolving inclusion probabilities as $\epsilon$ gets smaller. (Right) Plot of $\pi_j(\epsilon)$ obtained with ABC Bayesian Forests ($\epsilon$ is   the $5\%$ quantile of $\epsilon_m$'s) and the variable importance measure   from Random Forests (rescaled to have a maximum at 1).}\label{fig1}
\end{figure}

We repeated the experiment with more trees ($T=50$) and a single tree ($T=1$). Using more trees, one still gets the separation between signal and noise. However, many more noisy covariates would be included by the MPM rule. This is  in accordance with  \citet{chipman2001practical} who state that BART can over-select with many trees.  With a single tree, on the other hand, one may miss some of the low-signal predictors, where deeper trees and more ABC iterations would be needed  to obtain a clearer separation. 

 In this simulation, we observe a curious empirical connection between $\pi_j(\epsilon)$, obtained with ABC Bayesian Forests (taking top $5\%$ ABC samples), and  rescaled variable importances obtained with Random Forests (RF). From  Figure \ref{fig1:dyna2}, we see that  the two measures largely agree, separating the signal coefficients (triangles) from the noise coefficients (dots). However, the RF measure is a bit more conservative, yielding smaller normalized importance scores for true signals.  While variable importance for RF is yet not understood theoretically, in the next section we provide conditions under which the posterior distribution  is consistent for variable selection.
\smallskip

\vspace{-0.5cm}
\section{Model-Free Variable Selection Consistency}\label{sec:consist}
\vspace{0.2cm}
In this section,  we develop large sample model selection theory for spike-and-forest priors.
As a jumping-off point, we first assume that $\alpha$ (the regularity of $f_0$) is known, where model selection essentially boils down to finding the active set $\mS_0$. Later in this section, we  investigate {\sl joint} model selection consistency, acknowledging uncertainty about $\mS_0$ and, at the same time, the regularity $\alpha$.

Several consistency results for non-parametric regression already exist \citep{zhu2015reinforcement,yang2017bayesian}. 
\citet{comminges2012tight} characterized tight conditions on $(n,p,q_0)$,  under which it is possible to consistently estimate the sparsity pattern in two regimes.
  For fixed $q_0$, consistency is attainable when $(\log p)/n\leq c$ for some $c>0$. When $q_0$ tends to infinity as $n\rightarrow\infty$, consistency is achievable when $c_1q_0+\log\log(p/q_0)-\log n\leq c_2$ for some $c_1,c_2>0$. Throughout this section, we will treat $q_0$ as fixed and show variable selection consistency when $q_0\log p\leq n^{q_0/(2\alpha+q_0)}$.
As an overture to our main result, we start with a simpler case when $T=1$ (a single tree) and  when $\alpha$ is  known. The full-fledged result for Bayesian forests and unknown $\alpha$ is presented in Section \ref{sec:consist:forest}. Throughout this section, we will assume   $\sigma^2=1$.

\vspace{-0.5cm}
\subsection{The Case of Known $\alpha$}\label{sec:known:alpha}
\vspace{0.2cm}
Spike-and-forest mixture priors are constructed in two steps by (1) first specifying a conditional prior $\Pi_\mS(f)$ on tree (ensemble) functions expressing a qualitative guess on $f_0$, and then (2)  attaching a prior weight $\pi(\mS)$ to each ``model" (i.e. subset) $\mS$. 
The posterior distribution $\Pi(f\C\Y^{(n)})$ can be viewed as a mixture of individual posteriors for various models $\mS$ with  weights given by posterior model probabilities $\Pi(\mS\C\Y^{(n)})$, i.e.
$$
\Pi(f\C\Y^{(n)})=\sum_{\mS}\Pi(\mS\C\Y^{(n)})\Pi_{\mS}(f\C\Y^{(n)}).
$$
Our aim is to establish ``model-free" variable selection consistency in the sense that
$$
\Pi(\mS=\mS_0\C\Y^{(n)})\rightarrow 1\quad \text{in $\P_{f_0}^{(n)}$-probability} \quad\text{as}\quad n\rightarrow\infty,
$$
where $\P_{f_0}^{(n)}$ is the distribution of $\Y^{(n)}$ under \eqref{model}.
The adjective ``model-free" merely refers to the fact that we are selecting subsets in a non-parametric regression environment without necessarily committing to a linear model.
We start by defining the model index set  $\G=\big\{\mS:\mS\subseteq\{1,\dots,p\}\big\}$, consisting of all $2^p$ variable subsets,
and we partition it into (a) the true model $\mS_0$, (b) models that {\sl overfit} $\G_{\mS\supset\mS_0}$ (i.e. supersets of the true subset $\mS_0$) and (c) models that {\sl underfit} $\G_{\mS\not\supset\mS_0}$ (i.e. models that miss at least one active covariate). 
Each model $\mS\in\G$ is accompanied by a convergence rate $\varepsilon_{n,\mS}$ that reflects the inherent difficulty of the estimation problem. For each model $\mS$ of size $|\mS|$, we define
\begin{equation}\label{minimax_rate}
\varepsilon_{n,\mS}=C_\varepsilon\,n^{-\alpha/(2\alpha+|\mS|)}\sqrt{\log n}\quad\text{for some}\quad C_\varepsilon>0,
\end{equation}
the $\|\cdot\|_n$-near-minimax  rate of estimation of a $|\mS|$-dimensional $\alpha$-smooth function. 

\vspace{-0.5cm}
\subsubsection{Prior Specification}
\vspace{0.2cm}
Prior distribution on the model index $\Pi(\mS)$ has to be chosen carefully for model selection consistency to hold  when $p>n$ \citep{moreno2015posterior}.
Traditional spike-and-slab priors introduce $\Pi(\mS)$ through a prior inclusion probability $\theta=\Pi(i\in\mS_0\C\theta)$, independently for each $i=1,\dots, p$. This prior mixing weight  is  often  endowed with a prior, such as  the uniform prior $\pi(\theta)=\mathcal{B}(1,1)$ \citep{scott2010bayes}, yielding a uniform prior on the model size, or the ``complexity prior"  $\pi(\theta)=\mathcal{B}(1,p^c)$ for $c>2$ \citep{castillo2012needles}, yielding an exponentially decaying prior on the model size.
We propose a  different approach, directly assigning a prior on model weights through
\begin{equation}\label{model_prior}
\pi(\mS)\propto {\e^{-C\,\left(n^{|\mS|/(2\alpha+|\mS|)}\log n\vee |\mS|\log p\right)}}
\end{equation}
where $C>0$ is a suitably large constant. When $|\mS|\log p\leq n^{|\mS|/(2\alpha+|\mS|)}$, this prior is proportional to $\e^{-C/C_\varepsilon^2\,n\,\varepsilon_{n,\mS}^2}$  and, as such, it puts more mass on models that yield faster rates convergence (similarly as in  Lember and van der Vaart (2007)). 
When $|\mS|\log p>n^{|\mS|/(2\alpha+|\mS|)}\log n$, the implied prior on the effective dimensionality 
$\pi(|\mS|)={p\choose |\mS|}\pi(\mS)$
 will be exponentially decaying in the sense that $\pi(|\mS|)\lesssim \e^{-(C-1)|\mS|\log p}$ for $C>1$. It was recently noted by \citet{castillo2018empirical} that the complexity prior  ``penalizes slightly more than necessary". With our prior specification \eqref{model_prior}, however, the exponential decay kicks in {\sl only} when $|\mS|$ is sufficiently large. 

Assuming that the level of smoothness $\alpha$ is known,  the optimal number of steps (i.e. tree bottom leaves $K$) needed to achieve the rate-optimal performance for estimating  $f_0$ should be of the order $n^{q_0/(2\alpha+q_0)}=1/C_\varepsilon^2\, n\,\varepsilon_{n,\mS_0}^2/\log n$ \citep{rockova2017posterior}.
For our toy setup with a known  $\alpha$, we thus assume a point-mass prior on $K$ with an atom near the optimal number of steps for each given $\mS$, i.e.
\begin{equation}\label{prior:K}
\pi(K\C\mS)=\1[K=K_{\mS}],\quad\text{where}\quad K_{\mS}=\lfloor C_K/C_\varepsilon^2\,n\,\varepsilon_{n,\mS}^2/\log n\rfloor
\end{equation}
for some $C_K>0$ such that $K_{\mS_0}=2^{q_0s}$ for some $s\in\N$. In Section \ref{sec:alpha_unknown}, we allow for more flexible trees with variable sizes.

\vspace{-0.5cm}
\subsubsection{Identifiability}
\vspace{0.2cm}
The active variables ought to be sufficiently relevant in order to make their identification possible.
To this end, we introduce a non-parametric signal strength assumption, making sure that  $f_0$  is not too flat in active directions \citep{yang2017bayesian,comminges2012tight}. 

We first introduce the notion of an approximation gap.
For any given model $\mS$, we denote with $\mF_\mS$ a set of approximating functions  (only single trees $f_{\mT,\b}$ with $K_{\mS}$ leaves for now) and
 define the approximation gap as follows:
\begin{equation}\label{gap}
\delta^{\mS}_n\equiv \inf\limits_{f_{\mT,\b}\in\mF_{\mS}}\|f_0-f_{\mT,\b}\|_n=\|f_0-f_{\wh\mT,\wh\b}^{\mS}\|_n,
\end{equation}
where $f_{\wh\mT,\wh\b}^{\mS}$ is the $\|\cdot\|_n$-projection of $f_0$ onto $\mF_\mS$.
For identifiability of $\mS_0$, we require that those models that miss one of the active covariates have a large separation gap.
\begin{definition}(Identifiability)\label{ass:identify}
We say that $\mS_0$ is $(f_0,\varepsilon)$-{\sl identifiable} if, for some $M>0$,
\begin{equation}\label{identify}
\inf_{i\in\mS_0} \delta^{\mS_0\backslash i}_n>2M\varepsilon.
\end{equation}
\end{definition}
We provide a more intuitive explanation of  \eqref{identify} in terms of directional variability of $f_0$. The best approximating tree $f_{\wh\mT,\wh\b}^{\mS}$ can be written as
$$
 f_{\wh\mT,\wh\b}^{\mS}(\x)=\sum_{k=1}^{K_\mS}\1(\x\in\wh\Omega_{k}^{\mS})\wh\beta_k \,\,\text{with}\,\, \wh \beta_k= {\bar f_0(\wh\Omega_k^{\mS})}\equiv\frac{1}{n(\wh\Omega_{k}^{\mS})}\sum_{\x_i\in\wh{\Omega}_{k}^{\mS}}f_0(\x_i),
$$
where
$\wh\mT=\{\wh\Omega_k^{\mS}\}_{k=1}^{K_\mS}$ is the {tree-shaped partition of the $\|\cdot\|_n$-projection of $f_0$ defined in \eqref{gap}}  with $K_\mS$ leaves and where $n(\wh\Omega_k^{\mS})=\sum_{i=1}^n\1(\x_i\in\wh\Omega_k^{\mS})\equiv n\,\mu(\wh\Omega_k^{\mS}).$
The separation gap in \eqref{gap} can be then re-written as
\begin{align*}
\delta^{\mS}_n
=\sqrt{\sum_{k=1}^{K_\mS}\mu(\wh\Omega_{k}^{\mS})V[f_0\C\wh\Omega_{k}^{\mS}]},
\end{align*}
where
$$
{V[f_0\C\wh\Omega_k^{\mS}]}\equiv\frac{1}{n(\wh\Omega_k^{\mS})}\sum_{\x_i\in\wh\Omega_k^{\mS}}\left(f_0(\x_i)-{\bar f_0(\wh\Omega_k^{\mS})}\right)^2
$$
is the local variability of $f_0$ inside $\wh\Omega_k^{\mS}$. Given this characterization,  \eqref{identify} will be satisfied, for instance, when variability of $f_0$ inside best approximating cells that miss an active direction is too large, i.e.
$
\inf_{i\in\mS_0}\inf\limits_{k} V[f_0\C\wh\Omega_k^{\mS_0\backslash i}]> 4M^2\,\varepsilon^2.
$
 
 {
 Our identifiability condition is a theoretical assumption on $f_0$ which indicates how large signal in each direction should be in order to be capturable. 
 It generalizes the more traditional sufficient ``beta-min conditions" \citep{castillo2015bayesian,zhao2006model}  for variable selection consistency (see Remark \ref{remark:identify}). Here,
we gauge the amount of signal  in terms of local variation in cells that {\em do not split} on an active covariate. Intuitively, if we do not split on $i\in\mS_0$, the 
``variation" of $f_0$ inside the cells of the best tree we can get without $i$ will be too large. The following example links our identifiability assumption with beta-min conditions.

\begin{example}
Assume for now that $p=2$ and that $f_0$ is linear, i.e. 
$$
f_0(\bm x_i)=a+bx_{i1}+cx_{i2}.
$$
Moreover, assume that $n=16$ predictor observations are  located on a regular grid  $\mathcal X=\{k/4: 1\leq k\leq 4\}\times \{j/4: 1\leq j\leq 4\}$, where $\times$ denotes the Cartesian product. 
Suppose $\mS_0=\{1,2\}$ and  set $\mS=\mS_0\backslash\{2\}=\{1\}$ and $K_{\mS}=2$. It can be verified that the partition $\wh\mT$ of the best approximating tree  that {\em does not} split on the covariate $x_2$ consists of two rectangles $\wh\Omega^\mS_1=[0,1/2)\times [0,1]$ and $\wh\Omega^\mS_2=[1/2,1]\times [0,1]$.
Then we have
$$
\bar f_0(\wh\Omega^\mS_1)=a+\frac{3}{2}\left(\frac{b}{4}\right)+ \frac{5}{2}\left(\frac{c}{4}\right)\quad\text{and}\quad 
\bar f_0(\wh\Omega^\mS_2)=a+\frac{7}{2}\left(\frac{b}{4}\right)+ \frac{5}{2}\left(\frac{c}{4}\right)
$$
and thereby
\begin{equation}\label{eq:delta}
(\delta_m^{\mS})^2=V (f_0 | \wh\Omega^\mS_1)=V (f_0 | \wh\Omega^\mS_2)=\frac{1}{4}\frac{b^2}{16}+\frac{5}{4}\frac{c^2}{16}.
\end{equation}
From the expression \eqref{eq:delta} we can immediately see the connection to the beta-min conditions.
When the signal in the direction of $x_2$ is large enough, i.e. $c>16/\sqrt{5}M\varepsilon$, our identifiability condition will be satisfied.  
\end{example}
}
 
{The second  sufficient condition needed for methods such as the LASSO to fully recover $\mS_0$ is ``irrepresentability" \citep{zhao2006model, van2009conditions}. This condition restricts the amount of correlation between (active and non-active) covariates by imposing a regularization constraint on the magnitudes of regression coefficients of the inactive predictors onto the active ones.}
Here, we generalize the notion of irrepresentability to the non-parametric setup. 
Consider an underfitting model $\mS=\mS_1\cup\mS_2\not\supset\mS_0$, where $\mS_1\subset\mS_0$ are true positives and $\mS_2$ is a possibly empty set of  false positives, i.e. $\mS_2\cap\mS_0=\emptyset$. We define 
\begin{equation}\label{rho}
\rho_n^\mS\equiv
\frac{1}{n}\sum_{i=1}^n[f_0(\x_i)-f^{\mS_1}_{\wh\mT,\wh\b}(\x_i)][f^{\mS}_{\wh\mT,\wh\b}(\x_i)-f^{\mS_1}_{\wh\mT,\wh\b}(\x_i)],
\end{equation}
the sample covariance  between the surplus signals in $f_0$ and $f^{\mS}_{\wh\mT,\wh\b}$ obtained by removing the effect of $f^{\mS_1}_{\wh\mT,\wh\b}$. This quantity will be large if
noise covariates inside $\mS_2$ can  compensate for the missed true covariates in $\mS_0\backslash\mS_1$, i.e. when the true and fake covariates are strongly correlated.  To obviate this substitution effect, we introduce the following  { nonparametric ``irrepresentability"condition.
Similarly as in \cite{zhao2006model}, we require that 
``the total amount of an irrelevant covariate represented by the covariates in the true model" is small.
}
 \begin{definition}(Irrepresentability)\label{ass:irrep}
We say that $\varepsilon$-{\sl  irrepresentability} holds for $f_0$ and $\mS_0$ if, for some $M>0$, we have
$
\sup_{\mS\not\supset\mS_0}|\rho_n^\mS|<\frac{M}{2}\varepsilon,
$
where $\rho_n^{\mS}$ was defined in \eqref{rho}.
 \end{definition}

It follows from Lemma \ref{lemma:approx} (Appendix) that under the irrepresentability and identifiability conditions (Definition \ref{ass:identify} and \ref{ass:irrep}), we obtain
\begin{equation}\label{gap_final}
\inf\limits_{\mS\not\supset\mS_0}\inf_{f_{\mT,\b}\in\mF_\mS}\|f_{\mT,\b}-f_0\|_n>M\,\varepsilon.
\end{equation}
This condition essentially states that {\sl all} models that miss {\sl at least one} active covariate (i.e. not only subsets of the true model) have a large separation gap.

The following theorem characterizes variable selection consistency of spike-and-tree posterior distributions. Namely, the posterior distribution over the model index is shown to concentrate on the true model $\mS_0$.
One additional assumption is needed to make sure that the (fixed) design $\mX=\{\x_1,\dots,\x_n\}$  is sufficiently regular.
\citet{rockova2017posterior} define the notion of a fixed $\mS_0$-regular design in terms of cell diameters of a $k$-$d$ tree partition  (Definition 3.3). This assumption essentially excludes outliers, making sure that the data cloud is spread evenly in active directions (while permitting correlation between covariates). 

\begin{theorem}\label{thm:consist}
Assume $f_0\in\Ha_p\cap\mC(\mS_0)$ for some $\alpha\in(0,1]$ and $\mS_0\subset\{1,\dots,p\}$ with $q_0=|\mS_0|$ and $\|f_0\|_\infty\lesssim B$.  Denote with 
$\wt\varepsilon_n= C_\varepsilon\,n^{-\alpha/(2\alpha+q_n)}\sqrt{\log n}$,
 where $q_n=C_q  \lceil n\,\varepsilon_{n,\mS_0}^2/\log p\rceil$ for some $C_q>0$, and  assume  $q_0\log p\leq n^{q_0/(2\alpha+q_0)}$ with $2\leq q_0=\mathcal{O}(1)$ as $n\rightarrow\infty$. Assume
  that (a) $\mS_0$ is $(f_0,\wt\varepsilon_n)$-{\sl identifiable},  (b) $\wt\varepsilon_n$-irrepresentability holds and that (c) the design $\mX$ is $\mS_0$-regular. 
Under the {\sl spike-and-tree} prior comprising   (with $T=1$)
\eqref{prior:tree},\eqref{prior:beta2},\eqref{model_prior} with $C>2$ and  \eqref{prior:K}, we  have
 $$
 \Pi[\mS=\mS_0\C \Y^{(n)}]\rightarrow 1\quad\text{in $\P_{f_0}^{(n)}$-probability as $n\rightarrow\infty$}.
 $$
\end{theorem}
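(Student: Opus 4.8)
The plan is to verify the general model-selection conditions of \cite{yang2017bayesian} for the spike-and-tree posterior and then bound the posterior odds against the truth. Writing $m_\mS(\Y^{(n)})=\int\pi(\Y^{(n)}\C f)\,\d\Pi_\mS(f)$ for the evidence of model $\mS$ and dividing through by $\pi(\Y^{(n)}\C f_0)$, consistency reduces to showing
\[
\sum_{\mS\neq\mS_0}\frac{\pi(\mS)}{\pi(\mS_0)}\,\frac{\wt m_\mS}{\wt m_{\mS_0}}\To 0\quad\text{in }\P_{f_0}^{(n)}\text{-probability},
\]
where $\wt m_\mS=\int\exp\!\big(\sum_{i=1}^n\varepsilon_i(f-f_0)(\x_i)-\tfrac n2\|f-f_0\|_n^2\big)\d\Pi_\mS(f)$ is the normalized evidence (using $\sigma^2=1$). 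I would split the sum into underfitting models $\G_{\mS\not\supset\mS_0}$ and overfitting models $\G_{\mS\supset\mS_0}$ and control the two contributions by different mechanisms: an approximation gap for the former and the prior complexity penalty for the latter.

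First I would lower bound the denominator. Restricting the integral defining $\wt m_{\mS_0}$ to a $\|\cdot\|_n$-ball $B$ of radius $\varepsilon_{n,\mS_0}$ around the best $K_{\mS_0}$-leaf tree approximation of $f_0$, the near-minimax approximation theory for $\mS_0$-regular designs in \cite{rockova2017posterior} guarantees such a tree exists, the quadratic term contributes $\gtrsim -n\,\varepsilon_{n,\mS_0}^2$, and the Gaussian linear term concentrates at scale $\sqrt n\,\varepsilon_{n,\mS_0}\ll n\,\varepsilon_{n,\mS_0}^2$. The Gaussian product prior \eqref{prior:beta2} on the $K_{\mS_0}$ heights places mass at least $\e^{-c' n\,\varepsilon_{n,\mS_0}^2}$ on $B$, so on an event of probability tending to one $\wt m_{\mS_0}\gtrsim\e^{-c'' n\,\varepsilon_{n,\mS_0}^2}$.

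For $\mS\not\supset\mS_0$ I would use the separation \eqref{gap_final}, which under the identifiability and irrepresentability assumptions forces $\|f-f_0\|_n>M\wt\varepsilon_n$ for every $f$ in the support of $\Pi_\mS$. Splitting the exponent as $\sum_i\varepsilon_i(f-f_0)(\x_i)-\tfrac n4\|f-f_0\|_n^2-\tfrac n4\|f-f_0\|_n^2$ and bounding the supremum of the Gaussian empirical process over the class of $K_\mS$-leaf trees on $|\mS|$ coordinates by its metric entropy (of order $K_\mS\log(np)$, again from \cite{rockova2017posterior}), I would obtain $\wt m_\mS\lesssim\e^{-\frac{M^2}{4}n\wt\varepsilon_n^2+cK_\mS\log(np)}$. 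Combining with the lower bound on $\wt m_{\mS_0}$, the prior ratio, and the count $\binom{p}{|\mS|}\le\e^{|\mS|\log p}$ of such models, the $|\mS|\log p$ term inside \eqref{model_prior} cancels the combinatorial factor while the gap (with $M$ chosen large) dominates $n\,\varepsilon_{n,\mS_0}^2\le n\wt\varepsilon_n^2$, so the underfitting sum vanishes.

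For $\mS\supset\mS_0$ no gap is available, since $f_0\in\mC(\mS_0)\subseteq\mC(\mS)$, and control must come from the prior penalty outrunning the extra flexibility of larger trees. I would establish a uniform evidence upper bound $\wt m_\mS\lesssim\e^{c_1 n\,\varepsilon_{n,\mS}^2}$ — reflecting that a model with $K_\mS$ leaves can inflate the evidence by at most its complexity — via exponentially powerful tests against functions far from $f_0$ (equivalently, a chaining bound on the Gaussian empirical process over the $K_\mS$-leaf sieve). The model prior \eqref{model_prior} then contributes $\pi(\mS)/\pi(\mS_0)\lesssim\e^{-(C/C_\varepsilon^2)\,n\,\varepsilon_{n,\mS}^2}$, so with $C>2$ this penalty dominates $\e^{c_1 n\,\varepsilon_{n,\mS}^2+c'' n\,\varepsilon_{n,\mS_0}^2}$, and after multiplying by the number $\binom{p-q_0}{|\mS|-q_0}$ of overfitting models of each size and summing over $|\mS|\le q_n$, the overfitting contribution vanishes as well. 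The main obstacle I anticipate is precisely this uniform evidence bound for overfitting models: making it tight enough that the Occam penalty with $C>2$ provably beats it requires delicate entropy and Gaussian-concentration control of the supremum of the empirical process over the entire tree sieve, and careful verification that $K_\mS\log(np)$ stays below $n\,\varepsilon_{n,\mS}^2$ uniformly in $|\mS|\le q_n$. Assembling the three bounds yields $\sum_{\mS\neq\mS_0}\Pi[\mS\C\Y^{(n)}]\To0$, i.e. $\Pi[\mS=\mS_0\C\Y^{(n)}]\To1$.
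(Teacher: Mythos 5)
Your proposal is correct and follows essentially the same route as the paper: the paper verifies the prior-mass, entropy, and (bypassed) anticoncentration conditions of Theorem 4 in \cite{yang2017bayesian}, which is exactly the decomposition you unroll by hand --- evidence lower bound at $\mS_0$ via the $k$-$d$ tree approximation of \cite{rockova2017posterior}, separation of underfitting models through \eqref{gap_final} (Lemma \ref{lemma:approx}), restriction to $|\mS|\leq q_n$ (Lemma \ref{lemmadim}), and the prior penalty \eqref{model_prior} with $C>2$ absorbing the overfitting sum in place of an anticoncentration condition. The ``delicate'' uniform evidence bound you flag is handled in the paper by the explicit covering-number bound $\Delta(\mV_{\mS}^{K_{\mS}})(3Bn^{3/2}/C_\varepsilon)^{K_{\mS}}$, which gives $K_{\mS}\log n\lesssim n\varepsilon_{n,\mS}^2$ directly.
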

\proof Section \ref{sec:proof:tm1}

{
\begin{remark}\label{remark:identify}
The assumption  of $(f_0,\wt\varepsilon_n)$-identifiability pertains to the more traditional sufficient beta-min conditions for variable selection consistency in sparse high-dimensional models. For example,  \cite{castillo2015bayesian} in their Corollary 1 require that
$
\min_{i\in \mS_0} |\beta_i^0|\geq M\sqrt{\frac{{q_0\log p}}{n}},
$ for some ``large enough constant" $M>0$ that depends on the compatibility number  (see e.g. Definition 2.1 in \cite{castillo2015bayesian} of the design matrix $X$ (rescaled to have an $\|\cdot\|_2$ norm $\sqrt{n}$).  Our identifiability threshold also depends on the rate of convergence $\varepsilon_n$ (similarly as in \cite{castillo2015bayesian}). However, unlike in the linear models we measure the signal strength in a non-parametric way.
Lastly, note that the identifiability gap $\wt\varepsilon_n$ in Theorem \ref{thm:consist} is a bit larger than the near-minimax rate $\varepsilon_{n,\mS_0}$.  This requirement  will be relaxed in the next section, where  $\alpha$ will be treated as  unknown.
\end{remark}
}

For $iid$ models, \citet{ghosal2008nonparametric} considered the problem of nonparametric Bayesian model selection and averaging and characterized conditions under which the posterior achieves adaptive rates of convergence. The authors also study the posterior distribution of the model index, showing that  it puts a negligible weight on models that are bigger than the optimal one.   \citet{yang2017bayesian} characterized similar conditions for the non-$iid$ case, see Section \ref{sec:proof:tm1} for more details.

{
\begin{remark}(Theory for ABC)
It is worth pointing out that Theorem \ref{thm:consist} is obtained for the {\em actual} posterior $\pi(\mS\C\Y^{(n)})$, not the ABC posterior. Theory for  ABC recently started emerging with the first results focussing on ABC bias \citep{barber2015rate}, consistency and asymptotic normality \citep{martin2014approximate, frazier2018asymptotic, frazier2020model} and on convergence of the posterior mean \citep{li2018convergence}.  For our non-parametric regression scenario, we can conclude (variable selection) consistency for ABC Bayesian forests under the assumption that the residual variance $\sigma^2$ decreases with the sample size (as is typical in the Gaussian sequence model). In particular, Theorem \ref{thm:abc1} in Supplemental Materials  (Section \ref{sec:theory_ABC})  shows that the ABC posterior concentrates at the rate  $\lambda_n=4\epsilon^T_n/3+1/\sqrt{n}$, where $\epsilon^T_n=\sqrt{2\log n/n}$ is the ABC tolerance level.
This result implies that the ABC posterior will {\em not} reward underfitting model as long as our identifiability and irrepresentability conditions are satisfied with $\varepsilon=\lambda_n$. Regarding over-fitting models,  an ABC analogue of Lemma 1.1 (Section 1.1.2 in Supplemental Materials)  implies that the ABC posterior probability of over-fitting models goes to zero, which concludes variable selection consistency of a (naive) ABC method. These considerations can be extended to ABC Bayesian Forests  with data splitting using the empirical expected posterior prior justification  in \eqref{eq:expected_prior}. More details are in Supplemental Materials (Section \ref{sec:theory_ABC}). 
\end{remark}}

{
\begin{remark}(Consistency of the Median Probability Model)\label{rem:mpm_consist}
In Section \ref{sec:demonstrate}, we used the median probability model rule which may not the same as the highest-posterior model whose consistency we have shown in Theorem \ref{thm:consist}. However, even when $p\rightarrow\infty$  it can be verified (as in Corollary 4.1 in \cite{narisetty2014bayesian}) that the median probability model is {\em also} consistent  under the same assumptions as Theorem \ref{thm:consist}.
In particular, $\P_{f_0}^{(n)}[\cap_{i=1}^{p} E_i]\rightarrow 1$ as $n\rightarrow\infty$ where $E_i=\{\Pi(\gamma_i=\gamma_i^0\C\Y^{(n)})>0.5\}$ and where $\gamma_i=\mathbb I(i\in\mS)$ are binary inclusion indicators and $\gamma_i^0=\mathbb I(i\in\mS_0)$.
\end{remark}}

\vspace{-0.7cm}
\subsection{The Case of Unknown $\alpha$}\label{sec:alpha_unknown}
\vspace{0.2cm}
The fact that the level $\alpha$ has to be known for the consistency to hold makes the result in Theorem \ref{thm:consist} somewhat theoretical.
In this section, we provide a joint consistency result for the unknown regularity level $K$ and, at the same time, the unknown subset $\mS_0$.
 Finding the optimal regularity level $K$, given $\mS_0$,  is a model selection problem of independent interest \citep{lafferty2001iterative}. Here, we  acknowledge uncertainty about {\sl both} $K$ and $\mS_0$ by assigning a joint prior distribution on $(K,\mS)$. Namely, we consider an analogue of \eqref{model_prior}, where  $n^{|\mS|/(2\alpha+|\mS|)}$  is now replaced with $K\log n$ (according to \eqref{prior:K}), i.e.
\begin{equation}\label{prior:joint}
\pi(K,\mS)\propto \e^{-C(K\log n\, \vee\,  |\mS|\log p)}\quad\text{for}\quad 1\leq K\leq n\quad\text{and}\quad \mS\subseteq\{1,\dots,p\}.
\end{equation}
This prior penalizes models with too many splits or too many covariates. We now  regard each model as a {\sl pair of indices} $(K,\mS)$, where
the ``true" model is characterized by $\bG_0=(K_{\mS_0},\mS_0)$ with $K_{\mS_0}$ defined in \eqref{prior:K}.
Again, we partition the model index set $\bG=\{(K,\mS):\mS\subseteq\{1,\dots,p\},1\leq K\leq n\}$ into (a) the true model $\bG_0$, (b)
 models that underfit $\bG_{\{\mS\not\supset\mS_0\}\cup \{K< K_{\mS_0}\}}$ (i.e. miss at least one covariate {
 \sl or} use less than the optimal number of splits),
and (c)  models that overfit  $\bG_{\{\mS\supset\mS_0\}\cap \{K\geq K_{\mS_0}\}}$ (i.e. use too many variables and splits).

We combine the identifiability and irrepresentability conditions into one as follows:
\begin{equation}\label{gap_new}
\inf\limits_{\{\mS\not\supset\mS_0\}\cup \{K< K_{\mS_0}\}}\inf_{f_{\mT,\b}\in\mF_\mS(K)}\|f_{\mT,\b}-f_0\|_n>M\,\varepsilon_{n,\mS_0}
\end{equation}
for some $M>1$, where $\mF_\mS(K)$ consists of all trees with $K$ bottom leaves and splitting variables $\mS$. 
This condition is an analogue of \eqref{gap_final}, essentially stating that one cannot approximate  $f_0$ with an error  smaller than  a multiple of the near-minimax rate  using underfitting models.

\begin{theorem}\label{thm:consist2}
Assume $f_0\in\Ha_p\cap\mC(\mS_0)$ for some $\alpha\in(0,1]$ and $\mS_0\subset\{1,\dots,p\}$ such that $|\mS_0|=q_0$ and $\|f_0\|_\infty\lesssim B$.  
Assume  $q_0\log p\leq n^{q_0/(2\alpha+q_0)}$ and $2\leq q_0=\mathcal{O}(1)$ as $n\rightarrow\infty$. Furthermore, assume
  that the design $\mX$ is $\mS_0$-regular and that  \eqref{gap_new} holds.
Under the {\sl spike-and-tree} prior comprising  (with $T=1$)
  \eqref{prior:tree}, \eqref{prior:beta2} and \eqref{prior:joint} for $C>3$, we  have
 $$
 \Pi\left[\{\mS=\mS_0\}\,\cap\, \{K_{\mS_0}\leq K\leq K_n\} \,\Big|\, \Y^{(n)}\right]\rightarrow 1\quad\text{in $\P_{f_0}^{(n)}$-probability as $n\rightarrow\infty$},
 $$
 where $K_{\mS_0}$ was defined in \eqref{prior:K} and  $K_n=\lceil\bar C\,n\,\varepsilon_{n,\mS_0}^2/\log n\rceil$ for some $\bar C>C_K/C_\varepsilon^2$.
\end{theorem}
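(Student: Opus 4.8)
The plan is to follow the posterior model-selection recipe of \cite{yang2017bayesian}, refining the argument behind Theorem~\ref{thm:consist}, now carried out on the enlarged index set $\bG=\{(K,\mS)\}$. Writing $N_{K,\mS}=\int\pi(\Y^{(n)}\C f)\,\d\Pi_\mS(f\C K)$ for the within-model integrated likelihood, for any candidate $(K,\mS)$ I would bound the posterior odds against the truth $\bG_0=(K_{\mS_0},\mS_0)$,
\[
\frac{\Pi[(K,\mS)\C\Y^{(n)}]}{\Pi[\bG_0\C\Y^{(n)}]}=\frac{\pi(K,\mS)}{\pi(K_{\mS_0},\mS_0)}\cdot\frac{N_{K,\mS}}{N_{K_{\mS_0},\mS_0}},
\]
and show that the total odds over the complement of $\{\mS=\mS_0\}\cap\{K_{\mS_0}\leq K\leq K_n\}$ vanish in $\P_{f_0}^{(n)}$-probability. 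The starting ingredient is a lower bound on the denominator: since a tree with $K_{\mS_0}$ leaves on $\mS_0$ approximates $f_0$ within a multiple of $\varepsilon_{n,\mS_0}$ (\cite{rockova2017posterior}) and the Gaussian prior \eqref{prior:beta2} charges a neighborhood of the optimal step heights, a routine prior-mass argument yields $N_{K_{\mS_0},\mS_0}\geq\e^{-c_0\,n\,\varepsilon_{n,\mS_0}^2}\,\pi(\Y^{(n)}\C f_0)$ on an event of probability tending to one.

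For the underfitting block $\{\mS\not\supset\mS_0\}\cup\{K<K_{\mS_0}\}$ I would invoke the combined gap \eqref{gap_new}, by which every $f\in\mF_\mS(K)$ there obeys $\|f-f_0\|_n>M\,\varepsilon_{n,\mS_0}$. Bounding $N_{K,\mS}/\pi(\Y^{(n)}\C f_0)$ through the supremum of its integrand $\exp\{-\tfrac{n}{2}\|f-f_0\|_n^2+R_n(f)\}$, where $R_n(f)$ is a mean-zero Gaussian cross term, the deterministic part is at most $-\tfrac{M^2}{2}\,n\,\varepsilon_{n,\mS_0}^2$, while $\sup_f R_n(f)$ is controlled uniformly over the finitely many admissible partitions by a union bound whose log-cardinality is of order $K\log n+|\mS|\log p$. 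Taking $M>1$ large relative to this entropy makes each such odds at most $\e^{-c_1 n\,\varepsilon_{n,\mS_0}^2}$, so that summing against the prior \eqref{prior:joint} leaves the aggregate underfitting mass negligible.

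The overfitting block splits into two parts, each governed by the prior penalty in \eqref{prior:joint}. When $\mS\supsetneq\mS_0$, the integrated likelihood $N_{K,\mS}$ exceeds $N_{K_{\mS_0},\mS_0}$ by at most an entropy factor $\e^{c\,|\mS|\log p}$ from the extra noise-fitting leaves, the number of such subsets is at most $p^{|\mS|}$, and the prior ratio is $\e^{-C(|\mS|-q_0)\log p}$; with $C>3$ the net contribution decays geometrically in $|\mS|-q_0$. When $\mS=\mS_0$ but $K>K_n$, marginalizing the Gaussian heights together with the $\log\Delta(\mV\mE_{\mS_0}^{K})$ entropy of admissible partitions adds at most $\mathcal{O}(K\log n)$ to the log-integrated-likelihood, whereas the prior discounts such models by $\e^{-C K\log n}$; since $K_n\asymp n\,\varepsilon_{n,\mS_0}^2/\log n$ with $\bar C>C_K/C_\varepsilon^2$, the geometric series over $K>K_n$ again sums to a vanishing bound.

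Collecting the three estimates and dividing by the denominator lower bound gives $\Pi[(\{\mS=\mS_0\}\cap\{K_{\mS_0}\leq K\leq K_n\})^c\C\Y^{(n)}]\to0$ in $\P_{f_0}^{(n)}$-probability, which is the assertion. I expect the overfitting analysis to be the main obstacle: one must bound the ratio $N_{K,\mS}/N_{K_{\mS_0},\mS_0}$ uniformly over the enormous index set, pin down precisely how much extra fit the additional leaves and variables can extract from pure noise (a chi-square / Gaussian-maxima tail bound over all partitions simultaneously), and verify that the entropy so produced is strictly dominated by the prior penalty with $C>3$. This stronger constant, compared with $C>2$ in Theorem~\ref{thm:consist}, is exactly what absorbs the additional uncertainty from jointly searching over the regularity index $K$.
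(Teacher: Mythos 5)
Your three-block decomposition (truth / underfitting / overfitting) and the ingredients you invoke (prior mass at $(K_{\mS_0},\mS_0)$, the combined gap \eqref{gap_new} for underfitting, the prior penalty \eqref{prior:joint} for overfitting) match the skeleton of the paper's argument, which verifies the conditions of Theorem 4 of \cite{yang2017bayesian} rather than manipulating posterior odds directly. But there is a genuine gap in your overfitting block, and you have in fact flagged it yourself as the ``main obstacle'': the claim that $N_{K,\mS}$ exceeds $N_{K_{\mS_0},\mS_0}$ by at most a factor $\e^{c\,|\mS|\log p}$ (and, for $K>K_n$, by at most $\e^{\mathcal{O}(K\log n)}$) is asserted, not proved. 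For overfitting models one cannot build exponentially consistent tests against $f_0$, since $f_0$ lies in (the closure of) $\mF_\mS(K)$ for $\mS\supset\mS_0$; so controlling how much extra fit the surplus leaves and variables extract from pure noise would require a uniform chi-square/Gaussian-maxima bound over all valid partitions \emph{and} over the continuum of step heights within each partition, and nothing in your sketch supplies it. Your entire theorem hinges on this unproven estimate.

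The paper sidesteps this difficulty entirely. For any collection $B$ of models, Fubini and the identity $\E_{f_0}\prod_i (p_f/p_{f_0})(Y_i)=1$ give $\E_{f_0}\bigl[\int_B\prod_i(p_f/p_{f_0})\,\d\Pi\bigr]=\Pi(B)$, so no bound on the integrated-likelihood \emph{gain} of overfitting models is ever needed: it suffices that the raw prior mass of the overfitting block is exponentially small relative to the prior concentration at the truth. This is exactly condition \eqref{bound13}, verified by summing \eqref{prior:joint} over $\{\mS\supset\mS_0\}\cap\{K\geq K_{\mS_0}\}$ and using $q_n\log p\asymp n\,\varepsilon_{n,\mS_0}^2$; combined with Lemma 1 of \cite{ghosal2007convergence} this kills the overfitting posterior mass. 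The same Fubini device underlies Lemma \ref{lemma:dim2}, the dimension-restriction step ($q<q_n$, $K<K_n$) that your proposal omits but that the paper needs before the entropy bounds \eqref{eq:cover} and the summability condition \eqref{bound11} can be verified on a sieve. Your underfitting block is closer to sound --- the gap \eqref{gap_new} plus an entropy-controlled supremum of the cross term is morally what the testing construction in YP17 delivers --- but as written the union bound over ``finitely many admissible partitions'' ignores the supremum over step heights, which is where the covering-number computation $\log N(\varepsilon_{n,\mS};\mF_\mS(K);\|\cdot\|_n)\lesssim K\log n$ actually enters. To repair the proposal, replace the overfitting likelihood-ratio bound with the prior-mass/Fubini argument and make the entropy control in the underfitting block explicit.
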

\proof Section \ref{sec:proof:thm2}

Note that both $K_{\mS_0}$ and $K_n$ are of the same  (optimal) order, where the marginal posterior distribution  $\Pi(K\C\Y^{(n)})$ squeezes inside these two quantities as $n\rightarrow\infty$.
\citet{lafferty2001iterative} provide a similar result for their RODEO method,  without the variable selection consistency part. \citet{yang2017bayesian} also provide a similar result for Gaussian processes,  without the regularity selection consistency part. Here,  we characterize {\sl joint}  consistency for both subset and regularity model selection. 

\vspace{-0.5cm}
\subsection{Variable Selection Consistency with Bayesian Forests}\label{sec:consist:forest}
\vspace{0.2cm}
Finally, we provide a variant of Theorem \ref{thm:consist2} for tree ensembles.
Each Bayesian forest (i.e. additive regression tree) model is characterized by a triplet $(\mS,T,\bm K)$, where $\mS$ is the active variable subset, $T\in\N$ is the number of trees and $\bm{K}=(K^1,\dots,K^T)'\in\N^T$ is a vector of the  bottom leave counts for the $T$ trees. Rate-optimality of Bayesian forests can be achieved for a wide variety of  priors, ranging from many weak learners (large $T$  and small $K^t$'s) to a few strong learners (small $T$ and large $K^t$'s) \citep{rockova2017posterior}. The optimality requirement is that   the {\sl total} number of leaves in the ensemble $\sum_{t=1}^TK^t$ behaves like $K_{\mS_0}$, defined earlier in \eqref{prior:K}.

 We thereby define  models in terms of equivalence classes  rather than individual triplets $(\mS,T,\bm K)$. We construct each equivalence class $E(Z)$ by  combining ensembles with the same number $Z$ of total leaves, i.e.
\begin{equation}\label{equivalance_class}
E(Z)=\bigcup_{T=1}^{\min\{Z,n\}}\left\{\bm K\in\N^T:\sum_{t=1}^TK^t=Z\right\}.
\end{equation}
The cardinality of $E(Z)$, denoted with $\Delta(E(Z))$, satisfies 
$\Delta(E(Z))\leq  Z!\, p(Z),
$ where $p(Z)$ is the partitioning number (i.e. the number of ways one can write $Z$ as a sum of positive integers). 
The ``true" model  $\bG_0=(\mS_0,E(K_{\mS_0}))$ consists of an equivalence class of forests that split on variables inside $\mS_0$ with a total number of  $K_{\mS_0}$ leaves.
Similarly as before, we define underfitting model classes $\bG_{\{\mS\not\supset\mS_0\}\cup \{E(Z): Z< K_{\mS_0}\}}$ and overfitting model classes $\bG_{\{\mS\supset\mS_0\}\cap \{E(Z): Z\geq K_{\mS_0}\}}$. Regarding the prior on $T$, similarly as \citet{rockova2017posterior}, we consider 
\begin{equation}\label{prior:T}
\pi(T)\propto \e^{-C_T\, T},\quad T=1,\dots, n,\quad\text{for}\quad C_T>0.
\end{equation}
Given $T$, we assign a joint prior over $\mS_0$ and  $\bm{K}\in\N^T$ as follows:
\begin{equation}\label{prior:joint2}
\pi(\mS,\bm K\C T)\propto \e^{-C\,\max\left\{|\mS|\log p\,;\, \sum_{t=1}^TK^t\log n\right\}} \quad\text{for}\quad C>1.
\end{equation}
We conclude this section with a model selection consistency result for Bayesian forests under the following identifiability condition
\begin{equation}\label{gap_new2}
\inf\limits_{\{\mS\not\supset\mS_0\}\cup \{E(Z): Z< K_{\mS_0}\}}\inf_{f_{\mE,\B}\in\mF_\mS(\bm K)}\|f_{\mE,\B}-f_0\|_n> M\,\varepsilon_{n,\mS_0},
\end{equation}
where $\mF_\mS(\bm K)$ denotes all forests $f_{\mE,\B}$ that split on variables $\mS$ and consist of $T$ trees with $\bm K=(K^1,\dots, K^T)'$ bottom leaves.

\begin{theorem}\label{thm:consist3}
Assume $f_0\in\Ha_p\cap\mC(\mS_0)$ for some $\alpha\in(0,1]$ and $\mS_0\subset\{1,\dots,p\}$ such that $|\mS_0|=q_0$ and $\|f_0\|_\infty\lesssim B$.  Assume  $q_0\log p\leq n^{q_0/(2\alpha+q_0)}$, where $2\leq q_0=\mathcal{O}(1)$ as $n\rightarrow\infty$. Furthermore, assume
  that the design is $\mS_0$-regular and that \eqref{gap_new2} holds.
Under the {\sl spike-and-forest} prior comprising  
 \eqref{prior:tree}, \eqref{prior:beta2}, \eqref{prior:T} and \eqref{prior:joint2}, we  have
 $$
 \Pi\left[\{\mS=\mS_0\}\,\cap\, \left\{K_{\mS_0}\leq \sum_{t=1}^TK^t\leq K_n\right\} \,\Big|\, \Y^{(n)}\right]\rightarrow 1\quad\text{in $\P_{f_0}^{(n)}$-probability as $n\rightarrow\infty$},
 $$
 where $K_{\mS_0}$ was defined in \eqref{prior:K}  and  $K_n=\lceil\bar C\,n\,\varepsilon_{n,\mS}^2/\log n\rceil$ for some $\bar C>C_K/C_\varepsilon^2.$
\end{theorem}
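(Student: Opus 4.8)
The plan is to follow the general model-selection framework of \cite{yang2017bayesian}, adapting the single-tree arguments behind Theorem \ref{thm:consist2} to the ensemble setting, where the principal new ingredient is controlling the combinatorial richness of forests sharing a common total leaf count $Z$. Writing the posterior probability of any model class as a ratio of evidence integrals via Bayes' formula, I would establish $\Pi(\bG\neq\bG_0\C\Y^{(n)})\to 0$ by lower-bounding the denominator through the true class $\bG_0=(\mS_0,E(K_{\mS_0}))$ and upper-bounding the numerator separately over the underfitting and overfitting classes. For the denominator, I would invoke the approximation theory of \cite{rockova2017posterior}: the equivalence class $E(K_{\mS_0})$ restricted to variables $\mS_0$ contains a forest approximating $f_0$ to within the near-minimax rate $\varepsilon_{n,\mS_0}$, so a standard Kullback--Leibler neighborhood argument, combined with the prior mass that \eqref{prior:T} and \eqref{prior:joint2} place on $\bG_0$, yields an evidence lower bound of order $\e^{-c\,n\varepsilon_{n,\mS_0}^2}$ on a $\P_{f_0}^{(n)}$-probable event.

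For the underfitting classes $\bG_{\{\mS\not\supset\mS_0\}\cup\{E(Z):Z<K_{\mS_0}\}}$, I would exploit the separation gap \eqref{gap_new2}, which forces every forest $f_{\mE,\B}\in\mF_\mS(\bm K)$ in these classes to remain a multiple of $\varepsilon_{n,\mS_0}$ away from $f_0$ in $\|\cdot\|_n$. Constructing exponentially powerful tests separating $f_0$ from the complement of an $\|\cdot\|_n$-ball (as in the Gaussian-regression testing bounds underpinning \cite{yang2017bayesian}), the numerator contribution of these classes is controlled by the test error $\e^{-c\,n\varepsilon_{n,\mS_0}^2}$ times a covering-number factor. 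I would verify that the metric entropy of the underfitting forest sieve at scale $\varepsilon_{n,\mS_0}$ stays of order $n\varepsilon_{n,\mS_0}^2$, so that the gap dominates and the posterior mass on underfitting models is negligible relative to the denominator.

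The overfitting classes $\bG_{\{\mS\supset\mS_0\}\cap\{E(Z):Z\geq K_{\mS_0}\}}$ form the crux, since such forests can approximate $f_0$ well and tests are ineffective; here I would rely entirely on the prior penalty. Using the cardinality bound $\Delta(E(Z))\leq Z!\,p(Z)$, the number of forest configurations in $E(Z)$ grows like $\e^{Z\log Z}$, and since $Z\leq K_n\lesssim n$ forces $\log Z\lesssim\log n$, this entropy is dominated by the leaf penalty $\e^{-C\,Z\log n}$ in \eqref{prior:joint2} whenever $C>1$, leaving a net decay $\e^{-(C-1)Z\log n}$ after absorbing the per-configuration evidence, which is only $\e^{c_1 Z}$ because the excess fitting capacity is tamed by the Gaussian step-height prior \eqref{prior:beta2}. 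The surplus-variable penalty $\e^{-C\,|\mS|\log p}$ simultaneously offsets the $\binom{p-q_0}{|\mS|-q_0}\leq p^{|\mS|-q_0}$ ways of adding false positives, while the geometric prior \eqref{prior:T} makes the sum over the number of trees $T$ finite. Summing geometrically over $Z\geq K_{\mS_0}$ and over supersets $\mS\supset\mS_0$, the aggregate overfitting numerator is negligible against the true-model evidence, which is precisely the mechanism bypassing the anti-concentration condition of Remark \ref{remark:antic}.

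The main obstacle I anticipate is the simultaneous bookkeeping in the overfitting step: one must verify that the factorial blow-up $Z!$ from distributing $Z$ leaves across an unknown number of trees, the summation over $T$, and the per-configuration evidence all remain dominated by the single penalty $\e^{-C\max\{|\mS|\log p,\,Z\log n\}}$, uniformly as $Z$ ranges up to $K_n$. Lining up the constants so that $(C-1)Z\log n$ strictly outpaces both the combinatorial entropy $Z\log Z$ and the denominator's $c\,n\varepsilon_{n,\mS_0}^2$ is the delicate part, and it is here that the forest argument genuinely departs from the single-tree Theorem \ref{thm:consist2}.
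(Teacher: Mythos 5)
Your proposal is correct and follows essentially the same route as the paper: both verify the Yang--Pati conditions by (i) lower-bounding the evidence through a single well-approximating $k$-$d$ tree in the class $E(K_{\mS_0})$, (ii) using the separation gap \eqref{gap_new2} plus entropy bounds to test away underfitting classes, and (iii) replacing anti-concentration by a direct prior-mass bound on overfitting classes, with the cardinality bound $\Delta(E(Z))\leq Z!\,p(Z)$ absorbed by the penalty $\e^{-C\max\{|\mS|\log p,\,Z\log n\}}$ and the geometric prior on $T$. The only piece you leave implicit is the preliminary posterior dimension-reduction step (showing $\Pi(\sum_t K^t\geq K_n\C\Y^{(n)})\to 0$ and $\Pi(q\geq q_n\C\Y^{(n)})\to 0$, using $\log p(Z)\sim\pi\sqrt{2Z/3}$) that justifies restricting all entropy calculations to the sieve, but the same prior-decay computation you describe delivers it.
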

\proof Section \ref{sec:proof:tm3}


\vspace{-0.3cm}
\section{Simulation Study}
\vspace{0.2cm}
{
We evaluate the performance of ABC Bayesian Forests on simulated data. We consider the following performance criteria: Precision $= 1-\text{FDP}= \frac{\text{TP}}{\text{TP}+\text{FP}}$, Power $= \frac{\text{TP}}{\text{TP}+\text{FN}}$(defined as the proportion of true signals discovered as such), 
{Hamming Distance (HD)= FP+FN} (where FP and FN denotes the number of false positives and false negatives, respectively) 
and the area under the ROC curve  (AUC). 
Traditionally, AUC assesses how well a classification method can differentiate between two classes in the absence of a clear decision boundary.  We use this criterion to assess variable importance since many of the considered selection methods are based on an importance measure and, as such,  do not have a clear decision boundary.

The synthetic data are generated from the model  \eqref{model}, where  $\bm x_i$'s for $i=1,\ldots, n$ are drawn independently from $N_p(0,\Sigma)$ with $\Sigma=(\rho_{ij})_{i,j=1}^{p,p}$. We make our comparisons under different combinations of $f_0$, $\sigma$ and $\Sigma$. 
In particular, we consider a relatively large noise level with  $\sigma=5$ ($\sigma=\sqrt 5$ for the linear setup) and
\begin{enumerate}
\item  medium equi-correlation $\rho_{ij}=0.5$ for $i\neq j$ with $\rho_{ii}=1$,
\item  high auto-correlation $\rho_{ij}=0.9^{\abs{i-j}}$.
\end{enumerate}
Regarding the mean function $f_0$, we consider four choices: (1) a linear setup with $f_0(\x_i)=x_{i1} +2 x_{i2}+3x_{i3}-2x_{i4}-x_{i5}$; (2) the Friedman setup as described in \eqref{mean_friedman}; 
{(3) a CART (tree-based) function $f_0(\x_i)$ generated from the first 5 covariates  using the \texttt{rpart} function in R; (4)  a simulated example from \citet{liang2018bayesian} (denoted with LLS hereafter) with $f_0(\x_i)=\frac{10x_{i2}}{1+x_{i1}^2}+5\sin (x_{i3}x_{i4}+2x_{i5})$.
For the auto-correlation case, we  permuted the covariates so that signals are not next to each other.

\begin{figure}[!t] 
\begin{center}
\includegraphics[width=0.9\textwidth,height=0.7\textwidth]{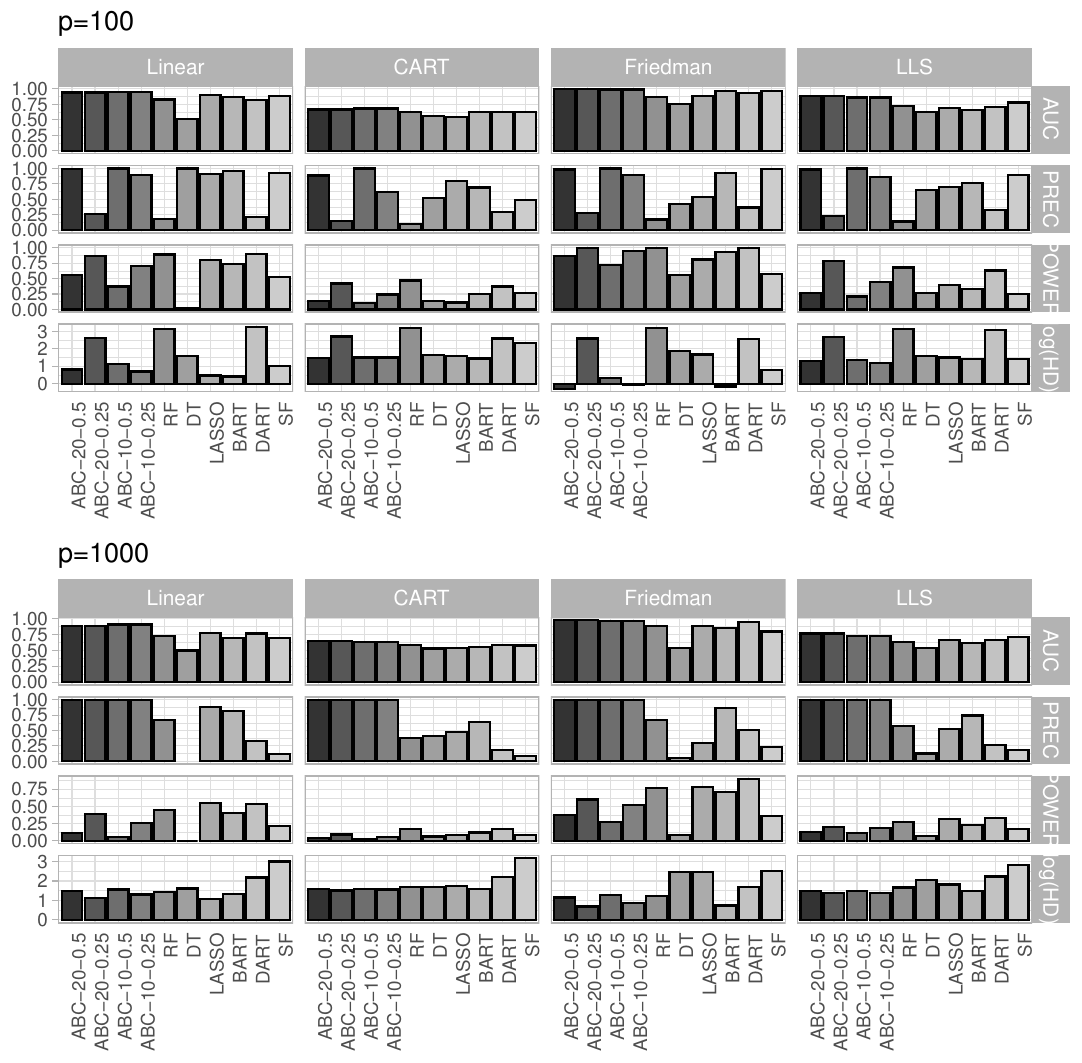}
\end{center}  
\noindent
\caption{\small Average variable selection performance under equicorrelation  $\rho_{ij}=0.5$ over $20$ simulations. Each panel corresponds to a different dimension $p\in\{100, 1000\}$. Each row reports a different statistic:  AUC is the area under the ROC curve, PREC $= 1-\text{FDP}= \frac{\text{TP}}{\text{TP}+\text{FP}}$, POWER $= \frac{\text{TP}}{\text{TP}+\text{FN}}$, $\log(\text{HD})=\log(\text{FP}+\text{FN})$.  ABC is run for $T\in\{10,20\}$ and cutoff $\in\{0.5,0.25\}$.
 Each column indicates a different data generating process. 
 } \label{fig:med_rho}
\end{figure}

 \begin{figure}[!t] 
\begin{center}
\includegraphics[width=0.9\textwidth,height=0.7\textwidth]{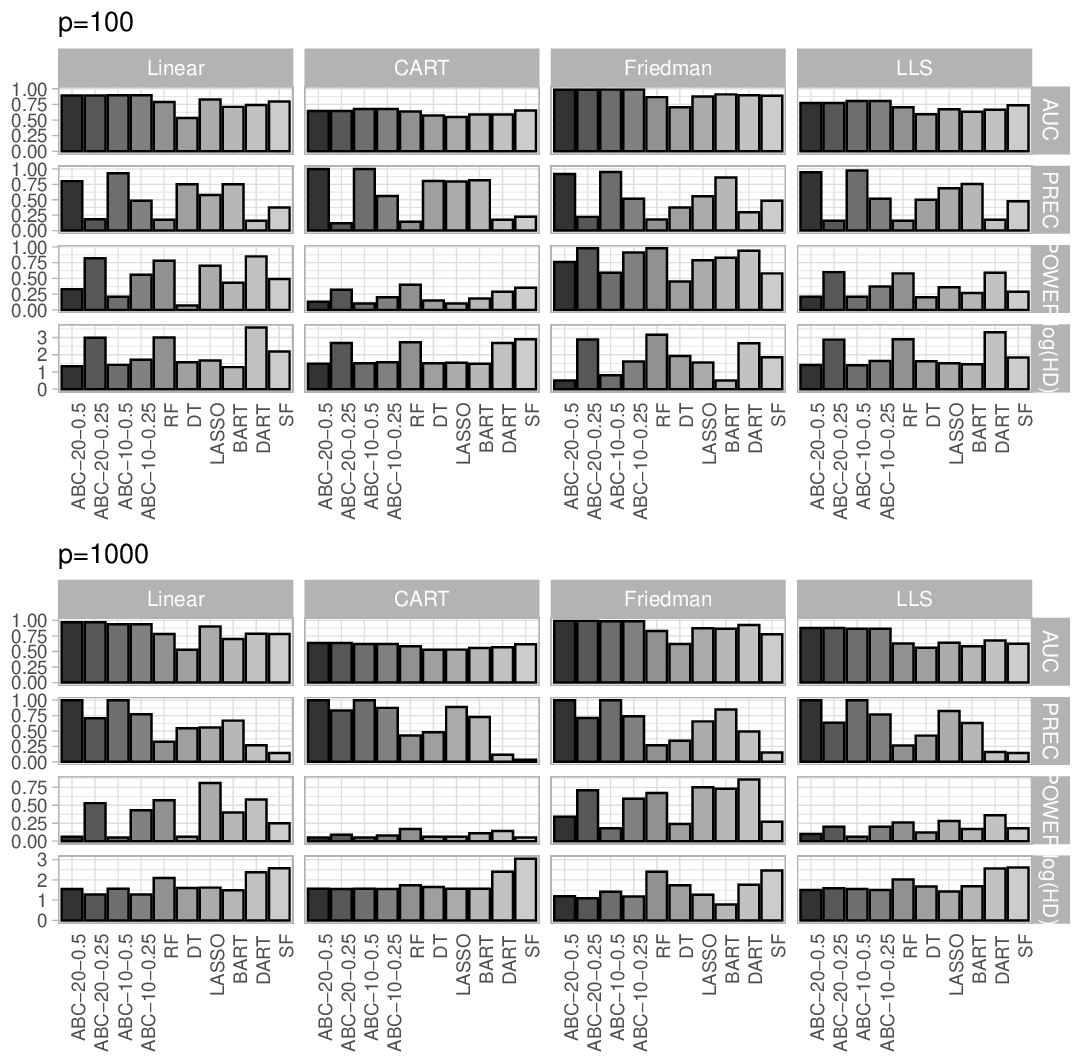}
\end{center}  
\noindent
\caption{\small Average variable selection performance under autocorrelation  $\rho_{ij}=0.9^{|i-j|}$ over $10$ simulations. Each panel corresponds to a different dimension $p\in\{100, 1000\}$. Each row reports a different statistics:  AUC is the area under the ROC curve, PREC $= 1-\text{FDP}= \frac{\text{TP}}{\text{TP}+\text{FP}}$, POWER $= \frac{\text{TP}}{\text{TP}+\text{FN}}$, $\log(\text{HD})=\log(\text{FP}+\text{FN})$. ABC is run for $T\in\{10,20\}$ and cutoff $\in\{0.5,0.25\}$.  
 Each column indicates a different data generating process. } \label{fig:high_rho}
\end{figure}

For each combination of settings, we repeat our simulation over $20$ different datasets assuming $n=500$ and $p\in\{100,1\,000\}$.  
We compare ABC Bayesian Forests with Random Forests (RF), 
Dynamic Trees (DT) of \citet{taddy2011dynamic}, BART \citep{chipman2010bart}, DART of  \citet{linero2018bayesian}, LASSO and Spike-and-Forests (the MCMC counterpart of ABC Bayesian Forests outlined in Section \ref{sec:mcmc} of the Supplemental Materials). ABC Bayesian Forests are trained with $M = 1\, 000$ ABC samples, where only a fraction of ABC samples (top 10\%) are kept in the reference table. 
The prior  $\pi(\mS)$ is the usual beta-binomial prior with $\theta\sim\mathcal{B}(1,1)$.   
Inside each ABC step, we sample a subset of size $s=n/2$ and draw a tree ensemble using the default Bayesian CART prior \citep{chipman1998bayesian} and $T\in\{10,20\}$ trees. 
{For each ABC sample, we draw the last BART sample after $B=200$ burnin MCMC iterations.} 
 A sensitivity analysis to the choice $s, T, B$ and $M$ is reported in the Supplemental Materials (Section  4). Two versions of BART (without ABC) were deployed using the  \texttt{R} package \texttt{BART}: (1) the standard BART from \citet{chipman2010bart} with  $T=20$ (as recommended in \citet{bleich2014variable}), and (2) the sparse version  DART of  \citet{linero2018bayesian} with a Dirichlet prior (\texttt{sparse=TRUE, a=0.5, b=1}) with $T=200$. Both versions are run with  $10\,000$ MCMC samples after $10\,000$ burn-in.  For LASSO, we use the \texttt{glmnet} package in \texttt{R} \citep{friedman2010regularization} using  the $1$-se rule to select the penalty $\lambda$. For Random Forests, we deploy the \texttt{randomForest} package in \texttt{R} \citep{liaw2002classification} using the default number of $500$ trees where variable importance is based on the difference in predictions (with and without each covariate) in out-of-bag samples. 

{To select variables with random forests, there are at least three commonly used strategies: (1) Recursive Feature Elimination (RFE)  implemented in the \texttt{caret} package with 5-fold cross-validation (as suggested in \citet{linero2018bayesian}); (2) truncating importance at  the $1-\alpha$ quantile of a standard normal distribution (as suggested by \citet{breiman2013online}); (3) truncating importance at the Bonferroni-corrected $(1-\alpha/p)$ quantile of a standard normal distribution \citep{bleich2014variable}. We report  the third method, which was seen to perform the best.  For BART and DART, we select those variables which have been split on inside a forest at least once on average. {Alternative strategies based on truncating inclusion probabilities \citep{linero2018bayesian} using data-adaptive thresholds \citep{bleich2014variable} did not perform better, in general.}  {For ABC, we report results for  two selection thresholds $0.5$ and $0.25$. For Spike-and-Forest (SF), we report the median probability model. }

{
The performance comparisons for variable selection are summarized in Figure \ref{fig:med_rho} (equi-correlation $\rho_{ij}=0.5$) and Figure \ref{fig:high_rho} (autocorrelation $\rho_{ij}=0.9^{|i-j|}$). These figures show that ABC has an advantage in terms of AUC, suggesting that ABC can rank variables more efficiently. While RF tend to have a higher power, they are plagued with false discoveries (i.e. smaller precision).
ABC Bayesian Forests, on the other hand, are seen to yield fewer false discoveries (i.e. higher precision) relative to the other procedures. The ABC threshold $0.5$ yields higher precision whereas $0.25$ yields higher power. }

}

\begin{table}
\caption{\label{tab:oos_mse} Average out-of-sample mean squared prediction error over $20$ independent validation datasets. 
ABC1 denotes predictions using ABC samples $f_{\mS,\bm{B}}^m$ and ABC2 uses ABC variable selection and runs BART ($T=200$) on the selected subset. $T$ designates the number of trees and $c$ is the selection threshold. The best performing method for each row is denoted in bold. }
\centering
\scalebox{0.6}{
\begin{tabular}{l | *{11}{c}}
\toprule\toprule
   &\bf  ABC2 &\bf  ABC1 & \bf ABC1  &\bf  ABC2 &\bf  ABC1 &\bf  ABC1 &\bf  RF  &\bf  RLT  & \bf DT    & \bf BART &\bf  DART \\
      & $T=20$ & $T=20,c=0.5$ & $T=20,c=0.25$ & $T=10$ & $T=10,c=0.5$ & $T=10,c=0.25$ &  &&&& \\
            &   \multicolumn{11}{c}{  \cellcolor[gray]{0.8}\bf  Equi-correlation  $ \rho_{ij}=0.5$ for $i\neq j$}  \\
    & \multicolumn{11}{c}{ \cellcolor[gray]{0.9} Linear }  \\ 
 $p=100$  & 5.56   & 5.58       & 5.84        & 5.60   & 5.84       & 5.55        & 5.63 & 5.45 & 5.92    & 5.49  & \bf{5.40}  \\
$p=1\,000$ & 5.79   & 6.15       & 5.73        & 5.86   & 6.28       & 5.95        & 5.83 & 5.70 & 6.04   & 5.82  & \bf{5.62}  \\
[0.05in] 
    & \multicolumn{11}{c}{\cellcolor[gray]{0.9} CART}  \\ 
$p=100$  & 34.21  & 34.63      & 37.19       & \bf{34.00}  & 36.10      & 35.81       & 34.21 & 34.64 & 34.61  & 35.48 & 35.57  \\
$p=1\,000$ & 32.00  & 34.27      & 35.72       & \bf{31.99}  & 33.93      & 33.17       & 32.30 & 32.40 & 33.08  & 33.77 & 34.04  \\
[0.05in] 
    & \multicolumn{11}{c}{\cellcolor[gray]{0.9} Friedman}  \\        
$p=100$  & 30.32  & 29.28      & 31.59       & 30.52  & 30.30      & \bf{29.03}       & 31.84 & 30.17 & 41.41  & 31.31 & \bf{29.03}  \\
$p=1\,000$ & 33.14  & 35.97      & 31.54       & 33.54  & 38.42      & 32.71       & 34.35 & 32.22 & 45.69  & 32.99 & \bf{29.42} \\
[0.05in] 
    & \multicolumn{11}{c}{\cellcolor[gray]{0.9} LLS}  \\    
$p=100$    & \bf{26.23}  & 27.00      & 28.70       & 26.25  & 26.90      & 27.36       & 26.80 & 26.46 & 28.51 &  27.42 & 27.42 \\
$p=1\,000$ & 27.37  & 26.98      & \bf{26.94}       & 27.38  & 27.07      & 27.02       & 27.18 & 26.68 & 30.66 &  28.21 & 27.49  \\
        & \multicolumn{11}{c}{\cellcolor[gray]{0.8}\bf  Auto-correlation  $\rho_{ij}=0.9^{|i-j|}$}  \\ 
         & \multicolumn{11}{c}{\cellcolor[gray]{0.9}Linear}  \\ 
$p=100$  & 6.17   & 6.29       & 6.37        & 6.20   & 6.25       & 6.18        & 6.37 & 6.09 & 6.77   & 6.17  & \bf{5.91}  \\
$p=1\,000$ & 6.39   & 6.44       & \bf{6.00}        & 6.47   & 6.21       & 6.13        & 6.55 & 6.20 & 7.06   & 6.53  & 6.42 \\    [0.05in] 

    & \multicolumn{11}{c}{\cellcolor[gray]{0.9}CART}  \\ 
$p=100$  & 33.80  & 37.72      & 37.28       & 33.83  & 36.78      & 36.61       & \bf{33.57} & 34.40 & 35.05 & 35.61 & 35.81  \\
$p=1\,000$ & 31.57  & 33.55      & 37.21       & \bf{31.52} & 33.52      & 37.43       & 31.63 & 31.88 & 32.22 &  33.11 & 33.43 \\[0.05in] 

  & \multicolumn{11}{c}{\cellcolor[gray]{0.9}Friedman}  \\
$p=100$  & 34.09  & 32.51      & 34.65       & 34.27  & 34.97      & 32.77       & 36.88 & 33.83 & 48.64 &  34.21 & \bf{30.36}  \\
$p=1\,000$ & 39.09  & 39.57      & 32.58       & 40.58  & 43.05      & 33.46       & 41.80 & 37.38 & 49.51 &  35.96 & \bf{30.81} \\
[0.05in] 
  & \multicolumn{11}{c}{\cellcolor[gray]{0.9}LLS}  \\
  
$p=100$   & 28.57  & \bf{27.94}     & 30.71       & 28.45  & 28.03      & 29.12       & 28.88 & 27.87 & 30.69 &  28.83 & 28.81  \\
$p=1\,000$ & 29.98  & \bf{28.25}      & 28.96       & 30.14  & 28.40      & 28.38       & 30.19 & 28.56 & 32.29 &  31.76 & 29.28 \\
\bottomrule
\bottomrule
\end{tabular}}
\end{table}

While ABC Bayesian Forests were designed to explore the posterior distribution over models, it is natural to ask whether they also yield reasonable prediction. There are various ways to perform prediction with our ABC method. One natural strategy is to save each draw $f_{\mS,\bm B}^m$ at the $m^{th}$ ABC iteration when $\epsilon_m<\epsilon$ and average out individual predictions obtained from these single draws. Alternatively, one could first select variables based on ABC Bayesian Forests and then run a separate BART method (using the default number of $T=200$ trees which is recommended for prediction) with the selected variables. 
Using both strategies, we report average out-of-sample mean squared prediction error, where the average is taken over $20$ independent validation samples generated from the same data generating process (Table \ref{tab:oos_mse}). We include both ABC predictions described above and denote them as ABC1 and ABC2, respectively, for the two different thresholds ($c\in\{0.5,0.25\}$)  and for the two choices of the number of trees ($T\in\{10,20\}$). 

The best method under each simulation setting is marked in bold.  When the data becomes more non-linear  (CART and LLS setups) and the correlation among variables gets stronger, ABC  tends to outperform the other methods. DART, on the other hand, works better for more linear datasets. 
Note that our default ABC implementation internally uses only a {\em small} number of $B=200$ burn-in iterations and a small number of trees. 
For prediction, it has been recommended that BART is deployed with a larger number of trees  \citep{chipman2010bart}.
In addition,  the ABC computation produces forest samples  $f_{\mS,B}^m$ which are from an {\em approximate} posterior. These two facts may affect resulting predictions which may not necessarily outperform BART (DART) across-the-board.  }


}
{

\section{HIV Data}
\vspace{0.2cm}
To further illustrate the usefulness of our approach, we consider a dataset described and analyzed in \cite{rhee_genotypic_2006} and \cite{barber_controlling_2015}. 	The data consists of genotype
and resistance measurements (log-decrease in susceptibility)  for three drug classes, i.e. protease inhibitors (PIs), nucleoside reverse transcriptase inhibitors (NRTIs) and non-nucleoside reverse transcriptase inhibitors (NNRTIs). The data is publicly available from the Stanford HIV Drug Resistance Database.\footnote{  \url{https://hivdb.stanford.edu/pages/published_analysis/genophenoPNAS2006/}} 
	
	\begin{figure}[!t]
		\begin{center}
				\subfigure[ABC]{
				
				\includegraphics[width=4.5cm, height=5cm]{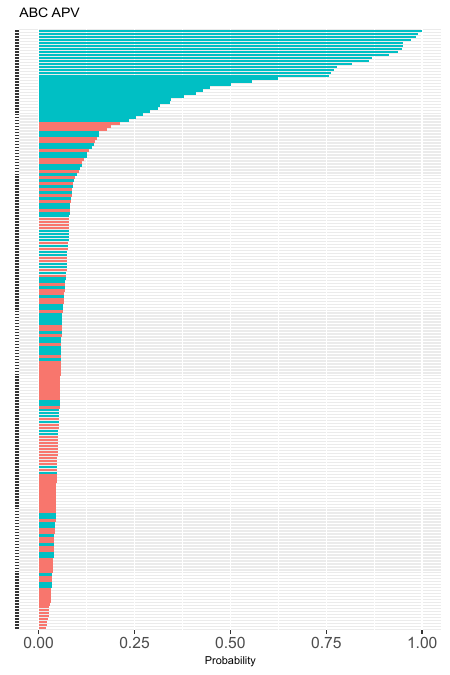}  }
			\subfigure[DART]{
				\centering
				\includegraphics[width=4.5cm, height=5cm]{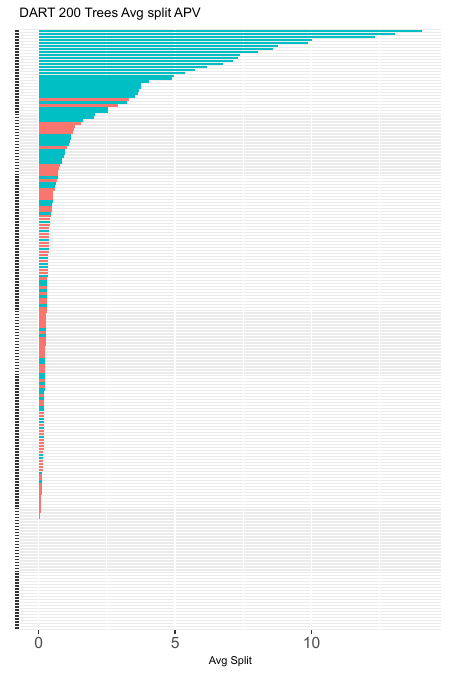} } 
			\subfigure[RF]{
				
				\includegraphics[width=4.2cm, height=5cm]{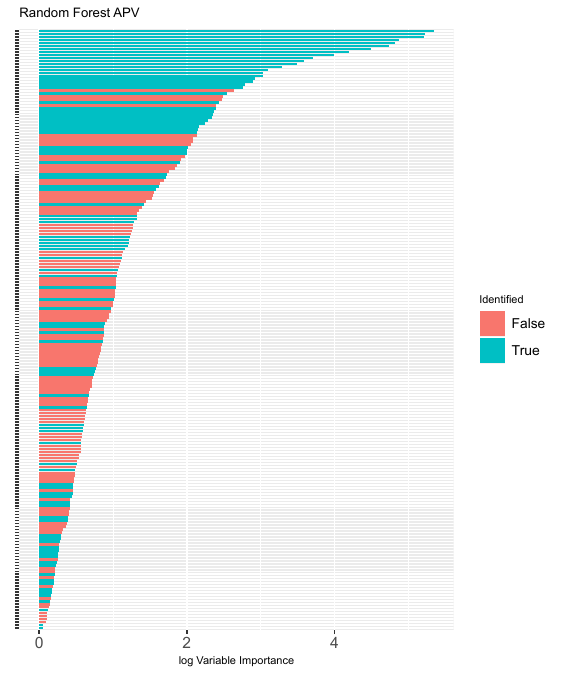} 
			} 
		\end{center}
		\caption{\small \label{fig:X3TC} A barplot of ordered importance measures (inclusion probabilities for ABC, importance measures for  DART and RF) for each of the $p=201$ mutations for the drug APV, where blue represents mutations found in \cite{soo-yon_rhee_hiv-1_2005}. (a) Inclusion probabilities are computed using  the top $1\,000$ out of $M=10\,000$ ABC samples; (b) Average split of DART with $20\,000$ MCMC iterations; (c) log variable importance of Random Forest with $500$ trees. }
	\end{figure}

The goal of this analysis is to identify possible non-polymorphic mutation positions
which result in  a log-fold increase of lab-tested drug resistance. 
The design matrix $X=(x_{ij})_{i,j=1}^{n,p}$ consists of  binary indicators  $x_{ij}\in \{0,1\}$  for  whether or not the $j^{th}$ mutation  occurred in the $i^{th}$ sample. As  in \cite{barber_controlling_2015}, only mutations that appear at least $3$  times  are taken into consideration.  
One appealing feature of this dataset is  the availability of a proxy to the `ground truth'. Indeed, in an independent experimental study, \cite{soo-yon_rhee_hiv-1_2005} identified mutations that are present at a significantly higher frequency in  patients who have been treated with each drug.  Similarly as \cite{barber_controlling_2015}, we treat this experimental data as an approximation to the truth for comparisons and for validation of our findings. 
	
	We run ABC with $M =10\,000$ iterations, where each internal BART sample is obtained after  $200$ burnin iterations with $20$ trees.
	The top 
	$1\,000$ ABC samples with the smallest $\epsilon_m$ are kept and used to compute  inclusion probabilities for each mutation. 
	For illustration,  we visualize results for  one of the PI drugs (APV)  and report  the results for all the drugs  in the  Supplemental Material (Section \ref{supplement:hiv}). The inclusion probabilities have been ordered  and plotted  in Figure \ref{fig:X3TC}, where
	the mutations experimentally validated by \cite{soo-yon_rhee_hiv-1_2005} (a proxy for true signals) are denoted in blue and the rest is in red. 
	For comparisons, we also included the importance measure (the average number of splits on each variable) from  DART  run with $20\,000$ MCMC iterations and $T=200$ trees as well as the importance measure (on a log scale) from Random Forests (RF) run with $500$ trees.

	\begin{figure}[t!]
		\begin{center}
	\subfigure[Adaptive Cut-off]{
		
		\includegraphics[width=.30\linewidth,height=5cm]{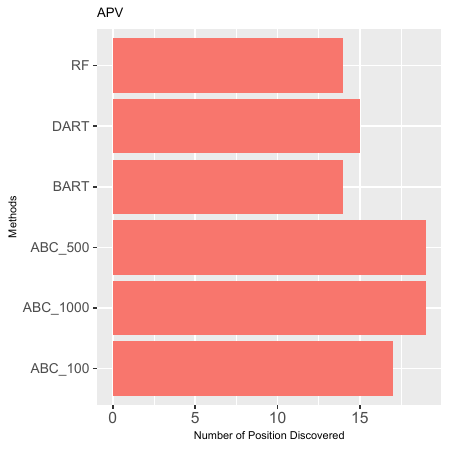}\label{fig2a}  }
	\subfigure[Automated Cut-off]{
		\centering
		\includegraphics[width=.33\linewidth,height=5cm]{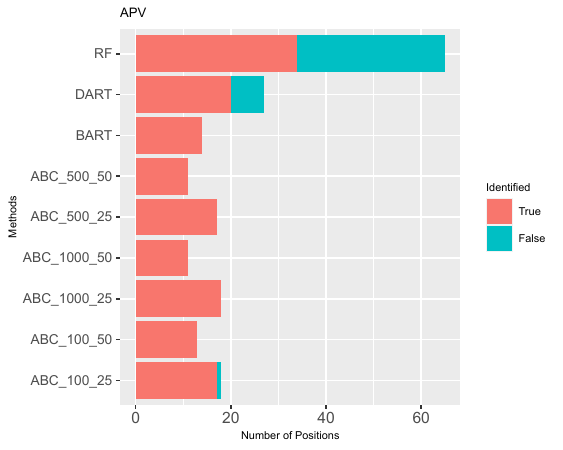}\label{fig2b}  } 
	\subfigure[AUC]{
		
		\includegraphics[width=.30\linewidth,height=5cm]{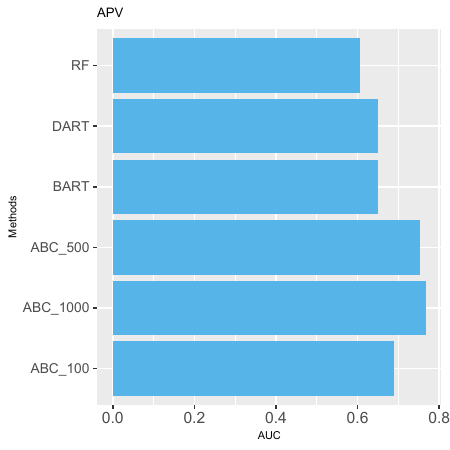} \label{fig2c} 
	} 
\end{center}
		\caption{\label{fig:X3TC_N} (a) The number of true discoveries using an adaptive cut-off;  (b) The number of true (red) and false (blue) discoveries using an automated cut-off; (c) The AUC of each method.  }
	\end{figure}	
	Figure \ref{fig:X3TC} reveals that ABC Bayesian Forests have a strong separation power, where experimentally validated mutations  generally have a  higher inclusion probability. Compared to DART and RF, ABC clearly stands out as being more effective in weeding out `noise'.  We gauge the strength of the signal/noise separation using several descriptive statistics. In these comparisons, we also consider plain BART method (using $T=20$ trees and $20\,000$ MCMC iterations) and ABC using the top $100$ and $500$ samples  with the smallest tolerance level $\epsilon_m$.
	Since the selection of the cut-off point is not obvious for BART and RF, we first select variables based on an adaptive cut-off point so that there are no false discoveries (i.e. the cut-off is the largest importance weight of  a {\em not} experimentally validated  mutation). From the plot of the number of `True' locations selected (displayed in Figure \ref{fig2a}) we can see that all three ABC implementations find more signal variables.
	Next, we choose the cut-off point in an automated way, where ABC importance probabilities are truncated at $0.5$ and $0.25$, BART and DART measures are truncated at one (i.e. the variable has been used on average at least once), and RF select variables using recursive feature elimination as explain in the previous section. Similarly to  \cite{barber_controlling_2015}, we report the number of 'True' locations and 'False' locations (Figure \ref{fig2b}).  RF selection is plagued with false discoveries and DART is not free from false identifications either. The ABC selection cutoff $0.5$ results in a more conservative selection, where lowering the cutoff point to $0.25$ yields more discoveries.
	Finally, from the plot of  the AUC values for all considered methods (Figure \ref{fig2c}), we conclude that ABC is better at separating the experimentally validated mutations from the rest even using a very few filtered ABC samples.

	}
	
\vspace{-0.5cm}
\section{Discussion}
This paper makes advancements at two fronts. One is the proposal of ABC Bayesian Forests for variable selection based on a new idea of data splitting, where a fraction of data is first used for ABC proposal and  the rest for ABC rejection. This new strategy increases ABC acceptance rate. We have shown that ABC Bayesian Forests are  highly competitive with (and often better than) other tree-based variable selection procedures. 
The second development is theoretical and concerns consistency for  variable and regularity selection.  Continuing the theoretical investigation of BART by \citet{rockova2017posterior}, we proposed new complexity priors which jointly penalize model dimensionality and tree size. We have shown joint consistency for variable  {\sl and} regularity selection  when the level of smoothness is unknown and no greater than 1. Our results are the first model selection consistency results for BART priors.

Our ABC sampling routine has the potential to be extended in various ways.
Sampling from  $\pi(f_{\mE,\B},\sigma^2\C\Y^{obs}_{\mI_m},\mS_m)$ in ABC Bayesian Forests is one way  of distilling $\Y^{obs}_{\mI_m}$ to propose a candidate ensemble  $f_{\mE,\B}^m$.
We noticed that the ABC acceptance rate can be further improved by replacing a randomly sampled tree with a fitted tree. Indeed,  instead of drawing from $\pi(f_{\mE,\B},\sigma^2\C\Y^{obs}_{\mI_m},\mS)$, one can {\sl fit} a tree $\wh f_{\mT,\b}^m$ to $\Y^{obs}_{\mI_m}$ using recursive partitioning algorithms (such as the \texttt{rpart} R package of \citet{therneau2018package} or  with BART (by taking  the posterior mean estimate $\wh f_{\mE,\B}^m=\E[f_{\mE,\B}\C\Y^{obs}_{\mI_m},\mS]$). This variant, further referred to as ABC  Forest  Fit, is indirectly linked to other model-selection methods based on resampling.

\citet{felsenstein1985confidence} proposed a ``first-order bootstrap" to assess confidence of an estimated tree phylogeny. The idea was to construct a tree from each bootstrap sample and record the proportion of bootstrap trees that have a feature of interest (for us, this would be variables used for splits). \citet{efron1998problem} embedded this approach within a parametric bootstrap framework, linking the bootstrap  confidence level to both frequentist $p$-values  and Bayesian a posteriori model probabilities. The authors proposed a second-order extension by reweighting the first-order resamples according to a simple importance sampling scheme. This second-order variant performs frequentist calibration of the a-posteriori probabilities and amounts to performing Bayesian analysis with Welch-Peers uninformative priors. \citet{efron2012bayesian} further develops the connection between parametric Bootstrap and posterior sampling through reweighting in exponential family models. Using non-parametric bootstrap ideas, \citet{newton1994approximate} introduce the weighted likelihood bootstrap (WLB) to sample from approximate posterior distributions. The WLB samples are obtained by maximum reweighted likelihood estimation with random weights. Such posterior sampling can be beneficial when, for instance, maximization is easier than Gibbs sampling from conditionals. 
In a similar spirit, our  ABC Forest Fit variant would perform optimization (instead of sampling)  on a random subset of the dataset to obtain a candidate tree/ensemble.

	It is worth pointing out that  $\wh f_{\mE,\B}^m$ does not  necessarily have to be a tree/forest. 
	We suggest trees because they are are easily trainable and produce stable results using traditional software packages. In principle, however, this method could be deployed in tandem with other non-parametric methods, such as deep learning, to perform variable selection.

\section*{Acknowledgments}
This work was supported by the James S. Kemper Foundation Faculty Research Fund at the University of Chicago Booth School of Business.

\bibliographystyle{rss}
\bibliography{tree}

\pagebreak
\
\\

\noindent \textbf{\huge Supplemental Materials}

\bigskip
%

\newcommand{\beginsupplement}{%
\setcounter{section}{0}
\renewcommand{\thesection}{S.\arabic{section}}
        \setcounter{table}{0}
        \renewcommand{\thetable}{S\arabic{table}}%
        \setcounter{figure}{0}
        \renewcommand{\thefigure}{S\arabic{figure}}%
     }

\beginsupplement

\section{Theory}
\subsection{Proof of Theorem \ref{thm:consist}}\label{sec:proof:tm1}
We first review some notation used throughout this section and adapted from \cite{rockova2017posterior}.
Recall that $\Pi_{\mS}(\cdot)$ denotes the conditional distribution given the model $\mS$. 
Next,  $\mF_{\mS}(K)$ denotes a set of all  step functions $f_{\mT,\b}(\cdot)$ with $K$ steps that split on  covariates $\mS$ and $\|f_{\mT,\b}\|_\infty\leq B$. A tree partition is called valid when each tree splits on observed values $\mX=\{\x_1,\dots,\x_n\}$ and has nonempty cells. We denote with $\mV_\mS^K$ all valid trees obtained by splitting $K-1$ times along coordinates inside $\mS$. The number of such valid trees is denoted with $\Delta(\mV_\mS^K)$.
For a valid tree partition $\mT\in \mV_\mS^K$, we denote with $\mF(\mT)\subset \mF_{\mS}(K)$ all step functions supported on $\mT$.
We  prove Theorem \ref{thm:consist} by verifying conditions B1-B4 in Theorem 4 of \cite{yang2017bayesian} (further referred to as YP17).  We build on tools developed in \cite{rockova2017posterior} (further referred to as RP17).

\subsubsection{Prior Concentration Condition}
The first condition  pertains to prior concentration and consists of two parts:  (a)  the model prior mass condition and (b)  the prior concentration condition  in the parameter space under the true model.
Namely, we want to show that 
\begin{equation}\label{prior:mass:model}
\pi(\mS_0)\geq \e^{-n\,\varepsilon_{n,\mS_0}^2}
\end{equation}
and
\begin{equation}\label{prior:mass}
 \Pi_{\mS_0}\left(f_{\mT,\b}\in\mF_{\mS_0}(K):\|f_{\mT,\b}-f_0\|_n\leq\varepsilon_{n,\mS_0}\right)\geq \e^{-d\,n\,\varepsilon_{n,\mS_0}^2}
\end{equation}
for some $d>2$. The prior concentration \eqref{prior:mass:model} follows directly from the definition of model weights \eqref{model_prior} for $C\leq C_\varepsilon^2$ under our assumption  $q_0\log p<n^{q_0/(2\alpha+q_0)}$.

Regarding \eqref{prior:mass}, a variant of this condition  is verified in Section 8.2 of RP17 assuming that $K$ is random with a prior. It follows from their proof, however, that \eqref{prior:mass} holds if we fix $K$ at $K_{\mS_0}=\lfloor C_K/C_\varepsilon^2\, n\,\varepsilon_{n,\mS_0}^2/\log n\rfloor=2^{q_0s}$ for some $s\in\N$. The proof consists of (a) constructing a single approximating tree (i.e. the $k$-$d$ tree with $s=(\log_2 K_{\mS_0})/q_0$ cycles of splits on each coordinate in $\mS_0$) and showing that it has enough prior support. 
This tree exists under the assumption that the design is $\mS_0$-regular. From (8.5) of RP17, such tree approximates $f_0$ with an error bounded by a constant multiple of $\varepsilon_{n,\mS_0}$. The verification of \eqref{prior:mass} then follows directly from  RP17.

\subsubsection{Entropy Condition}
The second condition (B4 in the notation of YP17) entails controlling the complexity of over/underfitting models. In the sequel, we focus only on models with up to $q_n$ covariates, where $q_n=C_q\lceil n\,\varepsilon_{n,\mS_0}^2/\log p\rceil$. This restriction is justified by the following lemma.

\begin{lemma}\label{lemmadim}
Denote with $q_n=C_q\lceil n\,\varepsilon_{n,\mS_0}^2/\log p\rceil$. Under the assumptions of Theorem \ref{thm:consist}, we have
 \begin{equation}\label{lemma:dim1}
 \Pi(q\geq q_n\C\Y^{(n)})\rightarrow0
 \end{equation}
 in $\P_{f_0}^{(n)}$-probability as $n\rightarrow\infty$.
\end{lemma}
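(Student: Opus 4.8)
The plan is to write the posterior probability of over-dimensional models as a normalized ratio and to control the numerator and denominator separately. Writing $p_f^{(n)}$ for the density of $\Y^{(n)}$ under \eqref{model} with regression function $f$ and $\sigma^2=1$, and $\pi(\cdot)$ for the normalized model prior \eqref{model_prior}, I would begin from
\[
\Pi(q\geq q_n\C\Y^{(n)})=\frac{\sum_{|\mS|\geq q_n}\pi(\mS)\int \bigl(p_f^{(n)}/p_{f_0}^{(n)}\bigr)\,\d\Pi_\mS(f)}{\sum_{\mS}\pi(\mS)\int \bigl(p_f^{(n)}/p_{f_0}^{(n)}\bigr)\,\d\Pi_\mS(f)}\equiv\frac{N_n}{D_n}.
\]
The task is then to lower bound $D_n$ with high $\P_{f_0}^{(n)}$-probability and to show $N_n$ is exponentially smaller.

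For the denominator I would retain only the single term $\mS=\mS_0$ and invoke the two prior-mass facts already recorded, namely $\pi(\mS_0)\geq \e^{-n\,\varepsilon_{n,\mS_0}^2}$ from \eqref{prior:mass:model} and the parameter-space concentration \eqref{prior:mass}, together with the standard evidence lower bound (the Gaussian-regression form of the Ghosal--van der Vaart lemma used in YP17). Since in our setting the Kullback--Leibler divergence and variation between $p_f^{(n)}$ and $p_{f_0}^{(n)}$ are both of order $n\,\|f-f_0\|_n^2$, the set in \eqref{prior:mass} is a KL-type neighborhood, and the lemma yields, with $\P_{f_0}^{(n)}$-probability tending to one,
\[
D_n\geq \pi(\mS_0)\int \frac{p_f^{(n)}}{p_{f_0}^{(n)}}\,\d\Pi_{\mS_0}(f)\geq \e^{-C_1\,n\,\varepsilon_{n,\mS_0}^2}
\]
for a fixed constant $C_1>0$ absorbing the exponent $d$ from \eqref{prior:mass} and the slack from the lemma.

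For the numerator, the key observation is that $\E_{f_0}\int \bigl(p_f^{(n)}/p_{f_0}^{(n)}\bigr)\,\d\Pi_\mS(f)=1$ for every $\mS$ (likelihood ratios integrate to one under $p_{f_0}^{(n)}$, by Fubini), whence $\E_{f_0}N_n=\sum_{|\mS|\geq q_n}\pi(\mS)$. Setting $g(q):= n^{q/(2\alpha+q)}\log n\,\vee\,q\log p$, I would bound the prior tail using $g(q)\geq q\log p$ and $\binom{p}{q}\leq \e^{q\log p}$: the unnormalized tail is $\sum_{q\geq q_n}\binom{p}{q}\e^{-C g(q)}\leq\sum_{q\geq q_n}\e^{-(C-1)q\log p}\lesssim \e^{-(C-1)q_n\log p}$, while the normalizer is at least $\e^{-C g(q_0)}=\e^{-C\,n\,\varepsilon_{n,\mS_0}^2/C_\varepsilon^2}$ (using $q_0\log p\leq n^{q_0/(2\alpha+q_0)}$, so that $g(q_0)=n\,\varepsilon_{n,\mS_0}^2/C_\varepsilon^2$). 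As $q_n\log p\geq C_q\,n\,\varepsilon_{n,\mS_0}^2$ by the definition of $q_n$, this gives $\E_{f_0}N_n=\sum_{|\mS|\geq q_n}\pi(\mS)\leq \e^{-[(C-1)C_q-C/C_\varepsilon^2]\,n\,\varepsilon_{n,\mS_0}^2}\equiv \e^{-C_2\,n\,\varepsilon_{n,\mS_0}^2}$, where $C_2$ can be made an arbitrarily large multiple of $n\,\varepsilon_{n,\mS_0}^2$ by taking $C_q$ large.

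Markov's inequality then yields $N_n\leq \e^{-C_2'\,n\,\varepsilon_{n,\mS_0}^2}$ with probability tending to one for any $C_2'<C_2$, and intersecting with the denominator event gives $N_n/D_n\leq \e^{-(C_2'-C_1)\,n\,\varepsilon_{n,\mS_0}^2}\to 0$, provided $C_q$ (hence $C_2'$) is chosen large enough that $C_2'>C_1$. The main obstacle is the bookkeeping of constants: one must check that a single choice of $C_q$ simultaneously dominates the normalizing exponent $C g(q_0)=C\,n\,\varepsilon_{n,\mS_0}^2/C_\varepsilon^2$ and the evidence lower bound $C_1$, so that the surviving exponent is strictly negative. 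The evidence lower bound itself is routine once the Gaussian KL identities are in hand, and the prior tail estimate is immediate from the pointwise inequality $g(q)\geq q\log p$ combined with $C>2$.
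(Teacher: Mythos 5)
Your proof is correct and follows essentially the same route as the paper: the core of both arguments is the identical prior tail computation $\sum_{q\geq q_n}\binom{p}{q}\e^{-C(n^{q/(2\alpha+q)}\log n\,\vee\, q\log p)}\lesssim \e^{-(C-O(1))\,q_n\log p}$ compared against the prior concentration \eqref{prior:mass:model}--\eqref{prior:mass}, with $C_q$ taken large enough to dominate the constants. The only difference is that the paper delegates the reduction from posterior to prior mass to Lemma 1 of Ghosal and van der Vaart (2007), whereas you unfold that lemma inline (evidence lower bound for the denominator, Fubini plus Markov for the numerator) — same argument, cited versus spelled out.
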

\begin{proof}
First, we show that $\Pi(q\geq q_n)\e^{(d+2)n\,\varepsilon_{n,\mS_0}^2}\rightarrow 0$, where  $d>2$ is as in \eqref{prior:mass}. 
We can write 
\begin{align*}
\Pi(q>q_n)\e^{(d+2)n\varepsilon_{n,\mS_0}^2}&\lesssim\e^{(d+2)n\,\varepsilon_{n,\mS_0}^2} \sum_{k=q_n}^{p}{p\choose k}\e^{-C\,\times\max\{n^{k/(2\alpha+k)}\log n,k\log p\}}\\
&\leq  \e^{(d+2)n\,\varepsilon_{n,\mS_0}^2-(C-2)\, q_n\log p}=\e^{-n\,\varepsilon_{n,\mS_0}^2[(C-2)C_q- (d+2)]}. 
\label{sum2}
\end{align*}
The right hand side above goes to zero when $(C-2)C_q- (d+2)>0$.  This can be satisfied with $C>2$ and $C_q$ large enough. 
This fact, together with prior mass conditions \eqref{prior:mass} and \eqref{prior:mass:model}, yields \eqref{lemma:dim1} according to Lemma 1 of \citet{ghosal2007convergence}.
\end{proof}

Lemma \ref{lemmadim} essentially states that the posterior will not reward models whose dimensionality is larger than (or equal to) $q_n$. In our following considerations, we thus condition only models with less than $q_n$ variables.

We now verify that the  complexity of overfitting models $\mS\supset\mS_0$ is not too large in the sense that their global metric entropy satisfies
\begin{equation}\label{eq:cover}
\log N\left(\varepsilon_{n,\mS}\,;\,\mF_{\mS}(K_{\mS})\,;\,\|\cdot\|_n\right)\leq n\,\varepsilon_{n,\mS}^2.
\end{equation}
First, we note that for two tree step functions  $f_{\mT,\b_1}\in\mF(\mT)$ and $f_{\mT,\b_2} \in \mF(\mT)$ that have the same partition $\mT\in\mV_{\mS}^{K_\mS}$ and different step heights  $\b_1\in\R^{K_{\mS}}$ and $\b_2\in\R^{K_{\mS}}$, we have $\{\|f_{\mT,\b_1}-f_{\mT,\b_2}\|_n\leq \varepsilon_{n,\mS}\}\supset\{\|\b_1-\b_2\|_2\leq \varepsilon_{n,\mS}\}$. Furthermore,  noting that $\mF(\mT)=\{f_{\mT,\b}:\|f_{\mT,\b}\|_\infty\leq B \}\subset \{\b\in\R^{K_{\mS}}: \|\b\|_2\leq B\, \sqrt{n} \}$ we can write
$$
N\left(\varepsilon_{n,\mS}\,;\, \mF(\mT)\, ;\,\|\cdot\|_n\right)\leq \left(\frac{3\,B\,\sqrt{n}}{\varepsilon_{n,\mS}}\right)^{K_{\mS}}\leq \left({3\,B}\,n^{3/2}/C_\varepsilon\right)^{K_{\mS}},
$$
where we used the standard  $\varepsilon_{n,\mS}$ covering number of a $K_{\mS}$-Euclidean ball of a radius $B\,\sqrt{n}$ and the fact that $1/\varepsilon_{n,\mS}\leq 1/C_\varepsilon\times n^{\alpha/(2\alpha+|\mS|)}\leq 1/C_\varepsilon\times n$. Then we can write
 $$
 N\left(\varepsilon_{n,\mS}\,;\,\mF_{\mS}(K_{\mS})\,;\,\|\cdot\|_n\right)\leq \Delta(\mV_{\mS}^{K_{\mS}})  \left({3\,B}\,n^{3/2}/C_\varepsilon\right)^{K_{\mS}}.
 $$
Using Lemma 3.1 of Rockova and van der Pas (2017), we have $\Delta(\mV_{\mS}^{K_{\mS}}) \leq (K_{\mS}\,n\,|\mS|)^{K_{\mS}}$.

The overall log-covering number is then upper-bounded with (since $|\mS|\leq q_n\leq n $)
\begin{align}\label{entropy:overall}
&K_{\mS}\log \left({3\,B}\,n^3\,n^{3/2}\right)\lesssim K_{\mS}\log n\propto n\,\varepsilon_{n,\mS}^2.
\end{align}
This verifies the model complexity condition for overfitting models.
Next, we need to verify  \eqref{eq:cover} with $\varepsilon_{n,\mS}$ replaced by $\wt\varepsilon_n$ for  ``underfitting" models  $\mS\in \G_{\mS\not\supset\mS_0}$ where $|\mS|\leq q_n$.  This follows from the  same arguments as above and the fact that $\varepsilon_{n,\mS}\leq \wt\varepsilon_n$.
Finally, the  last requirement in Assumption B4 of YP17 is verifying that 
\begin{equation}\label{almost_last_one}
\sum_{\mS\not\supset\mS_0:|\mS|\leq q_n}\e^{-C_2\,n\,\wt\varepsilon_n^2}+\sum_{\mS\supset\mS_0:|\mS|\leq q_n}\e^{-C_2\,n\,\varepsilon^2_{n,\mS}}
\leq 1 
\end{equation}
for some large constant $C_2>0$. Since $\wt\varepsilon_n\geq\varepsilon_{n,\mS}> \varepsilon_{n,\mS_0}$ for any $\mS\supset\mS_0$ such that $|\mS|\leq q_n$, we can upper-bound the left-hand side above with
$$
\sum_{q=0}^{ q_n}\sum_{\mS:|\mS|=q} \e^{-C_2\,n\,\varepsilon^2_{n,\mS_0}}
\leq  \e^{-C_2\,n\,\varepsilon^2_{n,\mS_0}} \sum_{q=0}^{ q_n}{p\choose q}\leq \left(\frac{2\,\e\,p}{q_n}\right)^{q_n+1} \e^{-C_2\,n\varepsilon^2_{n,\mS_0}}
$$
From our definition of $q_n$, we have $q_n\log p\asymp n\, \varepsilon_{n,\mS_0}^2$ and \eqref{almost_last_one} will be satisfied  for a large enough $C_2$.

\subsubsection{Prior Anticoncentration Condition}\label{sec:proof:antic}
Lastly, as one of the sufficient conditions for model selection consistency, we need to verify
\begin{equation}\label{last_one}
\sum_{\mS\supset\mS_0:|\mS|\leq q_n}\pi(\mS)\,\Pi_{\mS}\left(f_{\mT,\b}\in\mF_{\mS}(K_{\mS}): \|f_0-f_{\mT,\b}\|_n\leq M\, \varepsilon_{n,\mS}\right)\leq \e^{-H\, n\,\varepsilon_{n,\, \mS_0}^2}
\end{equation}
for some $H>0$.  Alternatively,  YP17 introduce the so-called ``anti-concentration condition"
$\Pi_{\mS}\left(f_{\mT,\b}\in\mF_{\mS}(K_{\mS}): \|f_0-f_{\mT,\b}\|_n\leq M\, \varepsilon_{n,\mS}\right)\leq \e^{-H\, n\,\varepsilon_{n,\, \mS_0}^2}$
 for overfitting models $\mS\supset\mS_0$ where $\varepsilon_{n,\mS}\geq \varepsilon_{n,\mS_0}$. This condition is needed to show that the posterior probability  of more complex models that contain the truth goes to zero.
 
 It turns out that this condition can be avoided with our choice of model weights $\pi(\mS)$ \citep{ghosal2008nonparametric}.
 We can verify \eqref{last_one} directly (without the anticoncentration condition) by upper-bounding the left hand side of \eqref{last_one} with
\begin{equation}\label{eq:anticoncentration}
\sum_{\mS\supset\mS_0:|\mS|\leq q_n}\pi(\mS)\leq \sum_{\mS\supset\mS_0:|\mS|\leq q_n} {\e^{-C\,n\,\varepsilon_{n,\mS}^2}}\leq \e^{-C\,n\,\varepsilon_{n,\mS_0}^2} \left(\frac{2\,\e\,p}{q_n}\right)^{q_n+1}.
\end{equation}
Since $q_n\log p\asymp n\, \varepsilon_{n,\mS_0}^2$, \eqref{last_one} holds for $H<C-1$.

\subsubsection{Identifiability}

Under the identifiability and irrepresentability assumptions \eqref{ass:identify} and \eqref{ass:irrep}, it turns out that we cannot approximate $f_0$ well enough with models that miss at least one covariate. This property is summarized in the following Lemma, which is a variant of  Proposition 1 of YP17.

\begin{lemma}\label{lemma:approx}
For $f_0\in\Ha_p\cap\mC(\mS_0)$, assume that $\mS_0$ is $(f_0,\varepsilon)$-{\sl identifiable} and that $\varepsilon$-irrepresentability holds. Then
$$
\inf\limits_{\mS\not\supset\mS_0}\inf_{f_{\mT,\b}\in\mF_\mS}\|f_0-f_{\mT,\b}\|_n>M\,\varepsilon.
$$
\end{lemma}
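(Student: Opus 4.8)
The plan is to fix an arbitrary underfitting model $\mS\not\supset\mS_0$ and to bound its approximation gap $\delta^{\mS}_n=\inf_{f_{\mT,\b}\in\mF_\mS}\|f_0-f_{\mT,\b}\|_n$ from below, uniformly in $\mS$. Write $\mS=\mS_1\cup\mS_2$ with $\mS_1=\mS\cap\mS_0$ (true positives) and $\mS_2=\mS\setminus\mS_0$ (false positives), and let $f^{\mS}_{\wh\mT,\wh\b}$ and $f^{\mS_1}_{\wh\mT,\wh\b}$ denote the $\|\cdot\|_n$-projections of $f_0$ onto $\mF_\mS$ and $\mF_{\mS_1}$. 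Since $\mS\not\supset\mS_0$ forces $\mS_1\subsetneq\mS_0$, the model can fit $f_0$ only by letting the noise directions $\mS_2$ stand in for a genuinely missing active direction, and the whole argument shows this substitution is too weak. First I would record the expansion
\[
\|f_0-f^{\mS}_{\wh\mT,\wh\b}\|_n^2=(\delta^{\mS_1}_n)^2-2\rho_n^\mS+\|f^{\mS}_{\wh\mT,\wh\b}-f^{\mS_1}_{\wh\mT,\wh\b}\|_n^2\geq (\delta^{\mS_1}_n)^2-2\rho_n^\mS,
\]
obtained by writing $f_0-f^{\mS}_{\wh\mT,\wh\b}=(f_0-f^{\mS_1}_{\wh\mT,\wh\b})-(f^{\mS}_{\wh\mT,\wh\b}-f^{\mS_1}_{\wh\mT,\wh\b})$ and identifying the cross term with $\rho_n^\mS$ from \eqref{rho}. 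Dropping the last nonnegative term isolates exactly the two quantities that the two assumptions control.

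The next step lower-bounds the signal deficit $\delta^{\mS_1}_n$ via identifiability. Because $\mS_1\subsetneq\mS_0$, there is an active index $i\in\mS_0\setminus\mS_1$, so $\mS_1\subseteq\mS_0\setminus\{i\}$. I would then establish the monotonicity $\delta^{\mS_1}_n\geq\delta^{\mS_0\setminus i}_n$ by a containment of function classes: every $K_{\mS_1}$-leaf tree splitting on $\mS_1$ represents the same function as some tree with the weakly larger leaf budget $K_{\mS_0\setminus i}$ splitting only on $\mS_0\setminus\{i\}$, obtained by padding with degenerate splits that keep step heights equal (valid since the $\mS_0$-regular design leaves room to subdivide cells when $K_\mS\ll n$). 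Hence $\mF_{\mS_1}\subseteq\mF_{\mS_0\setminus i}$ as function classes, so the infimum over the larger class is no larger, and identifiability \eqref{identify} gives $\delta^{\mS_1}_n\geq\delta^{\mS_0\setminus i}_n\geq\inf_{j\in\mS_0}\delta^{\mS_0\setminus j}_n>2M\varepsilon$.

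It then remains to combine. Definition \ref{ass:irrep} gives $\rho_n^\mS\leq\sup_{\mS\not\supset\mS_0}\rho_n^\mS<\tfrac{M}{2}\varepsilon$ uniformly, so the displayed inequality yields $\|f_0-f^{\mS}_{\wh\mT,\wh\b}\|_n^2>(2M\varepsilon)^2-2\rho_n^\mS$, and the thresholds $2M\varepsilon$ and $\tfrac M2\varepsilon$ are exactly what is needed to keep the right-hand side above $M^2\varepsilon^2$; taking square roots and then $\inf_{\mS\not\supset\mS_0}$ (legitimate because both bounds are uniform in $\mS$ and $i$) delivers \eqref{gap_final}. The degenerate case $\mS_2=\emptyset$ is immediate, since then $f^{\mS}_{\wh\mT,\wh\b}=f^{\mS_1}_{\wh\mT,\wh\b}$, $\rho_n^\mS=0$, and the bound reduces to identifiability alone. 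I expect the identifiability step to be the main obstacle: converting the single-deletion hypothesis \eqref{identify} into a bound for an \emph{arbitrary} proper subset $\mS_1$ hinges on the leaf-count padding comparison above, and one must check that enlarging the variable pool from $\mS_1$ to $\mS_0\setminus\{i\}$ together with the prescribed change in $K_\mS$ can only shrink the projection distance; the remaining constant bookkeeping matching $2M\varepsilon$, $\tfrac M2\varepsilon$ and the target $M\varepsilon$ is then routine.
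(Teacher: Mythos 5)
Your proposal is essentially the paper's own proof: the same decomposition $\mS=\mS_1\cup\mS_2$ into true and false positives, the same expansion of $\|f_0-\wh f^{\mS}\|_n^2$ isolating $(\delta_n^{\mS_1})^2$ and the cross term $\rho_n^{\mS}$, the same monotonicity of $\delta_n^{\mS_1}$ in $|\mS_0\setminus\mS_1|$ (which you justify more explicitly via padding trees so that $\mF_{\mS_1}$ embeds into $\mF_{\mS_0\setminus i}$, a step the paper merely asserts), and the same combination of the identifiability and irrepresentability thresholds. The one caveat, inherited from (and in fact handled more carefully than) the paper's own final display, is that the squared gap $(2M\varepsilon)^2$ and the linear cross-term bound $M\varepsilon$ live on different scales, so $4M^2\varepsilon^2-M\varepsilon>M^2\varepsilon^2$ is not automatic for small $\varepsilon$; this is a constant-bookkeeping wrinkle in Definitions~\ref{ass:identify} and \ref{ass:irrep} rather than a defect specific to your argument.
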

\begin{proof}
We decompose $\mS\not\supset\mS_0$ into true positives and false positives, i.e. $\mS=\mS_1\cup\mS_2$, where  $\mS_1\subset\mS_0$ and $\mS_2\cap\mS_0=\emptyset$. 
We denote with $\wh f^{\mS}$ the projection of $f_0$ onto $\mF_\mS$, omitting the subscripts $\wh\mT$ and $\wh\b$. 
With a slight abuse of notation we denote $\E(f,g)=\frac{1}{n}\sum_{i=1}^nf(\x_i)g(\x_i)$.
Then we can write
$$
\|f_0-\wh f^{\mS}\|_n^2=\|f_0-\wh f^{\mS_1} +\wh f^{\mS_1}- \wh f^{\mS}\|_n^2>\|f_0-\wh f^{\mS_1}\|_n^2-2 |\E[(f_0-\wh f^{\mS_1})(\wh f^{\mS}-\wh f^{\mS_1})]|,
$$
where
$
\E[(f_0-\wh f^{\mS_1})(\wh f^{\mS}-\wh f^{\mS_1})]
$
equals $\rho^{\mS}_n$  defined in \eqref{rho}. We note that $\delta_n^{\mS_1}$ is monotone increasing in  the number of false non-discoveries $|\mS_0\backslash \mS_1|$. The statement of the Lemma then follows from the fact that 
$
\|f_0-\wh f^{\mS}\|_n^2> \inf\limits_{\mS_1\subset\mS_0}\delta_n^{S_1}-2\sup\limits_{\mS\not\supset\mS_0}\rho_n^S>\inf\limits_{i\in\mS_0}\delta_n^{S_0\backslash i}-M\,\varepsilon> M\varepsilon.
$
\end{proof}

\subsection{Proof of Theorem \ref{thm:consist2}}\label{sec:proof:thm2}
We introduce some more notation. We denote with $\mF_{\mS}=\bigcup_{K=1}^n\mF_{\mS}(K)$  all valid trees that split on directions inside $\mS$ and we write $\Pi_{K,\mS}(\cdot)$ for the conditional prior, given $K$ and $\mS$.

Similarly as in Section \ref{sec:proof:tm1}, we verify the three conditions (Prior Concentration, Entropy, Prior Anti-concentration). 
The prior model concentration condition is again satisfied automatically from the definition of model weights in \eqref{prior:joint} and $K_{\mS_0}=\lfloor C_K/C_\varepsilon\,n\,\varepsilon_{n,\mS_0}^2/\log n\rfloor$. Namely,
\begin{equation}\label{prior_mass2}
\pi(K_{\mS_0},\mS_0)\propto \e^{-C\,\max\{C_K/C_\varepsilon n\,\varepsilon_{n,\mS_0}^2,q_0\log p\}}\geq \e^{-n\,\varepsilon_{n,\mS_0}^2},
\end{equation} 
for  $C_K<C_\varepsilon/C$, where we used the assumption $q_0\log p\leq n^{q_0/(2\alpha+q_0)}$. 
Next, the prior concentration in the parameter space associated with the true model
$$
\Pi_{K_{\mS_0},\mS_0}\left(f_{\mT,\b}\in\mF_{\mS_0}(K_{\mS_0}):\|f_{\mT,\b}-f_0\|_n\leq\varepsilon_{n,\mS_0}\right)\geq \e^{-d\,n\,\varepsilon_{n,\mS_0}^2}
$$
follows again from Section 8.2 of RP17.

For the entropy considerations, we focus only on models with up to $q_n$ covariates and up to $K_n$ splits, where $q_n=\lceil C_qn\,\varepsilon_{n,\mS_0}^2/\log p\rceil$ and $K_n=\lceil \bar C n\,\varepsilon_{n,\mS_0}^2/\log n\rceil$ were defined in Theorem \ref{thm:consist2}. This restriction is justified by the following Lemma.

\begin{lemma}\label{lemma:dim2}
Denote with $q_n=\lceil  C_q n\,\varepsilon_{n,\mS_0}^2/\log p\rceil$ and $K_n=\lceil  \bar C\, n\,\varepsilon_{n,\mS_0}^2/\log p\rceil$. Under the assumptions of Theorem \ref{thm:consist}, we have
 \begin{equation}\label{lemma:dim}
 \Pi(q\geq q_n\C\Y^{(n)})\rightarrow0\quad\text{and}\quad  \Pi(K\geq K_n\C\Y^{(n)})\rightarrow0
 \end{equation}
 in $\P_{f_0}^{(n)}$-probability as $n\rightarrow\infty$.
\end{lemma}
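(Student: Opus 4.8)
The plan is to prove both limits by the same route used for Lemma \ref{lemmadim}: reduce each posterior statement to a prior mass estimate and close via Lemma 1 of \cite{ghosal2007convergence}. The prior concentration near the truth is already available: the model-level bound \eqref{prior_mass2} gives $\pi(K_{\mS_0},\mS_0)\geq \e^{-n\,\varepsilon_{n,\mS_0}^2}$, and the in-model concentration $\Pi_{K_{\mS_0},\mS_0}\!\left(f_{\mT,\b}:\|f_{\mT,\b}-f_0\|_n\leq\varepsilon_{n,\mS_0}\right)\geq\e^{-d\,n\,\varepsilon_{n,\mS_0}^2}$ follows from Section 8.2 of RP17. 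So it suffices to show that the prior mass of each ``bad'' set is exponentially smaller than the prior mass near the truth, i.e. $\pi(q\geq q_n)\,\e^{(d+2)n\,\varepsilon_{n,\mS_0}^2}\to 0$ and $\pi(K\geq K_n)\,\e^{(d+2)n\,\varepsilon_{n,\mS_0}^2}\to 0$; the cited lemma then delivers the two displays in \eqref{lemma:dim}. Throughout I would bound the normalized prior mass of a set $B$ by the ratio $\pi_{\mathrm{unnorm}}(B)/\pi_{\mathrm{unnorm}}(K_{\mS_0},\mS_0)$, since the normalizing constant is at least the unnormalized mass of the true model, and $K_{\mS_0}\log n\asymp n\,\varepsilon_{n,\mS_0}^2$ dominates $q_0\log p$ in the penalty at $\bG_0$.

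For the dimension bound I would sum the joint prior \eqref{prior:joint} over all $(K,\mS)$ with $|\mS|\geq q_n$. Dropping the $K\log n$ part of the maximum, summing trivially over $1\leq K\leq n$, and using $\binom{p}{k}\leq\e^{k\log p}$ gives
\begin{equation*}
\pi_{\mathrm{unnorm}}(q\geq q_n)\lesssim n\sum_{k=q_n}^{p}\binom{p}{k}\e^{-C\,k\log p}\leq n\sum_{k=q_n}^{p}\e^{-(C-1)k\log p}\lesssim \e^{-(C-2)q_n\log p}.
\end{equation*}
Since $q_n\log p\asymp C_q\,n\,\varepsilon_{n,\mS_0}^2$ by the definition of $q_n$, dividing by the truth mass $\e^{-C K_{\mS_0}\log n}\asymp\e^{-C(C_K/C_\varepsilon^2)n\varepsilon_{n,\mS_0}^2}$ and taking $C>2$ with $C_q$ large makes the net exponent as negative as needed, exactly mirroring Lemma \ref{lemmadim}.

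For the $K$ bound I would instead sum over all $\mS$ and over $K\geq K_n$. The point is that the $|\mS|\log p$ term makes the subset sum converge: writing $K\log n\vee|\mS|\log p\geq\tfrac12(K\log n+|\mS|\log p)$ yields $\sum_{\mS}\e^{-C(K\log n\vee|\mS|\log p)}\lesssim\e^{-CK\log n/2}$, whence $\pi_{\mathrm{unnorm}}(K\geq K_n)\lesssim\sum_{K\geq K_n}\e^{-CK\log n/2}\lesssim\e^{-CK_n\log n/2}$, and with $K_n\log n\asymp\bar C\,n\,\varepsilon_{n,\mS_0}^2$ the ratio to the truth mass decays once $\bar C$ is large. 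I expect this last step to be the main obstacle: the crude inequality $a\vee b\geq\tfrac12(a+b)$ loses a factor of two and only produces decay for $\bar C>2C_K/C_\varepsilon^2$, whereas Theorem \ref{thm:consist2} asks for the sharper threshold $\bar C>C_K/C_\varepsilon^2$. Recovering the sharp constant requires splitting the subset sum according to whether $|\mS|\log p$ exceeds $K\log n$ — keeping the full $\e^{-CK\log n}$ factor on the region where $K\log n$ dominates, and letting the $|\mS|\log p$ penalty absorb $\binom{p}{k}$ only on the complement — and then verifying the geometric sum over $K$ is still controlled by its leading term. This bookkeeping over the two-dimensional index $(K,\mS)$ with the $\max$-penalty is the delicate part; everything else transcribes directly from the $T=1$, known-$\alpha$ argument.
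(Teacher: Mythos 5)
Your proposal follows essentially the same route as the paper's proof: both reduce the two posterior statements to unnormalized prior-mass sums over the joint index $(K,\mS)$ and close with Lemma 1 of Ghosal and van der Vaart (2007), and the paper handles the $\max$-penalty by writing $\e^{-C\max\{K\log n,\,q\log p\}}\leq \e^{-(C-1)K\log n}\,\e^{-q\log p}$ so that a single unit of the exponent absorbs ${p\choose q}$ --- which is exactly the sharper split you sketch at the end. The remaining differences are constant bookkeeping rather than gaps: your normalization by $\pi_{\mathrm{unnorm}}(K_{\mS_0},\mS_0)$ costs an extra $\e^{C K_{\mS_0}\log n}\asymp\e^{C(C_K/C_\varepsilon^2)n\varepsilon_{n,\mS_0}^2}$ factor (the paper instead uses the normalized prior directly, whose normalizing constant is only polynomially small), and your $a\vee b\geq\tfrac12(a+b)$ step halves the decay rate, but both losses are absorbed by taking $C$, $C_q$ and $\bar C$ large enough, which is all that Lemma \ref{lemma:dim2} and Theorem \ref{thm:consist2} (stated ``for some $\bar C$'') require.
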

\begin{proof}
It suffices to show that $\Pi(q>q_n)\e^{(d+2)n\varepsilon_{n,\mS_0}^2}\rightarrow 0$ and $\Pi(K\geq K_n)\e^{(d+2)n\varepsilon_{n,\mS_0}^2}\rightarrow 0$ for $d>2$ from \eqref{prior:mass}. 
We have $q_0\leq q_n$ for $n$ large enough, since $q_0=\mathcal{O}(1)$ as $n\rightarrow\infty$, and thereby
\begin{align*}
&\Pi(q\geq q_n)\e^{(d+2)n\varepsilon_{n,\mS_0}^2}\lesssim\e^{(d+2)n\varepsilon_{n,\mS_0}^2} \sum_{q=q_n}^{p}{p\choose q}\sum_{K=1}^n
\e^{-C\,\max\{K\log n,q\log p\}}\\
&\leq  \e^{(d+2)n\varepsilon_{n,\mS_0}^2}\sum_{q=q_n}^{p}\e^{\log n+q\log(p\,\e/q)- C\, q\log p}
\leq  \e^{\log p+\log n-(C-1)\, q_n\log p+(d+2)n\,\varepsilon_{n,\mS_0}^2 }\\
&\leq  \e^{-(C-3)\, q_n\log p+(d+2)n\,\varepsilon_{n,\mS_0}^2 },
\label{sum2}
\end{align*}
where we used the fact that for $q_0\geq 2$ and $\alpha\in(0,1]$, we have $\log n\leq n^{q_0/(2\alpha+q_0)}$. Since $q_n\log p\geq C_q n\varepsilon_{n,\mS_0}^2$, the right hand side above goes to zero when $(C-3)C_q>d+2$. This will be guaranteed with $C>3$ and $C_q$  large enough.
Similarly,
we have 
\begin{align*}
&\Pi(K\geq K_n)\e^{(d+2)n\,\varepsilon_{n,\mS_0}^2}\lesssim\e^{(d+2)n\,\varepsilon_{n,\mS_0}^2} \sum_{q=0}^{p}{p\choose q}\sum_{K=K_n}^n
\e^{-C\,\max\{K\log n,q\log p\}}\\
&\leq  \e^{(d+2)n\,\varepsilon_{n,\mS_0}^2}\sum_{q=0}^{p}\sum_{K=K_n}^n\e^{- (C-1) \max\{K\log n,q\log p\}}\\
&\leq   \e^{\log (p+1)+\log n-(C-1)\, K_n\log n+(d+2)n\,\varepsilon_{n,\mS_0}^2 }
\leq  \e^{-(C-2)\, K_n\log n+(d+3)n\,\varepsilon_{n,\mS_0}^2 },
\end{align*}
where we used our assumption $\log p\leq  n^{q_0/(2\alpha+q_0)}$.
Since $K_n\geq \bar C n\,\varepsilon_{n,\mS_0}^2$, the right hand side above goes to zero when $(C-2)\bar C>d+3$.
Together with the prior mass conditions \eqref{prior:mass} and \eqref{prior_mass2}, \eqref{lemma:dim} follows from Lemma 1 of \citet{ghosal2007convergence}.
\end{proof}

This Lemma  essentially says that the posterior does not overfit in terms of both $q$ and $K$, where the mass concentrates on models with $K<K_n$ splits. Note that $K_n$ is of the same order as the optimal regularity $K_{\mS_0}$. Now, we denote with $\bG_n\subset \bG$ a sieve consisting of all models with less than $q_n$ variables and $K_n$ splits.
For the entropy bounds of overfitting and underfitting models (inside the sieve $\bG_n$),  we can use the same arguments as in Section \ref{sec:proof:tm1}.
Assume a model $(K,\mS)\in \bG_n$. Then it follows from \eqref{entropy:overall} that
$$
\log N\left(\varepsilon_{n,\mS}\,;\,\mF_{\mS}(K)\,;\,\|\cdot\|_n\right)\leq K\log (3\,B\, n^{3}n^{3/2})\lesssim  K_n \log n\lesssim n \,\varepsilon_{n,\mS_0}^2.
$$
For over-fitting models, this can be further upper-bounded with a multiple of  $n\,\varepsilon_{n,\mS}^2$, thus satisfying \eqref{eq:cover}. 
The last requirement for the entropy condition is verifying the following variant of \eqref{almost_last_one} 
\begin{equation}\label{bound11}
\sum_{(K,\mS)\in\bG_n:\,{\mS\not\supset\mS_0\cup K< K_{\mS_0}}}\e^{-C_2\,M^2\,n\,\varepsilon_{n,\mS_0}^2}+
\sum_{(K,\mS)\in\bG_n:\,{\mS\supset\mS_0\cap K\geq K_{\mS_0}}}\e^{-C_2n\,\varepsilon_{n,\mS}^2}\leq 1
\end{equation}
for some suitable $C_2>0$.
Since $n\,\varepsilon_{n,\mS_0}^2\leq n\,\varepsilon_{n,\mS}^2$ for $\mS\supset\mS_0$, we can upper-bound the left hand side with
\begin{equation}\label{bound10}
\sum_{\mS:|\mS|< q_n}\sum_{K=1}^{K_n}\e^{-C_2n\,\varepsilon_{n,\mS_0}^2}\leq \e^{-C_2n\,\varepsilon_{n,\mS_0}^2} \left(\frac{2\,\e\, p}{q_n}\right)^{q_n+1}\e^{\log K_n}
\leq  \e^{-C_2n\,\varepsilon_{n,\mS_0}^2+(q_n+1)\log p+\log K_n}. 
\end{equation}
Since $q_n\log p\asymp n\,\varepsilon_{n,\mS_0}^2$ and $\log K_n\lesssim n^{q_0/(2\alpha+q_0)}\lesssim n\varepsilon_{n,\mS_0}^2$, the right-hand side of \eqref{bound10} converges to zero for some suitably large $C_2$ as $n\rightarrow \infty$, thus satisfying \eqref{bound11}.

In place of the  anti-concentration condition (similarly as in  \eqref{eq:anticoncentration}), we need to verify that the prior probability of larger models (that contain the truth) is small in the sense that, for some $H>0$,
\begin{equation}\label{bound13}
\sum_{(K,\mS)\in\bG_n:{\{\mS\supset\mS_0\}\cap \{K\geq K_{\mS_0}\}}}\pi(\mS,K)\leq \e^{-H\,n\,\varepsilon_{n,\mS_0}^2}.
\end{equation}
We can write
\begin{align}
\sum_{\mS\supset\mS_0:|\mS|< q_n}\sum_{K=K_{\mS_0}}^{K_n}\pi(\mS,K)&\leq
\sum_{q=0}^{q_n}{p\choose q}\sum_{K=K_{\mS_0}}^{K_n}\e^{-C\,K_{\mS_0}\log n}\\
&\leq \left(\frac{2\,\e\, p}{q_n}\right)^{q_n+1}\e^{\log K_n}\e^{-C\,K_{\mS_0}\log n}.
\end{align}
Because $q_n\log p\asymp n\,\varepsilon_{n,\mS_0}^2$ and $\log K_n\lesssim n^{q_0/(2\alpha+q_0)}\lesssim n\varepsilon_{n,\mS_0}^2$
the condition \eqref{bound13} is satisfied for some $H>0$ when $C$ and $C_K$ are large enough.

\subsection{Proof of Theorem \ref{thm:consist3}}\label{sec:proof:tm3}
We modify the notation a bit. We adopt the definition of $\delta$-valid ensembles from RP17 (Definition 5.3). With $\mF_{\mS}({\bm K})$ we denote all $\delta$-valid tree ensembles $f_{\mE,\B}$  that  (a) are uniformly bounded (i.e. $\|f_{\mE,\B}\|_\infty\leq B$ for some $B>0$), (b) consist of $T$ trees with $\bm K=(K^1,\dots, K^T)'\in\N^T$ leaves and (c) that split along directions $\mS$.

We start by showing that the prior model concentration condition is satisfied.  From our assumption $q_0\log p\leq n^{q_0/(2\alpha+q_0)}$ and definition  $K_{\mS_0}< C_K/C_\varepsilon^2\, n\,\varepsilon_{n,\mS_0}^2/\log n$  and using \eqref{prior:T} and \eqref{prior:joint2},  we obtain
$$
\pi(\mS_0,E(K_{\mS_0}))\propto\sum_{T=1}^{K_{\mS_0}}\e^{-C_T\, T}\sum_{\bm K\in\N^T:\sum_{t=1}^TK^t=K_{\mS_0}} \e^{-C\,n^{q_0/(2\alpha+q_0)}\log n}\geq \e^{-(C_TC_K/(C_\varepsilon\,\log n)+C/C_\varepsilon^2) \,n\,\varepsilon_{n,\mS_0}^2}.
$$
The right-hand side can be further lower-bounded with $\e^{-n\,\varepsilon_{n,\mS_0}^2}$ for a large enough $C_\varepsilon$ and $n$.
Next, we need to show prior concentration in the parameter space under the true model equivalence class $(\mS_0,E(K_{\mS_0}))$.  All that is needed is finding a single well-approximating forest supported on one partition ensemble characterized by $(T,\bm K)$ from the equivalence class $E(K_{\mS_0})$. Such an ensemble  can be obtained by considering $T=1$ and a single $k$-$d$ tree with $K_{\mS_0}$ leaves from Lemma 3.2 of RP17.  The prior concentration condition then boils down to \eqref{prior:mass}, which has already been verified in RP17.

Next, we show that for $K_n=\lceil \bar C\,n\,\varepsilon_{n,\mS_0}^2/\log n\rceil$ we have
$$
\Pi\left((T,\bm{K}): \sum_{t=1}^TK^t\geq K_n\,\big|\,\Y^{(n)}\right)\rightarrow 0.
$$
We can write
\begin{align*}
\Pi\left((T,\bm{K}): \sum_{t=1}^TK^t\geq K_n\right)&\lesssim \sum_{T=1}^{n}\e^{-C_T\,T}\sum_{q=0}^p
{p\choose q}\sum_{Z={K_n}}^n\sum_{\bm K:\sum_{t=1}^T K^t=Z}\e^{-C \max\{Z\log n,q\log p\}}\\
&\lesssim \e^{-(C-1) K_n\log n +\log p+2\log n+\log p(n)-C_T},
\end{align*}
where $p(n)$ is the partitioning number.
According to Andrews (1976), we have
\begin{equation}\label{partition_bound}
\log p(n)\sim \pi\sqrt{\frac{2\,n}{3}}\quad\text{as}\quad n\rightarrow\infty.
\end{equation}
Under our assumptions $q_0> 2$ and $\alpha\in(0,1]$, we have $\sqrt{n}\leq n^{q_0/(2\alpha+q_0)}$ and $\log n\leq n^{q_0/(2\alpha+q_0)}$.
From $\log p\leq n^{q_0/(2\alpha+q_0)}$ and using the fact that $K_n\geq \bar C\,n\,\varepsilon_{n,\mS_0}^2/\log n$, we can then write
$$
\Pi\left((T,\bm{K}): \sum_{t=1}^TK^t\geq K_{n}\right)\e^{(d+2)n\,\varepsilon_{n,\mS_0}^2}\lesssim \e^{-[(C-1)\bar C-D\,\pi\sqrt{2/3}-d-5]n\,\varepsilon_{n,\mS_0}^2}
$$
for some $D>0$. The right hand side goes to zero for $C>1$ and $\bar C$ large enough. 
Similarly, we can show that $\Pi(q\geq q_n\C\Y^{(n)})\rightarrow 0$ as $n\rightarrow\infty$  for $q_n=\lceil C_q\,n\,\varepsilon_{n,\mS_0}^2/\log p\rceil$ by proceeding as in Lemma  \ref{lemma:dim2} in  Section \ref{sec:proof:thm2}.

Based on the previous paragraph, we narrow down attention to a subset of model indices $\bG_n\subset\bG$, consisting of models $(\mS,E(Z))$ such that $|\mS|< q_n$ and $Z< K_n$.
 We now define a sieve $\mF_n$ as follows
 $$
 \mF_n=\bigcup_{q=0}^{q_n}\bigcup_{T=1}^{K_n}\bigcup_{\sum_{t=1}^TK^t\leq K_n}\bigcup_{\mS:|\mS|=q}\mF_{\mS}({\bm K}).
 $$
It follows from the previous paragraph that $\Pi(\mF_n^c\C\Y^{(n)})\rightarrow 0$ as $n\rightarrow\infty$. For the entropy calculation we thus focus on the sieve $\mF_n$.

We first note that the metric entropy $\log N\left(\varepsilon_{n,\mS};\mF(\mE); \|\cdot\|_n\right)$, where $\mF(\mE)$ are all uniformly bounded forests supported on a $\delta$-valid partition ensemble $\mE$,
can be upper-bounded with $\left(\sum_{t=1}^TK^t\right)\log(B/\varepsilon_{n,\mS}C_1\kappa(\mE)\sqrt{n})$ (follows from equation (9.3) of RP17), 
where  $\kappa(\mE)$ is the condition number of a valid ensemble (defined in Section 9.1. of RP17).
Next, we find an upper bound for the covering number of the tree ensembles that are attached to a model $(\mS,E(Z))$, where $E(Z)$ is the equivalence class of $(T,\bm{K})$ defined in \eqref{equivalance_class}.
From Section 9.1 of RP17, and using  the fact that $\Delta(E(Z))\leq Z!p(Z)$, it follows that
 \begin{align*}
& \log N\left(\varepsilon_{n,\mS};\bigcup_{(T,\bm K)\in E(Z)} \mF_{\mS}({\bm K})\cap\mF_n; \|\cdot\|_n\right)\\
 &\quad\leq \log\Delta(E(Z))+ \log \Delta(\mV\mE_{\mS}^{\bm K})+Z\log(B/\varepsilon_{n,\mS}C_1\kappa(\mE)\sqrt{n})\\
 &\quad \lesssim Z\log Z + \sqrt{Z} + Z\log (|\mS|n^2)+  Z\log \left(n^{2+\delta/2}\sqrt{ Z}\right)
 \end{align*}
 for some $C_1>0$, where $\Delta(\mV\mE_{\mS}^{\bm K})$ is the cardinality of  $\delta$-valid ensembles $\mV\mE_{\mS}^{\bm K}$.  Inside the sieve, we have $|\mS|< q_n\leq n$ and $Z< K_n\asymp n\,\varepsilon_{n,\mS_0}^2/\log n$ and thereby we can upper bound the log entropy with a constant multiple of $n\,\varepsilon_{n,
\mS_0}^2$. For an overfitting model $(\mS,E(Z))$ such that $Z\geq K(\mS_0)$ and $\mS\supset\mS_0$, the log-covering number  is further upper-bounded with  $n\,\varepsilon_{n,\mS}^2\geq n\,\varepsilon_{n,\mS_0}^2$. Next, we verify the following  variant of condition \eqref{almost_last_one}
\begin{equation}\label{bound111}
\sum_{\bG_n\cap\bG_{\{\mS\not\supset\mS_0\}\cup \{Z<K_{\mS_0}\}}}\e^{-C_2\,M^2\,n\,\varepsilon_{n,\mS_0}^2}+
\sum_{\bG_n\cap\bG_{\{\mS\supset\mS_0\}\cap \{Z\geq K_{\mS_0}\}}}\e^{-C_2n\,\varepsilon_{n,\mS}^2}\leq 1
\end{equation}
for some $C_2>0$.
Since $n\,\varepsilon_{n,\mS}^2>n\,\varepsilon_{n,\mS_0}^2$ for $\mS\supset\mS_0$ and $M>1$, we can upper-bound the left-hand-side with
$$
\e^{-C_2\,n\,\varepsilon_{n,\mS_0}^2}\sum_{q=0}^{q_n}{p\choose q}\sum_{Z=1}^{K_n}\Delta(E(Z))\lesssim
\left(\frac{2\,\e\,p}{q_n}\right)^{q_n+1}\e^{-C_2\,n\,\varepsilon_{n,\mS_0}^2+\log q_n+\log K_n +K_n\log K_n+\pi\sqrt{2K_n/3}},
$$
where we used the fact $\Delta(E(Z))\leq Z!p(Z)$ and \eqref{partition_bound}.
Since $K_n\log K_n\lesssim n\,\varepsilon_{n,\mS_0}^2$ and $q_n\log p\asymp n\varepsilon_{n,\mS_0}^2$, the right hand side goes to zero for a large enough constant $C_2>0$.

Lastly, the anti-concentration condition is replaced with
$$
\sum_{T=K_n}^n\pi(T)\sum_{{\bG_n\cap\bG_{\{\mS\supset\mS_0\}\cap \{Z\geq K_{\mS_0}\}}}}\sum_{\bm K\in\N^T:\sum_{t=1}^T K^t=Z}\pi(\mS,\bm{K}\C T)\leq\e^{-H\,n\,\varepsilon_{n,\mS_0}^2}
$$
for some $H>0$. Using the fact $\pi(\mS,\bm{K}\C T)\gtrsim \e^{-C\, \sum K^t\log n}$, we can upper-bound the left hand side above with
\begin{align*}
&\sum_{T=1}^{K_n}\pi(T)\e^{-C\, K_{\mS_0}\log n}\sum_{{\bG_n\cap\bG_{\{\mS\supset\mS_0\}\cap \{Z\geq K_{\mS_0}\}}}}\Delta(E(Z))\\
&\quad\quad\quad\quad\lesssim \e^{-C\,K_{\mS_0}\log n}\left(\frac{2\,\e\, p}{q_n}\right)^{q_n+1}\e^{2\log K_n+K_n\log K_n+\pi\sqrt{2K_n/3}-C_T}
\end{align*}
Using similar arguments as before, and because $K_{\mS_0}\log n\geq C_K/C_\varepsilon\,n\,\varepsilon_{n,\mS_0}^2$, the condition will be satisfied for large enough $C>0$ and $C_K>0$.

\subsection{Theory for ABC}\label{sec:theory_ABC}

First, we show the following ABC posterior concentration result.

\begin{theorem}\label{thm:abc1}
Under the assumptions of Theorem 4.1 and assuming $\sigma^2=1/n$ in (1), the naive ABC posterior satisfies  with $\P_{f_0}^{(n)}$ tending to one
$$
\Pi\left[\|f-f_0\|_n>\lambda_n\C \|\Y-\bm Y^\star\|_n\leq \epsilon^T_n\right]\lesssim 1/M
$$
for $\epsilon^T_n=\sqrt{2\log n/n}$, $\lambda_n=4\epsilon^T_n/3+1/\sqrt{n}$ and for any $M>0$ large enough.
\end{theorem}

\begin{proof}
We will be working conditionally on the event $\mathcal A=\{\bm \varepsilon=(\varepsilon_1,\dots,\varepsilon_n)':\max_{1\leq i\leq n}|\varepsilon_i|\leq \sqrt{2\log n/n}\}$  whose complement has a small probability, i.e. $\P_{f_0}^{(n)}[\mathcal A^c]\leq c_0/\sqrt{2\log n}$ for some $c_0>0$ when $\varepsilon_i\iid \mathcal N(0,1/n)$. On the event $\mA$, we have
$$
\|\Y-f_0\|_n=\sqrt{\frac{1}{n}\sum_{i=1}^n\varepsilon_i^2}\leq \sqrt{2\log n/n}\equiv\epsilon^T_n.
$$
We now define a joint event
$$
\mA(\epsilon^T_n,\lambda_n)\equiv\{(\bm Y^\star,f): \|\bm Y^\star-\Y\|_n\leq \epsilon^T_n\quad\text{and}\quad \|f-f_0\|_n>\lambda_n\}.
$$
For all $(\bm Y^\star,f)\in \mA(\epsilon^T_n,\lambda_n)$ we have
$$
\|f-f_0\|_n\leq \|\bm Y^\star-\Y\|_n+\|f-\bm Y^\star\|_n+\|f_0-\Y\|_n\leq \frac{4}{3}\epsilon^T_n+\|f-\bm Y^\star\|_n.
$$
This means that $(\bm Y^\star,f)\in \mA(\epsilon^T_n,\lambda_n)$  implies 
$
\|f-\bm Y^\star\|_n>\lambda_n-\frac{4}{3}\epsilon^T_n
$
and choosing $\lambda_n\geq \frac{4}{3}\epsilon^T_n+t_\varepsilon$ leads to
$$
\Pi[\mA(\epsilon^T_n,\lambda_n)]\leq \int \P_f[\|f-\bm Y^\star\|_n>t_{\varepsilon}]d\Pi(f)
$$
and
\begin{equation}\label{eq:bound}
\Pi\left[\|f-f_0\|_n>  \frac{4}{3}\epsilon^T_n+t_\varepsilon\,\big|\, \|\Y-\bm Y^\star\|_n\leq \epsilon^T_n\right]\leq\frac{\int \P_f[\|\bm Y^\star-f\|_n>t_{\varepsilon}]d\Pi(f)}{\int\P_f[\|\bm Y^\star-\Y\|_n\leq \epsilon^T_n]d\Pi(f)}.
\end{equation}
Now, we have for a random variable $\chi^2_n$ with a chi-square distribution with $n$ degrees of freedom
$$
\P_f[\|\bm Y^\star-f\|_n>u]=\P_f\left[\frac{\chi^2_n}{n^2}>u^2\right]=\P_f\left[\e^{\chi^2_n/4}>\e^{u^2\,n^2/4}\right]\leq \frac{2^{n/2}}{\e^{u^2\,n^2/4}}.
$$
Next, for $n$ large enough we can write
\begin{align}
\int\P_f[\|\bm Y^\star-\Y\|_n\leq \epsilon^T_n]d\Pi(f)&\geq\int_{\|f-f_0\|_n\leq \epsilon^T_n/3}\P_f[\|\bm Y^\star-f\|_n\leq\epsilon^T_n/3]d\Pi(f)\\
&\geq \Pi[\|f-f_0\|_n\leq \epsilon^T_n/3]-\e^{n/2\log 2-n\,\log n/18}\\
&\geq \Pi[\|f-f_0\|_n\leq \epsilon^T_n/3]/2.
\end{align}
Next (under the assumption $q_0\log p<n^{q_0/(2\alpha+q_0)}$, we have $\pi(\mS_0)\geq \e^{-n\varepsilon_{n,\mS_0}^2}$ and (assuming $K=K_{\mS_0}\asymp n\varepsilon_{n,\mS_0}^2/\log n$ and denoting $\wh \b\in\R^K$ the steps of the $\|\cdot\|_n$ projection of $f_0$ onto trees with $K$ leafs)  for some $c>0$
\begin{align}
\Pi[\|f-f_0\|_n\leq \epsilon^T_n/3]&>\e^{-n\,\varepsilon_{n,\mS_0}^2}\Pi(\|\b-\wh\b\|_2\leq \epsilon^T_n/6)\\
&>
\e^{-n\,\varepsilon_{n,\mS_0}^2}\frac{\e^{-K\log 2-\|\wh\b\|_2^2-(\varepsilon_n^T)^2/72+K/2\log[(\varepsilon_n^T)^2/36]}}{\Gamma(K/2)K/2}>\e^{-c\,n\,\varepsilon_{n,\mS_0}^2}.
\end{align}
We can now upper-bound \eqref{eq:bound} with 
$2^{n/2}\e^{-t_\varepsilon^2\,n^2/4+cn\,\varepsilon_{n,\mS_0}^2}$ which is smaller than an arbitrary constant $M>0$ for $n$ large enough if we choose $t_\varepsilon=1/\sqrt{n}$.
\end{proof}

Given this consistency result, we can immediately conclude (using the inequality in (21) in the paper) that the ABC posterior will not reward underfitting model as long as our identifiability and irrepresentability conditions are satisfied with $\varepsilon=\lambda_n$. In other words, under the assumptions of Theorem \ref{thm:abc1}  and assuming that $\mS_0$ is $(f_0,\lambda_n)$-identifiable and that $\lambda_n$-irrepresentability holds we have, with $\P_{f_0}$ tending to one and for any $M>0$,
$$
\Pi\left[ \mS\not\supset\mS_0 \C \|\Y-\bm Y^\star\|_n\leq \epsilon^T_n\right]\lesssim 1/M.
$$

Regarding over-fitting models, we first show the following ABC analogue of Lemma 8.1.
We can write, on the event $\mA$, and for $q_n=C_q\lceil n\,\varepsilon_{n,\mS_0}^2/\log p\rceil$ (as in Lemma 8.1)
$$
\Pi_1\equiv\Pi\left[q\geq q_n\C \|\Y-\bm Y^\star\|_n\leq \epsilon^T_n\right]=\sum_{\mS:|\mS|\geq q_n}\pi(\mS)\frac{\int\P_f[\|\Y-\bm Y^\star\|_n\leq \epsilon^T_n]d\Pi(f\C\mS)}{\int\P_f[\|\Y-\bm Y^\star\|_n\leq \epsilon^T_n]d\Pi(f)}.
$$
It turns out from the proof of Theorem  \ref{thm:abc1} that 
$$
\Pi_1\leq \frac{\sum_{q\geq q_n}\sum_{\mS:|\mS|=q}\pi(\mS)}{\int\P_f[\|\Y-\bm Y^\star\|_n\leq \epsilon^T_n]d\Pi(f)}\lesssim \e^{c\,n\varepsilon_{n,\mS_0}^2}\Pi(q\geq q_n).
$$
In the proof of Lemma 1.1 we have already showed (under the assumptions of Theorem 4.1) that $\Pi(q\geq q_n)\lesssim \e^{-n\,\varepsilon_{n,\mS_0}^2C}$ for some $C>0$. Choosing $C_q$ large enough, one concludes that $\Pi_1\rightarrow 0$ as $n\rightarrow\infty$. This shows that the ABC posterior concentrates on the sieve of models $\mathcal F_n$ with up to  $q_n$ covariates. Using this result, we can focus on models of size up to $q_n$ and show that the posterior probability of over-fitting models goes to zero. Indeed, on the event $\mA$ and on $\mathcal F_n$  we have (using an inequalities (4) and (6))
\begin{align*}
\Pi\left[ \{\mS\supset\mS_0\}\cap\mathcal F_n \C \|\Y-\bm Y^\star\|_n\leq \epsilon^T_n\right]
&\leq\frac{\sum_{\mS\supset\mS_0:|\mS|\leq q_n}\pi(\mS)}{\int\P_f[\|\Y-\bm Y^\star\|_n\leq \epsilon^T_n]d\Pi(f)}\\
&\lesssim  \e^{(c-C)\,n\varepsilon_{n,\mS_0}^2}\left(\frac{2\e p}{q_n}\right)^{q_n+1}\lesssim \e^{-H\,n\varepsilon_{n,\mS_0}^2}
\end{align*}
for some $H>0$ with $C>0$ is large enough. This concludes that the ABC posterior will lead to consistent variable selection as well.

We now discuss how the theory can be extended when data-splitting is deployed in ABC. First, we discuss the case when the split is done only once before applying ABC (not internally at each iteration). Denote with $n_1$ the training sample size and with $n_2$ the validation sample size. 
In order for the consistency result in Theorem \ref{thm:abc1} to hold, we need to make sure that prior concentration holds in the sense that $\Pi[\|f-f_0\|_{n_2}\lesssim \epsilon^T_{n_2}]\geq \e^{-c\,n_2\epsilon_{n_2,\mS_0}^2}$ for some $c>0$. Leaving $n_1$ data-points for training the prior, we know (from results in RP17 under fixed $\sigma^2$) that the posterior concentrates at the optimal rate (up to a log factor), i.e.
$$
\Pi[|f-f_0\|_{n_1}\lesssim\epsilon_{n_1,\mS_0}\C \Y^{(n)}_{\mathcal I},\mS_0]\rightarrow 1 \quad\text{as $n_1\rightarrow\infty$}.
$$
Choosing $n_1$ and $n_2$ in such a way so that  $\epsilon_{n_1,\mS_0} \lesssim \epsilon^T_{n_2}\equiv \sqrt{2(\log n_2)/n_2}$ (and assuming that observed fixed covariates  in the training and testing sets are close), the prior concentration condition will be satisfied and the ABC will be consistent and concentrate at the rate $\lambda_{n_2}$. This implies variable selection consistency of our ABC method under identifiability and irrepresentabilty conditions which depend on $\lambda_{n_2}$.
A similar conclusion is obtained for the expected posterior prior \eqref{eq:expected_prior} where 
$$
\Pi[|f-f_0\|_{n_1}\lesssim\epsilon_{n_1,\mS_0}]\geq \pi(\mS_0)\frac{1}{L}\sum_{l}\Pi[\|f-f_0\|_{n_1}\lesssim\epsilon_{n_1,\mS_0}\C \Y^{(n)}_{\mathcal I_l},\mS_0]\gtrsim\pi(\mS_0).
$$
A rigorous proof of ABC consistency for the expected posterior priors would require more care and will be left for future investigation.

	\begin{table}
	\caption{\label{tab:time_BART} \small Computation time of $ 1\,000$ MCMC iterations of BART/DART in seconds (using the Friedman's datasets with $\sigma=5$ and autocorrelation of $0.9$).}
	\centering
	\scalebox{0.7}{
\begin{tabular}{l l | *{8}{c}}
\toprule
         &  & \multicolumn{4}{c}{BART}           & \multicolumn{4}{c}{DART} \\
          \cmidrule( lr){3-6}\cmidrule(lr){7-10}

 & & $T=10$ & $T=20$ & $T=50$ & $T=200$ & $T=10$ & $T=20$ & $T=50$ & $T=200$ \\
\midrule
\multirow{3}{*}{$n=100$}& $p=100$   & 0.21  & 0.32  & 0.54  & 1.86   & 0.55  & 0.64  & 0.76  & 2.27   \\
&$p=1\,000$  & 0.53  & 0.67  & 1.57  & 5.48   & 0.96  & 0.98  & 1.91  & 5.79   \\
&$p=10\,000$& 3.56  & 5.58  & 10.91 & 39.16  & 5.93  & 7.72  & 12.55 & 36.95  \\
\hline
\multirow{3}{*}{$n=250$}&$p=100$    & 0.21  & 0.34  & 0.79  & 2.99   & 0.41  & 0.56  & 0.99  & 3.19   \\
&$p=1\,000$  & 0.50  & 0.82  & 1.81  & 6.51   & 0.87  & 1.14  & 1.83  & 6.58   \\
&$p=10\,000$ & 3.70  & 5.63  & 11.38 & 40.48  & 6.29  & 8.06  & 12.97 & 39.13  \\
\hline
\multirow{3}{*}{$n=500$}&$p=100$    & 0.29  & 0.53  & 1.23  & 4.93   & 0.49  & 0.71  & 1.40  & 5.07   \\
&$p=1\,000$ & 0.63  & 1.11  & 2.36  & 8.65   & 1.01  & 1.30  & 2.29  & 7.89   \\
&$p=10\,000$ & 4.21  & 6.54  & 12.27 & 43.35  & 6.80  & 8.35  & 13.46 & 37.81  \\
\hline
\multirow{3}{*}{$n=1\,000$}&$p=100$    & 0.53  & 0.97  & 2.22  & 8.86   & 0.71  & 1.13  & 2.30  & 8.95   \\
&$p=1\,000$  & 0.91  & 1.49  & 3.32  & 12.82  & 1.24  & 1.84  & 3.54  & 12.12  \\
&$p=10\,000$ & 5.41  & 8.18  & 14.23 & 48.92  & 7.61  & 9.06  & 14.47 & 41.90  \\
\hline
 \multirow{3}{*}{$n=10\,000$}& $p=100$   & 7.17  & 10.78 & 23.25 & 82.45  & 6.37  & 12.07 & 22.33 & 92.23  \\
 & $p=1\,000$ & 13.00 & 22.12 & 34.67 & 125.98 & 12.76 & 16.95 & 40.25 & 102.72 \\
 & $p=10\,000$ & 25.35 & 31.39 & 59.71 & 218.08 & 28.59 & 39.93 & 73.99 & 171.73\\
\bottomrule
\end{tabular}}

\end{table}

\section{ABC Computational Feasibility}\label{supplement:time}

Regarding computational considerations, our sampling method deploys MCMC inside each ABC iteration but uses only on a subset of the original observations)
	(say $\frac{n}{2}$ observations) and a subset of $|\mS|<p$ variables. In addition, we only need to collect one posterior sample after a burnin period $B$.
		
	In order to understand how ABC scales with $\abs{\mS}$, $p$ and $s$, we first assess the computing time of plain  BART/DART. The timing comparisons are summarized in Table \ref{tab:time_BART}. From these computations we can conclude, for example, that running 
$M=1\,000$ BART iterations with $T=200$ trees (the default) on a dataset with $p=10\,000$ variables and $n=500$ observations takes $43.35$ seconds which roughly amounts to running $43.35\times 5/0.5=433.5 $ ABC iterations with $B=200$ burnin MCMC  iterations, $T=10$ trees and with $s=n/2$, assuming that the sparsity prior is such that $|\mS|\approx 1\,000$.
Under the same settings but a stricter sparsity prior such that $|\mS|\approx 100$, we obtain $43.35\times 5/0.21= 1032.14$ ABC iterations for the same time as $1\,000$ BART iterations. These computing times, however, do not take into account autocorrelation in BART samples, where $M=1\,000$  BART MCMC iterations do not necessarily yield  $1\,000$  {\em effective} samples. One advantage of  ABC sampling over MCMC is that it is embarrassingly parallel and that it does not incur correlation. This provides an opportunity for large speedups using parallel computing.

\section{Spike-and-Forests: MCMC Variant}\label{sec:mcmc}
As a precursor to ABC Bayesian Forests, we first implemented an MCMC algorithm for joint sampling from a posterior $\Pi(\mS,\mE\C\Y^{(n)})$ over the space of models and tree ensemble partitions. We refer to this algorithm as Spike-and-Forests. The sampling  follows a Metropolis-Hasting scheme, exploiting the additive structure of forests by sampling each tree individually from conditionals in a Gibbs manner within each Metropolis step (Bayesian backfitting by \citet{chipman2010bart}). 
The key is assigning a joint proposal distribution  $pr(\mS,\mE\C\mS_m,\mE_m)= pr(\mS\C\mS_m)pr(\mE\C\mS,\mE_m)$ over variable subsets $\mS$ and  partition ensembles $\mE$, where $\mS_m$ and $\mE_m$ are current MCMC states.

We explain the proposal mechanism using a single tree and write $\mT$ instead of $\mE$.
First, a model proposal $\mS^\star$ is sampled from $pr(\mS\C\mS_m)$ which consists of the following three options:  $\texttt{add},\texttt{delete}$ and $\texttt{stay}$ for adding/deleting one (or none) of the variables. These three steps are chosen with probabilities $0.4, 0.4$ and $0.2$, respectively. Candidate variables for deletion/addition are chosen from a uniform distribution.
Given the newly suggested model $\mS^\star$, the proposal distribution $pr(\mT\C\mS^\star,\mT_m)$ consists of various moves, described below, depending on the status of $\mS^\star$. 

If 
$\mS^\star$ was obtained from $\mS_m$ by $\texttt{adding}$ a variable,  the proposal $pr(\mT\C\mS^\star=\texttt{add},\mT_m)$ consists of two steps: \texttt{birth} and \texttt{replace}.
 In the \texttt{birth} step,  a bottom node  is added to $\mT_m$ and in the  $\texttt{replace}$ step one of the variables that occurs more than once inside  $\mT_m$ is replaced with the new variable. The birth step increases the size of the tree, while the replace step does not.  The two steps are chosen with  probabilities
\begin{align*}
\pi_\text{birth,add}=0.7\min\left\{\frac{\pi(K+1)}{\pi(K)}, 1\right\}, \pi_\text{birth, replace}=1-\pi_\text{birth,add},
\end{align*}
where $K$ is the number of bottom nodes in $\mT_m$ and $\pi(K)$ is a prior on the number of bottom nodes.
 If no variable appears more than once in the tree, then \texttt{replace} is invalid and $\pi_\text{birth, replace}$  is set to $0$.

If $\mS^\star$ is obtained from $\mS_m$ by \texttt{deleting} a variable,   the proposal $pr(\mT\C\mS^\star=\texttt{delete},\mT_m)$ consists of two steps: \texttt{death} and \texttt{replace}.
If the variable chosen for deletion occurs in a bottom node, it can be removed from a tree $\mT_m$ with a \texttt{delete} step that erases the bottom node. If the variable occurs inside the tree, it can be deleted by replacing it with other variables in the \texttt{replace} step. If both of these moves are eligible,  we pick one of them with probabilities
\begin{align*}
\pi_\text{death,delete}=0.7\min\left\{\frac{\pi(K-1)}{\pi(K)},1 \right\}, \pi_\text{death,replace}=1-\pi_{\text{death,delete}}.
\end{align*}
If the variable suggested for deletion is not in a bottom node, then $\pi_\text{death,delete}=0$.

If the pool of variables stays the same, i.e. $\mS^\star=\mS_m$,  the proposal $pr(\mT\C\mS^\star=\texttt{stay},\mT_m)$ consists of 4 moves: \texttt{add, delete, replace} and \texttt{rule}.
All proposal moves, and their probabilities, are adopted from Bayesian CART of \citet{denison1998bayesian}. These steps only modify the tree configuration without  adding/deleting variables.

Regarding the prior distributions for our MCMC implementation, we assume the beta-binomial prior on the variable subsets. Namely, for binary indicators $\gamma_j\in\{0,1\}$, for whether or not $x_j$ is active, we assume $\P(\gamma_j=1\C\theta)=\theta$ and $\theta\sim\mathcal{B}(a,b)$. The prior distribution on trees consists of (a) the truncated Poisson distribution on the number of bottom leaves, (b) uniform prior over trees with the same number of leaves  and (c) standard Gaussian prior on the step sizes. This is the Bayesian CART prior proposed by \citet{denison1998bayesian} and analyzed theoretically by \citet{rockova2017posterior}. In the computation of MH acceptance ratios, we leverage the fact that the bottom leave parameters can be integrated out to obtain a conditional marginal likelihood, given each partition. 


The MCMC sampling routine can be extended to spike-and-forests, altering each tree inside the forests one by one through Bayesian backfitting \citep{chipman2010bart}.
One big advantage of the Bayesian forest representation is that it accelerates mixing since most trees are shallow and thereby more easily modified throughout MCMC (see \citet{pratola2016efficient}).

\section{Sensitivity Analysis}

Our sensitivity analysis focuses on two aspects. First, we want to  assess how the choices of $M$ (the number of ABC samples), $T$ (the number of trees in each forest), $B$ (the number of burn-in iterations inside each ABC iteration) and  $\epsilon$ (tolerance for ABC acceptance) collaboratively impact ABC variable importance. Second, we want to investigate the impact of different data splitting strategies, including varying choices of $s$ (proportion of data used in training) and pre-determined data splitting versus internal data splitting.  There is an obvious tradeoff between $s$ and $M$, where   small  $s$    will yield fewer ABC pseudo-observations that are compatible with the observed data and $M$  will thereby  have to be larger.
We have considered the following combinations  
		\[
		M\in \{1\,000, 10\,000\} \times T \in\{10, 25, 50\} \times B\in\{200,1\,000\}  \times \epsilon \in \{top \, 1\%, 5\%, 10\% \}
		\]
		
These comparisons are conducted using the Friedman's simulation setup with $p\in \{100,1\,000\}, \rho=0.9 \text{(autoregressive)}$ and $\sigma=5$, assuming $s=n/2$ {and internal splitting} for ABC.  We also include various sample sizes $n\in\{100,500,1\,000\}$ for each $p$. For each setting, we show  ABC  inclusion probabilities (\texttt{ip})  for the first $30$ variables of which only the first $5$ are active (Figure \ref{fig:ip_friedman}). We denote the parameters for each ABC setup by $T\star B$ where, for example, $20\star200$ means each forest consists of $T=20$ trees and uses $B=200$ MCMC iterations as a burnin.

 		\begin{figure}[!t]
				\centering
\includegraphics[width=\textwidth]{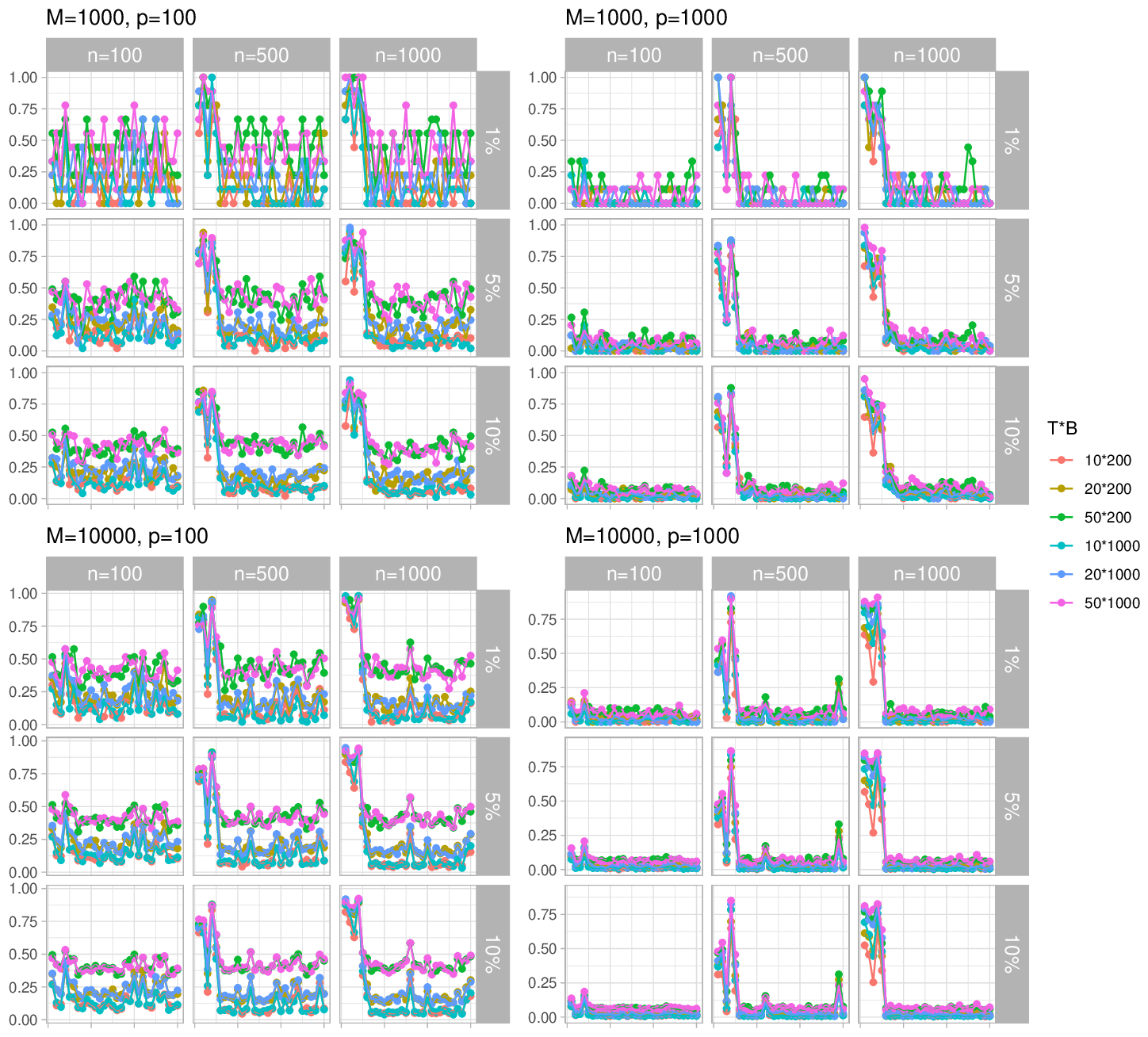}
			\caption{ABC inclusion probabilities of the first 30 variables over different $\epsilon$. 
			Each panel corresponds to a different combination of $p\in\{100,1\,000\}$ and $M\in\{1\,000,10\,000\}$. Each row indicates a different model averaging strategy based on a different $\epsilon$ value. Each column corresponds to a different sample size. The legend represents various combinations of $T\star B$. For example, $20*200$ means each forest consists of $T=20$ trees and  $B=200$ MCMC iterations as burnin. Note that we use $s=n/2$ here.}\label{fig:ip_friedman}
		\end{figure}

		From the figures we can see that ABC is more sensitive to the choice of $T$ than to the choice of $B$. This is not entirely unexpected. As suggested in \citet{chipman2010bart} and \citet{bleich2014variable},  a large value of $T$ allows for increased flexibility in fitting the model while smaller $T$ should be adopted for the purpose of variable selection. The variables must compete with each other to be included when $T$ is small.  In terms of a median probability model, the model tends to have more power and higher false discoveries when $T$ is large, and less power and fewer false discoveries when $T$ is small.

\begin{figure}[!t]
				\centering
\includegraphics[width=\textwidth]{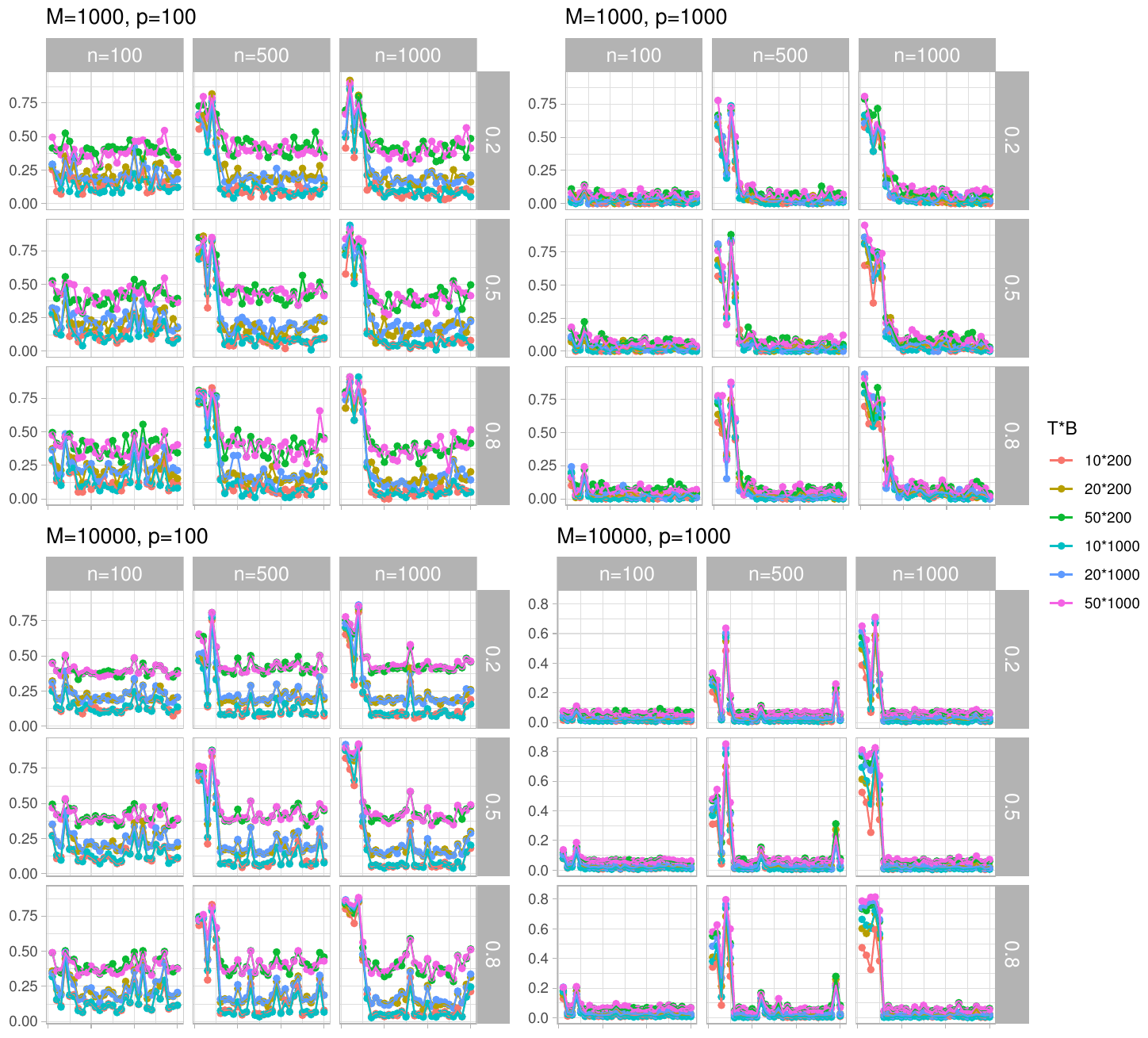}
			\caption{ABC inclusion probabilities of the first 30 variables over different $s$. 
			Each panel corresponds to a different combination of $p\in\{100,1\,000\}$ and $M\in\{1\,000,10\,000\}$. Each row indicates a different model averaging strategy based on a different ratio of $s$ over $n$. Each column corresponds to a different sample size. The legend represents various combinations of $T\star B$. For example, $20*200$ means each forest consists of $T=20$ trees and  $B=200$ MCMC iterations as burnin. Note that we use $\epsilon$=\{top 10\%\} here.}\label{fig:s_comp}
		\end{figure}

		Regarding $\epsilon$, although the trends are similar for top 1\%, 5\% and 10\% selected model, higher variance is observed for smaller tolerance when $M$ is not large enough, especially for $M=1\,000$ with top 1\% models accepted. This is, again, not entirely unexpected.

		The comparisons in Figure \ref{fig:ip_friedman} were done assuming $s=n/2$. We now consider a similar simulation study, but for $\epsilon=\{ \text{top } 10\%\}$ and  various $s$  by considering 
		\[
		M\in \{1\,000, 10\,000\} \times T \in\{10, 25, 50\} \times B \in\{200,1\,000\}  \times s \in \{n/5,n/2, 4n/5\}.
		\]

The results are displayed in Figure \ref{fig:s_comp}.
The posterior inclusion probabilities do not seem to vary much with respect to $s$. This suggests that even $s=0.2n$ provides reasonable prior guesses for ABC regarding variable selection.  
Based on this sensitivity analysis, we choose $T=20, B=200, M=1\,000, s=n/2, \epsilon=\{ \text{top } 10\%\}$ as the default parameters for our ABC model. 
 		
The last part of the sensitivity analysis we want to investigate the differences between  pre-determined data splitting and internal data splitting. 
Customarily \citep{berger2004training}, the subsample size $s$  is chosen as the minimal number of samples needed to convert an improper prior into a proper one. Our situation, however, is different in at least three aspects: (a)  we are  converting a proper uninformative prior into an informative one, (b)  our model is entirely non-parametric and (c) we aim to enhance ABC acceptance rate rather than using non-informative priors for model selection with Bayes factors.  As pointed out in \citet{berger2004training}, defining any optimal training sample is very challenging and one needs to exercise statistical judgment  to select from among various strategies. While \citet{berger2004training} argue that: ``Judgments involved in choosing good training samples will typically be much less than the judgments needed to implement an actual subjective Bayesian analysis", we argue that entertaining some reasonable form of the data splitting (even if not optimal) will provide better results than naive ABC strategy in our context. 
 The following simulated example shows that the variable selection performance with internal splitting is at least as good as with pre-determined splitting. We still use the Friedman's dataset with $n=500$, $p=100$ and $p=1000$, $\sigma=5$ and autocorrelation $0.9$. The ABC settings are $T=20, \theta=0.5, s=0.5n,\epsilon=$ top 10\%. The  inclusion probabilities are averaged over $10$ datasets and plotted in Figure \ref{fig:fsrs_comp}.
\begin{figure}[!ht]
\centering
\includegraphics[width=0.8\textwidth]{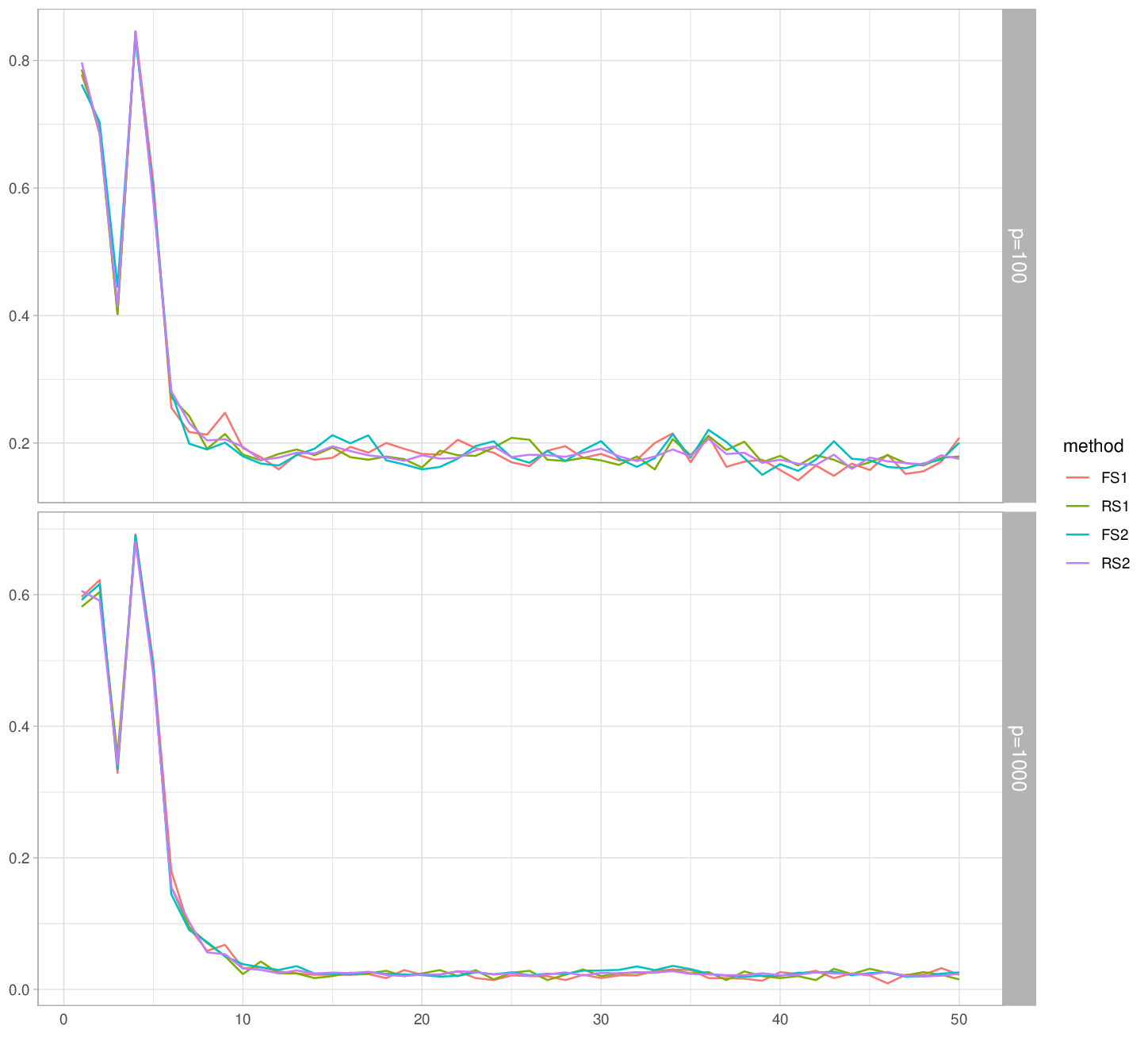}
\caption{Comparison of Inclusion Probabilities of Pre-determined Splitting (FS) and Internal Splitting (RS). The inclusion probabilities are averaged over $10$ independent Friedman's datasets ($n=500, \sigma=5$, autocorrelation = $0.9$). FS1/RS1 are built with $M=1\,000$, and FS2/RS2 are built with $M=10\,000$ }\label{fig:fsrs_comp}
\end{figure}
From  Figure \ref{fig:fsrs_comp}, we can see that the differences in inclusion probabilities of pre-determined splitting and internal splitting are small. This could be explained by the fact the data have been generated with Gaussian noise without any outliers which could potentially affect quality of splits. 

Combining our findings from all of the sensitivity analyses above, we recommend the following default settings for the  parameters: $s=0.5n, T=20, \text{burnin}=200, \epsilon= \text{top }10\%$, $M=1,000$ and internal data splitting.

\section{Full HIV Data Analysis}\label{supplement:hiv}
In this section, we  provide a summary of our results on the entire dataset from \citet{barber_controlling_2015}. The summary statistics of the data are reported in Table \ref{tab:HIV_stats}.
Comparisons are made between ABC Bayesian Forests, BART, DART and Random Forests. BART and DART are run with $50$ trees for $20\,000$ MCMC iterations  (taking the first $10\,000$ as a burn-in).  Random Forests are implemented with the default number of  $500$ trees. 

\begin{table}
		\caption{\label{tab:HIV_stats} Basic summary statistics of the HIV dataset. DS refers to the decrease in susceptibility of the drug once the mutations has occurred.}
	\centering
\begin{tabular}{lllll}
	HIV Virus Life Cycle   & Drug Class  & Mean Log DS & Number of Features & Number of Samples \\ \hline
	\multirow{7}{*}{PI}    & APV         & 0.75     & 201                & 767               \\
						   & ATV         & 1.59     & 147                & 328               \\
						   & IDV         & 1.33     & 206                & 825               \\
						   & LPV         & 1.74     & 184                & 515               \\
						   & NFV         & 2.00     & 207                & 842               \\
						   & RTV        & 1.72     & 205                & 793               \\
						   & SQV        & 1.22     & 206                & 824               \\ \hline
	\multirow{6}{*}{NRTI}  & X3TC       & 3.10     & 283                & 629               \\
						   & ABC        & 1.14     & 283                & 623               \\
						   & AZT        & 1.55     & 283                & 626               \\
						   & D4T        & 0.43     & 281                & 625               \\
						   & DDI        & 0.43     & 283                & 628               \\
						   & TDF        & 0.22     & 215                & 351               \\ \hline
	\multirow{3}{*}{NNRTI} & DLV        & 0.98     & 305                & 730               \\
						   & EFV        & 1.08     & 312                & 732               \\
						   & NVP        & 1.80     & 313                & 744               \\ \hline
	\end{tabular}

\end{table}

To summarize the results, we adopted $2$ cutoff selection criteria. 
The first selection  threshold is adaptive and is chosen as the maximum importance measure of a non-experimentally validated mutation.
This cutoff point corresponds to zero false discoveries. Next, we use   an automatic criterion for each method.
 For ABC Bayesian Forests (run with $T=20$ trees and $M=200$ burnin iterations, $10\,000$ ABC samples and top $100, 500$ and $1\,000$ samples with the smallest discrepancy), we adopted the median probability model with the $0.5$ cutoff. For DART and BART, we choose variables which have been split on at least once on average.  For Random forest, the RFE approach (as described in \cite{linero2018bayesian})
 is used to find the variables. Similarly as in \cite{barber_controlling_2015}, we report the number true positions discovered and the number of false positions. To further study the separation power, we also report AUC of each method.  The results are shown in Table \ref{tab:result_PI}, \ref{tab:result_NRTI} and \ref{tab:result_NNRTI}.

 Across all the drugs, we notice that ABC Bayesian Forest has a strong separation power, as  is indicated by the performance of AUC scores. Random Forests with RFE tends to overfit  by selecting too many mutations. BART and DART are performing well in this case but ABC is seen to have better AUC while  being overall more conservative.


\begin{table}
\caption{\label{tab:result_PI} The table summarizes results for a drug class PI. There are three performance criteria. For the adaptive cutoff, we report the number of true discoveries since the number of false discoveries is 0. For the automatic cutoff, we report both the number of false and true discoveries. Finally, we report a cutoff-free metric AUC. The best performance in each row is in bold font.}
	\centering
		
\scalebox{0.8}{\begin{tabular}{llllllll}
		\hline
		\multicolumn{8}{c}{APV}                                                                                                                                 \\ \hline
		\multicolumn{2}{l}{\multirow{2}{*}{Methods}} & \multicolumn{3}{c}{ABC} & \multirow{2}{*}{BART} & \multirow{2}{*}{DART} & \multirow{2}{*}{Random Forest} \\
		\multicolumn{2}{l}{}   & 100    & 500    & 1000  &                       &                       &                                \\ \hline
		Adaptive cut-off       & True Discoveries    & 17     & \textbf{19}     & \textbf{19}    & 14                    & 15                    & 15                             \\
		Automatic cut-off      & False Discoveries   & \textbf{0}      & \textbf{0}      & \textbf{0}     & \textbf{0}                     & 7                     & 31                             \\
		& True Discoveries    & 13     & 11     & 11    & 14                    & 20                    & \textbf{34 }                            \\
		AUC                    &                     & 0.69   & 0.75   & \textbf{0.77}  & 0.65                  & 0.65                  & 0.61                           \\ \hline
		\multicolumn{8}{c}{ATV}                                                                                                                                 \\ \hline
		Adaptive cut-off       & True Discoveries    & \textbf{23}     & \textbf{23}     & \textbf{23}    & 19                    & 19                    & 13                             \\
		Automatic cut-off      & False Discoveries   & \textbf{0}      & \textbf{0}      & \textbf{0}     & \textbf{0}                     & 3                     & \textbf{0}                              \\
		& True Discoveries    & 16     & 15     & 15    & 18                    & \textbf{21}                    & 19                             \\
		AUC                    &                     & 0.77   & 0.78   & \textbf{0.79}  & 0.62                  & 0.65                  & 0.71                           \\ \hline
		\multicolumn{8}{c}{IDV}                                                                                                                                 \\ \hline
		Adaptive cut-off       & True Discoveries    & 8      & 9      & 9     & 6                     & 11                    & \textbf{13 }                            \\
		Automatic cut-off      & False Discoveries   & \textbf{1}      & \textbf{1}      & \textbf{1}     & 2                     & 5                     & 32                             \\
		& True Discoveries    & 14     & 14     & 14    & 18                    & 18                    & \textbf{34}                             \\
		AUC                    &                     & 0.73   & \textbf{0.75}   & \textbf{0.75}  & 0.65                  & 0.63                  & 0.62                           \\ \hline
		\multicolumn{8}{c}{LPV}                                                                                                                                 \\ \hline
		Adaptive cut-off       & True Discoveries    & 14     & 14     & 14    & \textbf{15}                    & 13                    & 9                              \\
		Automatic cut-off      & False Discoveries   & \textbf{0}      & \textbf{0}      & \textbf{0}     & \textbf{0}                     & 7                     & 31                             \\
		& True Discoveries    & 13     & 13     & 13    & 14                    & 17                    & \textbf{34}                             \\
		AUC                    &                     & 0.72   & 0.74   & \textbf{0.75}  & 0.56                  & 0.57                  & 0.62                           \\ \hline
		\multicolumn{8}{c}{NFV}                                                                                                                                 \\ \hline
		Adaptive cut-off       & True Discoveries    & 8      & 10     & 10    & 11                    & \textbf{16}                    & 15                             \\
		Automatic cut-off      & False Discoveries   & \textbf{1}      & \textbf{1}      & \textbf{1}     & \textbf{1}                     & 5                     & 32                             \\
		& True Discoveries    & 15     & 15     & 14    & 17                    & 20                    & \textbf{34}                             \\
		AUC                    &                     & 0.73   & \textbf{0.74}   & \textbf{0.74}  & 0.65                  & 0.64                  & 0.65                           \\ \hline
		\multicolumn{8}{c}{RTV}                                                                                                                                 \\ \hline
		Adaptive cut-off       & True Discoveries    & 10     & 10     & 9     & \textbf{13 }                   & 11                    & 11                             \\
		Automatic cut-off      & False Discoveries   & 2      & \textbf{1}      &\textbf{ 1}     & 3                     & 4                     & 31                             \\
		& True Discoveries    & 13     & 11     & 11    & 14                    & 20                    & \textbf{34}                             \\
		AUC                    &                     & 0.72   & 0.74   & \textbf{0.75}  & 0.62                  & 0.60                  & 0.67                           \\ \hline
		\multicolumn{8}{c}{SQV}                                                                                                                                 \\ \hline
		Adaptive cut-off       & True Discoveries    & 15     & 15     & 15    & 3                     & \textbf{17}                    & 10                             \\
		Automatic cut-off      & False Discoveries   & \textbf{0 }     & \textbf{0}      & \textbf{0}     & 3                     & 6                     & 31                             \\
		& True Discoveries    & 15     & 15     & 14    & 16                    & 17                    & \textbf{34}                             \\
		AUC                    &                     & 0.74   & 0.77   & \textbf{0.78}  & 0.64                  & 0.62                  & 0.57                           \\ \hline
\end{tabular}}

\end{table}

\begin{table}
\caption{ \label{tab:result_NRTI} The table summarizes results for a drug class NRTI. There are three performance criteria. For the adaptive cutoff, we report the number of true discoveries since the number of false discoveries is 0. For the automatic cutoff, we report both the number of false and true discoveries. Finally, we report a cutoff-free metric AUC. The best performance in each row is in bold font.}

	\centering

\scalebox{0.8}{\begin{tabular}{llllllll}
		\hline
		\multicolumn{8}{c}{X3TC}                                                                                                                                \\ \hline
		\multicolumn{2}{l}{\multirow{2}{*}{Methods}} & \multicolumn{3}{c}{ABC} & \multirow{2}{*}{BART} & \multirow{2}{*}{DART} & \multirow{2}{*}{Random Forest} \\
		\multicolumn{2}{l}{}   & 100    & 500    & 1000  &                       &                       &                                \\ \hline
		Adaptive cut-off       & True Discoveries    & 6      & \textbf{9}      & \textbf{9}     & 4                     & 5                     & 6                              \\
		Automatic cut-off      & False Discoveries   & \textbf{0}      & \textbf{0}      & \textbf{0}     & 4                     & 3                     & 6                              \\
		& True Discoveries    & 6      & 5      & 5     & 7                     & 12                    & \textbf{15}                             \\
		AUC                    &                     & \textbf{0.70}   & \textbf{0.70}   & \textbf{0.70}  & 0.62                  & 0.64                  & 0.66                           \\ \hline
		\multicolumn{8}{c}{ABC}                                                                                                                                 \\ \hline
		Adaptive cut-off       & True Discoveries    & 8      & 8      & 7     & 7                     & 10                    & \textbf{12}                             \\
		Automatic cut-off      & False Discoveries   & 2      & \textbf{1}      & 1\textbf{}     & \textbf{1}                     & 7                     & 2                              \\
		& True Discoveries    & 10     & 10     & 10    & 11                    & 14                    & \textbf{16}                             \\
		AUC                    &                     & 0.74   & 0.73   & \textbf{0.76}  & 0.66                  & 0.71                  & 0.74                           \\ \hline
		\multicolumn{8}{c}{AZT}                                                                                                                                 \\ \hline
		Adaptive cut-off       & True Discoveries    & 7      & 7      & 7     & 3                     & 10                    & \textbf{13}                             \\
		Automatic cut-off      & False Discoveries   & 2      & \textbf{1}      & \textbf{1}     & 6                     & 8                     & 2                              \\
		& True Discoveries    & 12     & 11     & 11    & 14                    & \textbf{16}                    & 15                             \\
		AUC                    &                     & 0.71   & 0.72   & \textbf{0.73}  & 0.70                  & 0.69                  & 0.75                           \\ \hline
		\multicolumn{8}{c}{D4T}                                                                                                                                 \\ \hline
		Adaptive cut-off       & True Discoveries    & \textbf{9}      & 8      & \textbf{9}     & 5                     & 0                     & 8                              \\
		Automatic cut-off      & False Discoveries   & 2      & \textbf{1}      & \textbf{1}     & 3                     & 12                    & 80                             \\
		& True Discoveries    & 12     & 12     & 11    & 12                    & 14                    & \textbf{24}                             \\
		AUC                    &                     & \textbf{0.75}   & \textbf{0.75}   & \textbf{0.75}  & 0.70                  & 0.70                  & 0.73                           \\ \hline
		\multicolumn{8}{c}{DDI}                                                                                                                                 \\ \hline
		Adaptive cut-off       & True Discoveries    & 5      & 5      & 6     & 7                     & 3                     & \textbf{10}                             \\
		Automatic cut-off      & False Discoveries   & \textbf{1}      & \textbf{1}      & \textbf{1}     & 2                     & 11                    & 81                             \\
		& True Discoveries    & 8      & 7      & 7     & 8                     & 13                    & \textbf{24}   \\
		AUC                    &                     & 0.71   & 0.73   &\textbf{ 0.74}  & 0.68                  & 0.66                  & 0.72                           \\ \hline
		\multicolumn{8}{c}{TDF}                                                                                                                                 \\ \hline
		Adaptive cut-off       & True Discoveries    & 4      & \textbf{9}      & \textbf{9}     & 3                     & 7                     & 2                              \\
		Automatic cut-off      & False Discoveries   & 2      & \textbf{1}      & \textbf{1}     & 4                     & 11                    & 8                              \\
		& True Discoveries    & 9      & 9      & 9     & 10                    & 18                    & \textbf{15}                             \\
		AUC                    &                     & 0.69   & 0.72   & 0.72  & 0.72                  & \textbf{0.75}                  & 0.73                           \\ \hline
\end{tabular}}

\end{table}

\begin{table}
	\caption{\label{tab:result_NNRTI} The table summarizes results for a drug class NNRTI. There are three performance criteria. For the adaptive cutoff, we report the number of true discoveries since the number of false discoveries is 0. For the automatic cutoff, we report both the number of false and true discoveries. Finally, we report a cutoff-free metric AUC. The best performance in each row is in bold font.}
	
		\centering
\scalebox{0.8}{\begin{tabular}{llllllll}
		\hline
		\multicolumn{8}{c}{DLV}                                                                                                                                 \\ \hline
		\multicolumn{2}{l}{\multirow{2}{*}{Methods}} & \multicolumn{3}{c}{ABC} & \multirow{2}{*}{BART} & \multirow{2}{*}{DART} & \multirow{2}{*}{Random Forest} \\
		\multicolumn{2}{l}{}   & 100    & 500    & 1000  &                       &                       &                                \\ \hline
		Adaptive cut-off       & True Discoveries    & \textbf{4}      & \textbf{4}      & \textbf{4}     & 3                     & 3                     & 3                              \\
		Automatic cut-off      & False Discoveries   & \textbf{3}      & \textbf{3}      & \textbf{3}     &\textbf{ 3}                     & 8                     & 96                             \\
		& True Discoveries    & 7      & 7      & 7     & 9                     & 10                    & \textbf{14 }                            \\
		AUC                    &                     & 0.84   &\textbf{ 0.87}   & \textbf{0.87}  & 0.73                  & 0.70                  & 0.81                           \\ \hline
		\multicolumn{8}{c}{EFV}                                                                                                                                 \\ \hline
		Adaptive cut-off       & True Discoveries    & \textbf{5}      & \textbf{5}      & \textbf{5}     &\textbf{ 5}                     & 4                     & 4                              \\
		Automatic cut-off      & False Discoveries   & 5      & \textbf{4}      & \textbf{4}     & 5                     & 6                     & 9                              \\
		& True Discoveries    & 8      & 7      & 6     & 9                     & 9                     & \textbf{10}                             \\
		AUC                    &                     & 0.80   & 0.83   & \textbf{0.84}  & 0.74                  & 0.73                  & 0.78                           \\ \hline
		\multicolumn{8}{c}{NVP}                                                                                                                                 \\ \hline
		Adaptive cut-off       & True Discoveries    & 6      & 6      & 6     & 8                     & 6                     & \textbf{14}                             \\
		Automatic cut-off      & False Discoveries   & 3      & 3      & \textbf{2}     & \textbf{2}                     & 9                     & 97                             \\
		& True Discoveries    & 6      & 6      & 5     &\textbf{ 7}                     & 6                     & 5                              \\
		AUC                    &                     & 0.79   & 0.79   & 0.79  & 0.71                  & 0.66                  &\textbf{ 0.82 }                          \\ \hline
\end{tabular}}

\end{table}

\clearpage


\end{document}